\definecolor{mycolor}{cmyk}{0.80, 0.20, 0.25, 0}
\titleformat*{\section}{\Large\bfseries\sffamily}
\titleformat*{\subsection}{\large\bfseries\sffamily}
\titleformat*{\subsubsection}{\large\bfseries\sffamily}
\DeclareMathOperator*{\argmin}{argmin}
\renewcommand{\hat}{\widehat}
\renewcommand{\tilde}{\widetilde}
\renewcommand{\check}{\widecheck}
\renewcommand{\bar}{\overline}
\newcommand{\bbE}{\mathbb{E}}
\newcommand{\bbR}{\mathbb{R}}
\numberwithin{equation}{section}
\theoremstyle{definition}
\newtheorem{theorem}{Theorem}[section]
\newtheorem{assumption}{Assumption}[section]
\newtheorem{definition}{Definition}[section]
\newtheorem{lemma}{Lemma}[section]
\newtheorem{proposition}{Proposition}[section]
\newtheorem{remark}{Remark}[section]
\title{Functional Spatial Autoregressive Models}
\author{Tadao Hoshino\thanks{School of Political Science and Economics, Waseda University, 1-6-1 Nishi-waseda, Shinjuku-ku, Tokyo 169-8050, Japan. Email: \href{mailto:thoshino@waseda.jp}{thoshino@waseda.jp}.}}
\begin{document}
\maketitle

\begin{abstract}
    This study introduces a novel spatial autoregressive model in which the dependent variable is a function that may exhibit functional autocorrelation with the outcome functions of nearby units.
    This model can be characterized as a simultaneous integral equation system, which, in general, does not necessarily have a unique solution.
    For this issue, we provide a simple condition on the magnitude of the spatial interaction to ensure the uniqueness in data realization.
    For estimation, to account for the endogeneity caused by the spatial interaction, we propose a regularized two-stage least squares estimator based on a basis approximation for the functional parameter. 
    The asymptotic properties of the estimator including the consistency and asymptotic normality are investigated under certain conditions.
    Additionally, we propose a simple Wald-type test for detecting the presence of spatial effects.
    As an empirical illustration, we apply the proposed model and method to analyze age distributions in Japanese cities.
\end{abstract}

\newpage

\section{Introduction}

Spatial interdependence among units is an essential element in spatial data analysis.
To incorporate spatial interactions into econometric analysis, researchers have extensively utilized the Spatial Auto-Regressive (SAR) model:
\begin{align}\label{eq:SAR}
	y_i = \alpha_0 \sum_{j=1}^n w_{i,j} y_j + x_i^\top\beta_0 + \varepsilon_i,
\end{align}
where $y_i$ denotes a scalar outcome, $w_{i,j}$ denotes a known \textit{spatial weight} between $i$ and $j$, $x_i$ denotes a vector of explanatory variables, and $\varepsilon_i$ denotes an error term. 
The \textit{spatial lag} term $\sum_{j=1}^n w_{i,j} y_j$ captures the spatial trend of the outcome variable in the neighborhood of $i$, and the scalar parameter $\alpha_0$ measures its impact.
The usefulness of SAR modelling \eqref{eq:SAR} has been demonstrated in various empirical topics, including regional economics, local politics, real estate, crimes, etc.
In addition, if we define the weight term $w_{i,j}$ based on social distance or friendship connections instead of geographic distance, then the SAR models can be utilized to analyze social network data, and their applicability is vast.

To further broaden the applications of SAR modelling, this study aims to extend \eqref{eq:SAR} to a functional SAR model where the dependent variable is a function defined on a common closed interval:
\begin{align}\label{eq:model}
    q_i(s) = \sum_{j=1}^n w_{i,j} \int_0^1  q_j(t) \alpha_0(t,s) \text{d}t + x_i^\top \beta_0(s) + \varepsilon_i(s), \;\; \text{for} \;\; s \in [0,1],
\end{align}
where $q_i: [0,1] \to \mathbb{R}$ denotes the outcome function of interest.
Restricting the support to $[0,1]$ is a normalization.
In particular, for empirical relevance, this study primarily focuses on the case in which $q_i$ is the quantile function for a scalar dependent variable of interest.
Regression models involving functional variables have been widely studied in the literature of functional data analysis (FDA) for several decades (e.g., \citealp{ramsay2005functional}).
Our model is essentially different from the existing ones in that we explicitly consider the simultaneous spatial interactions of the outcome functions.

As a motivating example, suppose we intend to investigate the impact of a regional childcare subsidy program in a given city on the age distribution of the city.
The policy is likely to attract households with young children from other regions to benefit from the subsidy.
Additionally, if childcare facilities and schools need to be newly constructed, inflows of other age groups can also be anticipated as workers.
To obtain a comprehensive picture of the shift in the age distribution owing to the subsidy program in its entirety, it would be natural to consider a regression model in which the dependent variable represents the age distribution of each city, such as the quantile function.
Meanwhile, when the size of the young population in a given city is in an increasing trend (no matter the cause), which serves as a driver of economic growth of the city, this might also lead to an influx of working-age population into the surrounding regions owing to the spatial spillover of economic activities.
The proposed functional SAR model \eqref{eq:model} is able to account for such interdependency between the outcome functions of nearby spatial units.

In the literature, we are not the first to consider an SAR-type modelling in the functional regression context.
\cite{zhu2022network} proposed a social network model similar to ours in a time-series setting, where the response variable is a function of time.
They assumed that only concurrent interactions exist at each moment such that the past and future outcomes of others do not affect the present outcome.
Consequently, when fixed at each time point, their model can be reduced to the standard SAR model in \eqref{eq:SAR}.
In this regard, our model may be considered to be a generalization of theirs such that $\alpha_0(t,s) \neq 0$ is allowed for $t \neq s$ in general.

Another related modelling approach to ours is the SAR quantile regression (e.g., \citealp{su2011instrumental, malikov2019under, ando2023spatial}).
When $q_i$ represents a quantile function, our model and theirs are conceptually similar in that both approaches can examine the distributional effects of explanatory variables on the outcome and the spatial interaction of outcomes in a unified framework.
However, a fundamental distinction lies in that we consider a model in which each unit has its own unique quantile function as the dependent variable.
Consequently, we can explicitly allow for each specific quantile value of an outcome to interact with other quantiles of others' outcomes.
For instance, our model can investigate the impacts of median outcome of neighborhoods on a specific (say) 10 percentile value of own outcome.
In the time-series context, \cite{dong2024functional} consider the same type of interaction structure as above.

Notice that our model \eqref{eq:model} is characterized as a simultaneous integral equation system, and to the best of our knowledge, this type of modelling has not been investigated in the econometrics literature. 
To construct a consistent estimator for our model, the model space should be restricted such that the realized $q_i$'s are uniquely (in some sense) associated with the true parameters.
We show that to establish this uniqueness property, as in the standard SAR model (cf. \citealp{kelejian2010specification}), the spatial effects $\alpha_0$ must be bounded within a certain range.
In particular, we demonstrate that the tightness of the bound required for $\alpha_0$ depends on the smoothness of the outcome function.

To estimate the model parameters, we need to address the endogeneity issue arising from the simultaneous interaction among the outcome functions.
Thus, we propose a regularized two-stage least squares (2SLS) estimator that is based on a series approximation of $\alpha_0(\cdot, s)$ at each evaluation point $s$.
Under the availability of a sufficient number of instrumental variables (IVs) and regularity conditions, we prove that both the estimator for $\beta_0$ and that for $\alpha_0(t,s)$ are consistent at certain convergence rates and asymptotically normally distributed.
Additionally, we develop a Wald-type test for assessing the presence of any spatial effects at each $s$.
We show that the proposed test statistic asymptotically distributes as the standard normal after appropriate normalization.
Furthermore, we discuss performing the estimation when the outcome functions are not fully observable on the entire interval $[0,1]$, but are only discretely observed, which is typical in most empirical situations.
Our proposed estimator relies on a simple interpolation method, and we derive a set of conditions under which the estimator can achieve the same asymptotic properties as the infeasible counterpart.

As an empirical illustration, we investigate the determinants of age distribution in Japanese cities.
Since many Japanese cities are currently rapidly aging, which has emerged as one of the central social problems in the country, understanding the mechanisms underlying the age structure of cities is crucial.
Using recent government survey data, including the Census, we apply our estimation and testing method to 1883 Japanese cities.
Here, the outcome function $q_i$ represents the quantile function of the age distribution in city $i$, and covariates $x_i$ include variables such as annual commercial sales, unemployment rate, number of childcare facilities, and others.
Our results suggest that spatial interaction effects are extremely weak at quantiles close to the boundary points 0 or 1.
This may not be surprising as all individuals are born at age 0 and have a life expectancy of approximately 100 years at maximum, resulting in little regional heterogeneity.
In contrast, strong spatial effects are observed when both $t$ and $s$ are at approximately the ages of young working population, possibly indicating that economic activities and their spillovers are the main factors in shaping the spatial trend of age structure.

The remainder of this paper is organized as follows:
In Section \ref{sec:model}, we formally introduce the model proposed in this study and discuss the condition under which it is well defined with a unique solution.
In addition, focusing on the cases where the outcome function is a quantile function, we discuss the motivations and interpretation of such a modelling approach.
In Section \ref{sec:estimator}, we describe our 2SLS method for estimating $\beta_0$ and $\alpha_0$.
Thereafter, we study the asymptotic properties of the proposed estimator under a set of assumptions.
In this section, we also propose a test statistic for testing the null hypothesis that $\alpha_0(t, s) = 0$ for $t \in \mathcal{I}$, and its asymptotic distribution is derived.
In Section \ref{sec:MC}, we present the results of Monte Carlo experiments to evaluate the finite sample performance of the proposed estimator and test.
Section \ref{sec:empir} presents our empirical analysis on the age distribution of Japanese cities, and Section \ref{sec:conclude} concludes the paper.

\paragraph{Notation}

For a natural number $n$, $I_n$ denotes an $n \times n$ identity matrix.
For a function $h$ defined on $[0,1]$, the $L^p$ norm of $h$ is written as $||h||_{L^p} \coloneqq (\int_0^1 |h(s)|^p \text{d}s)^{1/p}$, and $L^p(0,1)$ denotes the set of $h$'s such that $||h||_{L^p} < \infty$.
For a random variable $x$, the $L^p$ norm of $x$ is written as $||x||_p \coloneqq (\bbE|x|^p)^{1/p}$.
For a matrix $A$, $|| A ||$ and $||A||_\infty$ denote the Frobenius norm and the maximum absolute row sum of $A$, respectively.
If $A$ is a square matrix, we use $\rho_{\max} (A)$ and $\rho_{\min} (A)$ to denote its largest and smallest eigenvalues, respectively. 
In addition, $A^{-}$ is a symmetric generalized inverse of $A$.
We write $a \lesssim b$ and $a \lesssim_p b$ if $a = O(b)$ and $a = O_P(b)$, respectively.
Finally, we write $a \sim b$ when $a \lesssim b$ and $b \lesssim a$.

\section{Functional SAR Models}\label{sec:model}

\subsection{Model Setup and Completeness}

Suppose that we have data of size $n$: $\{(q_i, x_i, w_{i,1}, \ldots, w_{i,n})\}_{i = 1}^n$, where $q_i$ denotes a random outcome function of interest with the common support $[0,1]$, $x_i = (x_{i,1}, \ldots, x_{i,d_x})^\top \in \mathbb{R}^{d_x}$ denotes a vector of covariates including a constant term, and $w_{i,j} \in \mathbb{R}$ denotes the $(i,j)$-th element of an $n \times n$ pre-specified spatial weight matrix $W_n = (w_{i,j})_{i,j = 1}^n$.
The value of each $w_{i,j}$ is determined non-randomly.
As is the convention, we set $w_{i,i} = 0$ for all $i$ for normalization.
Note that the spatial configurations of the units generally change with the sample size.
Thus, the variables generally form triangular arrays, and model parameters depend on $n$ through spatial interactions.
However, when there is no confusion, the dependence on $n$ is suppressed for notational convenience.

As shown in \eqref{eq:model}, our working model is
\begin{align}
    q_i(s) = \int_0^1 \bar q_i(t) \alpha_0(t,s) \text{d}t + x_i^\top \beta_0(s) + \varepsilon_i(s), \;\; \text{for} \;\; s \in [0,1],
\end{align}
where $\bar q_i$ denotes the spatial lag of the outcome function: $\bar q_i \coloneqq \sum_{j = 1}^n w_{i,j} q_j$.
The unknown parameters to be estimated are $\alpha_0$ and $\beta_0 = (\beta_{0 1}, \ldots, \beta_{0 d_x})^\top$.
For instance, in our empirical analysis, $q_i(s)$ denotes the $s$-th quantile of the age distribution in city $i$, and $\alpha_0(t,s)$ captures the impacts from the $t$-th quantile ages of neighborhood cities to the $s$-th quantile age of own city.
For other examples, $q_i(s)$ could be the $s$-th quantile of the income distribution in city $i$, $s$-th quantile of the daily activity energy expenditure of person $i$, number of available bicycles at the bicycle-sharing station $i$ at time $s$, and so forth.
Hereinafter, we assume that $q_i \in L^p(0,1)$ for some $2 \le p < \infty$ and that $\alpha_0 \in C[0,1]^2$,  where $C[0,1]^2$ denotes the set of continuous functions on $[0,1]^2$.

\bigskip

Before turning to the estimation of $\alpha_0$ and $\beta_0$, we discuss the \textit{completeness} of our model, that is, whether model \eqref{eq:model} can be characterized by a unique solution $(q_1, \ldots, q_n)$.
As our model comprises a system of $n$ functional equations, the existence and uniqueness of the solution are non-trivial problems.
If the system does not have or has multiple solutions,
consistently estimating the model parameters without some ad hoc assumptions is generally impossible.

Let $Q(s) = (q_1(s), \ldots, q_n(s))^\top$, $X = (x_1, \ldots, x_n)^\top$, and $\mathcal{E}(s) = (\varepsilon_1(s), \ldots, \varepsilon_n(s))^\top$.
Then, we can re-write \eqref{eq:model} in matrix form as
\begin{align}
    Q(s) = W_n \int_0^1 Q(t) \alpha_0(t,s) \text{d}t + X \beta_0(s) + \mathcal{E}(s).
\end{align}
This expression suggests that our model is seen as a system of \textit{Fredholm integral equations of the second kind} with kernel $\alpha_0(t,s)$.
Defining $\bar \alpha_0 \coloneqq \max_{(t,s) \in [0,1]^2} |\alpha_0(t,s)|$, whose existence is ensured under the continuity of $\alpha_0$, assume the following:
\begin{assumption}\label{as:inverse}
    $\bar \alpha_0 \lesssim 1$ and $||W_n||_\infty \lesssim 1$ such that $\bar \alpha_0 ||W_n||_\infty < 1$.
\end{assumption}

Let us denote $\mathcal{H}_{n,p} \coloneqq \{H = (h_1, \ldots, h_n) : h_i \in L^p(0,1) \; \text{for all} \; i\}$, and define a linear operator $\mathcal{T}$ as
\begin{align}
    (\mathcal{T} H)(s) \coloneqq W_n \int_0^1 H(t) \alpha_0(t, s) \text{d}t, \;\; \text{for} \;\; H \in \mathcal{H}_{n,p},
\end{align}
whose range is $\mathcal{H}_{n,p}$ under Assumption \ref{as:inverse}.
Thus, we can write $Q = \mathcal{T} Q + X\beta_0 + \mathcal{E}$.
Then, denoting $\text{Id}$ to be the identity operator, if the inverse operator $(\text{Id} - \mathcal{T})^{-1}$ exists, the solution $Q$ of the system can be uniquely determined (as an element of $\mathcal{H}_{n,p}$) as $Q = (\text{Id} - \mathcal{T})^{-1}[X \beta_0 + \mathcal{E}]$.

The next proposition states that Assumption \ref{as:inverse} is sufficient for the existence of $(\text{Id} - \mathcal{T})^{-1}$ and uniqueness of $Q$.

\begin{proposition}\label{prop:completeness}
Suppose that Assumption \ref{as:inverse} holds.
Then, $(\text{Id} - \mathcal{T})^{-1}$ exists, and $Q$ is the only solution of \eqref{eq:model} in the Banach space $(\mathcal{H}_{n,p}, ||\cdot||_{\infty, p})$, where $||H||_{\infty, p} \coloneqq \max_{1 \le i \le n} ||h_i||_{L^p}$.
\end{proposition}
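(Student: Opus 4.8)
The plan is to treat this as a standard Banach-space fixed-point problem: I would show that under Assumption~\ref{as:inverse} the operator $\mathcal{T}$ is a contraction on $(\mathcal{H}_{n,p}, ||\cdot||_{\infty, p})$ with operator norm strictly below one, and then invoke the Neumann series to invert $\text{Id} - \mathcal{T}$. Before that I would briefly record that $(\mathcal{H}_{n,p}, ||\cdot||_{\infty, p})$ is genuinely a Banach space: each $L^p(0,1)$ is complete, and $||\cdot||_{\infty, p}$ is the maximum norm on a finite ($n$-fold) product of such spaces, so completeness is inherited and the series argument below is justified.

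The heart of the argument is the operator-norm bound, and this is the step I would carry out most carefully. Fix $H = (h_1, \ldots, h_n) \in \mathcal{H}_{n,p}$. For the $i$-th component, write $(\mathcal{T} H)_i(s) = \sum_{j=1}^n w_{i,j} \int_0^1 h_j(t) \alpha_0(t,s)\,\text{d}t$ and bound it pointwise in $s$:
\begin{align}
    |(\mathcal{T} H)_i(s)| \le \sum_{j=1}^n |w_{i,j}| \int_0^1 |h_j(t)|\, |\alpha_0(t,s)|\,\text{d}t \le \bar\alpha_0 \sum_{j=1}^n |w_{i,j}| \, ||h_j||_{L^1}.
\end{align}
Since the domain $[0,1]$ has unit Lebesgue measure, H\"older's inequality gives $||h_j||_{L^1} \le ||h_j||_{L^p} \le ||H||_{\infty, p}$, while $\sum_{j} |w_{i,j}| \le ||W_n||_\infty$ by the definition of the maximum absolute row sum. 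The resulting bound $|(\mathcal{T}H)_i(s)| \le \bar\alpha_0 ||W_n||_\infty \, ||H||_{\infty, p}$ is uniform in $s$, so it survives integration over $s$: taking the $L^p$ norm in $s$ and then the maximum over $i$ yields $||\mathcal{T} H||_{\infty, p} \le \bar\alpha_0 ||W_n||_\infty \, ||H||_{\infty, p}$. Hence the operator norm of $\mathcal{T}$ is at most $\bar\alpha_0 ||W_n||_\infty$, which is strictly less than one by Assumption~\ref{as:inverse}.

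With the contraction bound in hand, the remaining steps are routine. Because $||\mathcal{T}|| < 1$ on a Banach space, the Neumann series $\sum_{k=0}^\infty \mathcal{T}^k$ converges in operator norm to a bounded inverse $(\text{Id} - \mathcal{T})^{-1}$, establishing existence; applying it to $X\beta_0 + \mathcal{E} \in \mathcal{H}_{n,p}$ produces the solution $Q = (\text{Id} - \mathcal{T})^{-1}[X\beta_0 + \mathcal{E}]$. For uniqueness, if $Q_1$ and $Q_2$ both solve the system then $Q_1 - Q_2 = \mathcal{T}(Q_1 - Q_2)$, so $||Q_1 - Q_2||_{\infty, p} \le \bar\alpha_0 ||W_n||_\infty \, ||Q_1 - Q_2||_{\infty, p}$, which forces $||Q_1 - Q_2||_{\infty, p} = 0$. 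The one genuine subtlety, and the step most likely to trip one up, is the interplay of the norm with the two distinct ``directions'' of the problem: it is precisely because $||\cdot||_{\infty, p}$ pairs the uniform ($\infty$) treatment across the $n$ units with the $L^p$ treatment in the functional argument that the row-sum control on $W_n$ and the H\"older passage from $L^1$ to $L^p$ combine cleanly into the single factor $\bar\alpha_0 ||W_n||_\infty$; a norm that coupled the spatial and functional components differently would not decouple them so neatly, and the sharp threshold in Assumption~\ref{as:inverse} would be harder to read off.
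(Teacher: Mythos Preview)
Your proof is correct and follows essentially the same route as the paper: bound the operator norm of $\mathcal{T}$ on $(\mathcal{H}_{n,p},\|\cdot\|_{\infty,p})$ by $\bar\alpha_0\|W_n\|_\infty<1$ and then invoke the Neumann series. The only cosmetic difference is that the paper reaches the same inequality via Minkowski's and Jensen's inequalities applied directly to $\|\{\mathcal{T}H\}_i\|_{L^p}$, whereas you first bound $|(\mathcal{T}H)_i(s)|$ pointwise and use H\"older to pass from $L^1$ to $L^p$; both arguments yield the identical contraction constant.
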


The proof is straightforward.
Under Assumption \ref{as:inverse}, we have
\begin{align}\label{eq:Lpinequality}
    \begin{split}
    \left\| \{\mathcal{T} H\}_i \right\|_{L^p}
     = \left\| \sum_{j = 1}^n w_{i,j} \int_0^1 h_j(t) \alpha_0(t, \cdot) \text{d}t \right\|_{L^p}
    & \le \sum_{j = 1}^n |w_{i,j}| \left( \int_0^1 \left| \int_0^1 h_j(t) \alpha_0(t, s) \text{d}t\right|^p \text{d}s \right)^{1/p} \\
    & \le \sum_{j = 1}^n |w_{i,j}| \left( \int_0^1 \int_0^1 \left| h_j(t) \right|^p \left|\alpha_0(t, s) \right|^p  \text{d}t \text{d}s \right)^{1/p} \\
    & \le \bar \alpha_0 ||W_n||_\infty \max_{1 \le j \le n}||h_j||_{L^p} < ||H||_{\infty, p} < \infty
    \end{split}
\end{align}
for any $H \in \mathcal{H}_{n,p}$ by Minkowski's and Jensen's inequalities.
This implies that $\mathcal{T}H \in \mathcal{H}_{n,p}$.
As is well known, if the operator norm of $\mathcal{T}$ is smaller than one, $(\text{Id} - \mathcal{T})^{-1}$ exists, and we have the Neumann series expansion $(\text{Id} - \mathcal{T})^{-1} = \sum_{\ell = 0}^\infty \mathcal{T}^\ell$ converging in the operator norm (e.g., Theorem 2.14, \cite{Kress2014linear}).
It is immediate from \eqref{eq:Lpinequality} that $\left\| \mathcal{T} H  \right\|_{\infty, p} < 1$ follows for any $H$ such that $||H||_{\infty, p} = 1$, which yields the desired result.

\bigskip

When the spatial weight matrix is row-normalized such that $||W_n||_\infty = 1$, as is often the case in empirical applications, Assumption \ref{as:inverse} can be reduced to $\bar \alpha_0  < 1$, which somewhat resembles the solvability condition $|\alpha_0| < 1$ for the standard linear SAR model \eqref{eq:SAR}.

\begin{remark}[Alternative condition]
    If one imposes a stronger assumption on the space of the input functions, the requirement for the kernel can be relaxed.
    For example, for all $i$, suppose that $q_i$ belongs to $C[0,1]$. 
    Then, by the extreme value theorem, $q_i$'s are bounded.
    Letting $\mathcal{H}_{n, \infty} \coloneqq \{H = (h_1, \ldots, h_n) : h_i \in C[0,1] \; \text{for all} \; i\}$ and $||H||_{\infty, \infty} \coloneqq \max_{1 \le i \le n} \max_{s \in [0,1]}|h_i(s)|$, we can easily show that $Q$ is the only solution in the Banach space $(\mathcal{H}_{n, \infty}, ||\cdot||_{\infty, \infty})$ if $||W_n||_\infty \max_{s \in [0,1]}  \int_0^1 |\alpha_0(t,s)| \text{d}t < 1$ is satisfied.\footnote{
        Clearly, for any given $H \in \mathcal{H}_{n,\infty}$ such that $||H||_{\infty, \infty} = 1$, we have
        \begin{align*}
            \left| \{(\mathcal{T} H)(s)\}_i \right|
            \le \sum_{j = 1}^n |w_{i,j}| \int_0^1 |h_j(t)| \cdot |\alpha_0(t,s)| \text{d}t 
            \le ||W_n||_\infty \max_{s \in [0,1]}  \int_0^1 |\alpha_0(t,s)| \text{d}t.
        \end{align*}
    }
    If the spatial weight matrix is row-normalized, then the condition can be further simplified to $\max_{s \in [0,1]} \int_0^1 |\alpha_0(t,s)| \text{d}t  < 1$, which is a familiar requirement for the solvability of the Fredholm integral equation of the second kind (e.g., Corollary 2.16, \cite{Kress2014linear}).
    It is known that compactly supported continuous functions are dense in $L^p$ ($1 \le p < \infty$). 
    Thus, in practice, assuming that all $q_i$'s are continuous is almost harmless, and hence the violation of Assumption \ref{as:inverse} should be allowed to some extent. 
\end{remark}

The Neumann series expansion implies that $Q$ can be expressed as
$Q = X\beta_0 +  \mathcal{T}X\beta_0 + \mathcal{T}^2 X\beta_0 + \cdots + \mathcal{E} + \mathcal{T}\mathcal{E} + \mathcal{T}^2 \mathcal{E} + \cdots$, that is,
\begin{align}
    Q(\cdot) 
    &= X\beta_0(\cdot) + W_n X \int_0^1 \beta_0(t) \alpha_0(t, \cdot) \text{d}t + W_n W_n X \int_0^1  \int_0^1 \beta_0(t_1) \alpha_0(t_1, t_2) \alpha_0(t_2, \cdot) \text{d}t_1 \text{d}t_2 + \cdots   
\end{align}
Hence, the marginal effect of increasing $x_{i,j}$ on $Q(\cdot)$ is obtained by
\begin{align}
    \frac{\partial Q(\cdot)}{\partial x_{i,j}}
    & = \bm{e}_i \beta_{0j}(\cdot) + W_n \bm{e}_i \int_0^1  \beta_{0j}(t) \alpha_0(t,\cdot) \text{d}t + W_n W_n \bm{e}_i \int_0^1 \int_0^1  \beta_{0j}(t_1) \alpha_0(t_1,t_2) \alpha_0(t_2, \cdot) \text{d}t_1 \text{d}t_2 + \cdots,
\end{align}
where $\bm{e}_i$ denotes the $i$-th column of $I_n$.
This clearly shows that a change in $i$'s covariate affects not only the outcome of $i$ but also those of other units through the spatial interaction - the so-called \textit{spatial multiplier} effect.

\subsection{Leading Example: A Distributional SAR Model}

One of the situations in which model \eqref{eq:model} can be most nicely applied empirically would be when the outcome function $q_i$ represents the quantile function for the cumulative distribution function (CDF) of a variable of interest.
In our empirical analysis, we study the determinants of the  \textit{population pyramids} of Japanese cities by employing the age quantile function of city $i$ as $q_i$.

Suppose that for each $i$ we can observe a random CDF $F_i$ for an outcome variable $y \in \mathcal{Y}_i \subseteq \mathbb{R}$ of interest.
The quantile function of $y$ for $i$ is defined as $q_i(s) \coloneqq \inf\{y \in \mathcal{Y}_i : s \le F_i(y) \}$.
In the FDA literature, models where the response variable represents a probability distribution have garnered significant attention, for example, \cite{petersen2016functional}, \cite{han2020additive}, \cite{yang2020quantile}, \cite{yang2020random}, \cite{ghodrati2022distribution}, \cite{petersen2021wasserstein}, \cite{chen2023wasserstein}.
For an excellent review on this topic, refer to \cite{petersen2022modeling}.
A common view in these studies is that performing a regression analysis directly in the space of CDFs (or densities) is often problematic.
Hence, we should consider imposing a regression model on the quantile function (rather than on the CDF per se), as in \cite{yang2020quantile} and \cite{yang2020random}, enabling us to enjoy several analytically and interpretationally preferable properties as mentioned below.

First, quantile functions can be easily computed without considering the range boundaries, unlike CDFs.
Second, the domains of CDFs are typically heterogeneous across individuals, whereas that of quantile functions is always the fixed interval $[0,1]$.
Third, the least-squares regression of the quantile function can be nicely interpreted as a \textit{Wasserstein distance} minimization problem.

More specifically for the third point, denoting $F_i^{\alpha,\beta}$ to be the CDF induced from the quantile function $q_i^{\alpha, \beta}(s) \coloneqq \int_0^1 \bar q_i(t) \alpha(t,s) \text{d}t + x_i^\top \beta(s)$, the squared $2$-Wasserstein distance between $F_i$ and $F_i^{\alpha, \beta}$ is obtained as\footnote{
    Formally, the Wasserstein distance is a distance between two probability measures.
    We abuse the notation using CDFs in its arguments for ease of explanation.
    For more precise discussions on the properties of Wasserstein distance, see \cite{panaretos2020invitation}, for instance.
}
\begin{align}
    \mathcal{W}_2^2(F_i, F_i^{\alpha,\beta}) = \int_0^1 \left( q_i(s) - q_i^{\alpha,\beta}(s) \right)^2\text{d}s.
\end{align}
Thus, minimizing the mean squared Wasserstein distance with respect to $(\alpha, \beta)$: $\min_{\alpha, \beta} n^{-1} \sum_{i = 1}^n \mathcal{W}_2^2(F_i, F_i^{\alpha,\beta})$ is equivalent to performing a functional least squares regression based on model \eqref{eq:model}.
Note, however, that the resulting least squares estimator does not produce a consistent estimate of $(\alpha_0, \beta_0)$ because of the endogeneity of $\bar q_i$.
To circumvent the endogeneity issue, we introduce a penalized 2SLS method in the next section.

It is also worth noting that the spatially lagged quantile $\bar q_i$ corresponds to the quantile function of the spatially weighted \textit{Fr\'{e}chet mean} $\bar F_i$ of $\{F_1, \ldots, F_n\}$ at the location of $i$: $\bar F_i \coloneqq \arg\min_F \sum_{j=1}^n w_{i,j} \mathcal{W}_2^2(F, F_j)$, which is also referred to as the \textit{Wasserstein barycenter}, with $w_{i,j} \ge 0$ and $\sum_{j=1}^n w_{i,j} = 1$.
Rather than using $\bar q_i$, one might consider employing the quantile function of the spatially lagged CDF: $\sum_{j=1}^n w_{i,j} F_i$ as the spatial trend term.
However, a linear mixture of CDFs is generally multimodal and does not inherit the shape properties of the original CDFs.
In this regard, the weighted Fr\'{e}chet mean should be more representative and faithful as an indicator of the neighborhood trend.
This point is also highlighted in \cite{gunsilius2023distributional} in the context of synthetic control analysis.

\section{Estimation and Asymptotics}\label{sec:estimator}

\subsection{Penalized 2SLS estimator}

We now discuss the estimation of $\alpha_0$ and $\beta_0$.
Let $\{\phi_k: k = 1,2, \ldots\}$ be a series of basis functions, such as Fourier series, B-splines, and wavelets, such that we can expand $\alpha_0(t,s) = \sum_{k = 1}^\infty \phi_k(t) \theta_{0k}(s)$ for each $s$.
Then, we have $\int_0^1 q_i(t) \alpha_0(t,s) \text{d} t = \sum_{k = 1}^\infty r_{i,k} \theta_{0k}(s)$, where $r_{i,k} \coloneqq \int_0^1 q_i(t) \phi_k(t) \text{d}t$.
Hence, our model \eqref{eq:model} can be re-written as
\begin{align}
    q_i(s) 
    & = \sum_{k = 1}^K \bar r_{i,k} \theta_{0k}(s) + x_i^\top \beta_0(s) + \varepsilon_i(s) + u_i(s)
\end{align}
where $\bar r_{i,k} \coloneqq \sum_{j=1}^n w_{i,j} r_{j,k}$, $u_i(s) \coloneqq \int_0^1 \bar q_i(t) \alpha_0(t,s) \text{d}t - \sum_{k = 1}^K \bar r_{i,k} \theta_{0k}(s)$, and $K \equiv K_n$ is a sequence of integers tending to infinity as $n$ grows.
Note that this is just a multiple regression model having $K$ endogenous regressors $\bar r_i = (\bar r_{i,1}, \ldots, \bar r_{i,K})^\top$ with a composite error term $ \varepsilon_i(s) + u_i(s)$.
Thus, we can resort to the 2SLS approach to estimate $\theta_0(s) = (\theta_{01}(s), \ldots, \theta_{0K}(s))^\top$ and $\beta_0(s)$ under the availability of a sufficient number of valid IVs for $\bar r_i$.

Suppose that we have an $L \times 1$ vector $z_{1,i}$ of IVs that are correlated with $\bar r_i$ but not with $\varepsilon_i$ such that $L \equiv L_n \ge K_n$.
The choice of IVs will be discussed later.
Further, let $z_i = (z_{1,i}^\top, x_i^\top)^\top$, $Z = (z_1, \ldots, z_n)^\top$, $\bar R = (\bar r_1, \ldots, \bar r_n)^\top$, $M_z = Z (Z^\top Z)^{-} Z^\top$, $M_x = X (X^\top X)^{-1} X^\top$, and $\bar R_x = (I_n - M_x) \bar R$.
Then, our 2SLS estimator is defined as follows:
\begin{align}\label{eq:2SLS}
    \begin{split}
        \hat \beta_n(s)
        & \coloneqq \left[ X^\top (I_n - S) X \right]^{-1} X^\top (I_n - S) Q(s)\\
        \hat \theta_n(s)
        & \coloneqq \left[ \bar R_x^\top M_z \bar R_x + \lambda D n \right]^{-1} \bar R_x^\top M_z Q(s) 
    \end{split}
\end{align}
for a given evaluation point $s \in (0,1)$, where $S = M_z \bar R [ \bar R^\top M_z \bar R ]^{-} \bar R^\top M_z$, $\lambda \equiv \lambda_n$ is a non-negative regularization parameter tending to zero as $n$ increases, and $D$ denotes a $K$-dimensional matrix, which is positive semidefinite, symmetric, and satisfies $\rho_\text{max} (D) \lesssim 1$ uniformly in $K$.
Once $\hat \theta_n(s) = (\hat \theta_{n1}(s), \ldots, \hat \theta_{nK}(s))^\top$ is obtained, we can estimate $\alpha_0(\cdot,s)$ by
\begin{align}
    \hat \alpha_n(\cdot,s) \coloneqq \sum_{k=1}^K \phi_k(\cdot) \hat \theta_{nk}(s).
\end{align}
To recover the entire functional form of $\alpha_0(\cdot, \cdot)$ and $\beta_0(\cdot)$, we can repeat the described estimation procedure over a sufficiently fine grid on $[0,1]$. 

\begin{remark}[Choice of instruments]\label{rem:IV}
    Observing that $\bar{q}_i(s) \approx \int_0^1 \bar{\bar{q}}_i(t) \alpha_0(t, s)\text{d}t + \bar{x}_i^\top \beta_0(s)$, where $\bar{\bar{q}}_i(t) \coloneqq \sum_{j=1}^n w_{i,j} \bar{q}_j(t)$, and $\bar{x}_i \coloneqq \sum_{j=1}^n w_{i,j} x_j$, the spatially lagged covariates $\bar{x}_i$ would be natural IV candidates for $\bar{r}_{i,k}  = \int_0^1 \bar{q}_i(t) \phi_k(t)\text{d}t$, assuming that $\beta_0 \neq 0$.
    For identification, the number of valid IVs must be larger than or equal to $K$.
    While it is theoretically required that $K$ tends to infinity as $n$ increases to consistently estimate $\alpha_0(\cdot, s)$, the dimension of $\bar{x}_i$, $d_x$, is fixed in our model. 
    Note that, as long as both $\alpha_0$ and $\beta_0$ are non-degenerate, it is possible to create arbitrarily many IVs by taking the spatial lags of $x_i$ of higher and higher order: $\bar{x}_i$, $\bar{\bar{x}}_i$, ... and so forth.
    However, the higher the order, the weaker the instruments.
    Since $K$ is at most less than eight or so for most practical sample sizes, we believe that finding sufficient IVs may not be a serious concern in most empirical situations where researchers can collect a reasonable number of independent variables.
    See Remark \ref{rem:ID} below for a related discussion.
\end{remark}

In practice, the 2SLS estimator in \eqref{eq:2SLS} would be rarely feasible because function $q_i$ can usually only be incompletely observed. 
For instance, we might only be able to observe the values of $q_i$ at finite points, ${q_i(s_{i,1}), \ldots, q_i(s_{i,m_i})}$.
This is the case of our empirical analysis of the age distribution in Japanese cities.
In this empirical analysis, we cannot access the complete age distribution for each city, but we only know the distribution up to every five-year age interval.
In such a case, for example, we can apply a linear interpolation method to obtain an approximation of the entire functional form of $q_i$.
Without loss of generality, suppose the observations are ordered in an increasing way: $s_{i,1} \le s_{i,2} \le \dots \le s_{i,m_i}$.
Then, for each given $s \in [s_{i,l}, s_{i,l + 1}]$, we estimate $q_i(s)$ by 
\begin{align}\label{eq:interp}
    \hat q_i(s) = \omega_i(s) q_i(s_{i,l}) + (1 - \omega_i(s)) q_i(s_{i,l + 1}),
\end{align}
where $\omega_i(s) = (s_{i,l + 1} - s)/(s_{i,l + 1} - s_{i,l})$.
When $s < s_{i,1}$ (resp. $s > s_{i,m_i}$), we can set $\hat q_i(s) = q(s_{i,1})$ (resp. $\hat q_i(s) = q(s_{i,m_i})$).

When $q_i$ is a quantile function, it is also typical that a finite sample $\{y_{i,1}, \ldots, y_{i,m_i}\}$ randomly drawn from $F_i$ is only available.
In this case, a straightforward approach to estimate $q_i$ would be to perform a nonparametric kernel CDF estimation and invert the estimate. 
Alternatively, we can also use a simple interpolation method as described in \cite{yang2020random}.

Letting $\hat q_i$ be any estimator of $q_i$, compute $\hat r_{i,k} \coloneqq \sum_{j = 1}^n w_{i,j} \int_0^1 \hat q_j(t) \phi_k(t) \text{d}t$ and let $\hat r_i = (\hat r_{i,1}, \ldots, \hat r_{i,K})^\top$.
Now, the feasible version of \eqref{eq:2SLS} is defined as
\begin{align}\label{eq:f2SLS}
\begin{split}
    \tilde \beta_n(s)
    & \coloneqq \left[ X^\top (I_n - \hat S) X \right]^{-1} X^\top (I_n - \hat S) \hat Q(s)\\
    \tilde \theta_n(s)
    & \coloneqq \left[ \hat R_x^\top M_z \hat R_x + \lambda D n \right]^{-1} \hat R_x^\top M_z \hat Q(s) 
\end{split}
\end{align}
where $\hat Q(s) = (\hat q_1(s), \ldots, \hat q_n(s))^\top$, $\hat R = (\hat r_1, \ldots, \hat r_n)^\top$, $\hat R_x = (I_n - M_x)\hat R$, and $\hat S = M_z \hat R [ \hat R^\top M_z \hat R]^{-} \hat R^\top M_z$.
The estimator for $\alpha_0(\cdot, s)$ can be obtained by $\tilde \alpha_n(\cdot, s) \coloneqq \sum_{k=1}^K \phi_k(\cdot) \tilde \theta_{nk}(s)$.

\subsection{Convergence rates and limiting distributions}

To derive the asymptotic properties of our estimators, we first need to specify the structure of our sampling space.
Following \cite{jenish2012spatial}, let $\mathcal{D} \subset \bbR^d$, $1 \le d < \infty$ be a possibly uneven lattice, and $\mathcal{D}_n \subset \mathcal{D}$ be the set of observation locations, which may differ across different $n$.
For spatial data, $\mathcal{D}$ would be defined by a geographical space with $d = 2$.
Notably, $\mathcal{D}$ does not necessarily have to be exactly observable to us.
For example, $\mathcal{D}$ is possibly a complex space of general social and economic characteristics. In this case, we can consider it to be an embedding of individuals in a latent space, instead of their physical locations.

\begin{assumption}\label{as:sample_space}
    (i) The maximum coordinate difference between any two observations $i,j \in \mathcal{D}$, which we denote as $\Delta(i,j)$, is at least (without loss of generality) 1; and
    (ii) a threshold distance $\bar \Delta$ exists such that $w_{i,j} = 0$ if $\Delta(i,j) > \bar \Delta$.
\end{assumption}
Assumptions \ref{as:sample_space}(i) and (ii) together imply that the number of interacting neighbors for each unit is bounded.
We believe this is not too restrictive in practice.

\begin{assumption}\label{as:covariates}
    (i) $\{z_i\}_{i = 1}^n$ are non-stochastic and uniformly bounded; and
    (ii) $\lim_{n \to \infty}Z^\top Z/n$ exists and is nonsingular.
\end{assumption}

\begin{assumption}\label{as:error}
    (i) For all $i$, $\varepsilon_i \in L^p(0,1)$ for some $2 \le p < \infty$;
    (ii) $\{\varepsilon_i\}_{i = 1}^n$ are independent; and
    (iii) $\bbE [\varepsilon_i(s)] = 0$ for all $i$, $\inf_{1 \le i \le n; \; n \ge 1}||\varepsilon_i(s)||_2 > 0$, and $\sup_{1 \le i \le n; \; n \ge 1}||\varepsilon_i(s)||_4 \lesssim 1$.  
\end{assumption}

Assumption \ref{as:covariates}(i) states that the covariates and instruments are constant.
The same type of assumption as this has been often utilized in the literatures on spatial econometrics and many-IV estimation (e.g., \citealp{kelejian2010specification, hausman2012instrumental}).
Note that this assumption is essentially equivalent to considering all stochastic arguments as being conditional on $\{z_i\}_{i=1}^n$.  
Assumption \ref{as:error} restricts the distribution of the error functions, which accommodates virtually any form of heteroscedasticity.
We might be able to relax the independence assumption in (ii) to some weak dependence condition, but we introduce this for technical simplicity.
The $s$ in (iii) is a given interior point of $[0,1]$ at which the estimation is performed.

\begin{assumption}\label{as:base}
    (i) For all $k$, $\phi_k \in L^2(0,1)$;
    (ii) $|| \alpha_0(\cdot, s) -  \bm{\phi}_K(\cdot)^\top \theta_0(s)  ||_{L^2} \le \ell_K(s)$, where $\bm{\phi}_K = (\phi_1, \ldots, \phi_K)^\top$; and
    (iii) $\rho_{\max}( \int_0^1 \bm{\phi}_K(t) \bm{\phi}_K(t)^\top \text{d}t ) \lesssim 1$.
\end{assumption}

Assumption \ref{as:base} imposes a set of conditions on the basis functions.
The $L^2$-convergence rate of the approximation errors for various bases is discussed in \cite{belloni2015some}, where it is shown that $\ell_K(s) \lesssim K^{-\pi}$ typically holds when $\alpha_0(\cdot, s)$ is a \textit{$\pi$-smooth} function (i.e., H\"older class of smoothness order $\pi$).

\begin{assumption}\label{as:illposedness} 
    (i) $\rho_{\max} ( \bbE[\bar R^\top Z/n] \bbE[Z^\top \bar R /n] ), \rho_{\max}( \bbE[\bar R_x^\top Z/n] \bbE[Z^\top \bar R_x /n] ) \lesssim 1$; and
    (ii) there exists $\nu_{KL} > 0$ such that $\nu_{KL} \le \liminf_{n \to \infty} \rho_{\min} ( \bbE[\bar R^\top Z/n] \bbE[Z^\top \bar R /n] ), \liminf_{n \to \infty} \rho_{\min}( \bbE[\bar R_x^\top Z/n] \bbE[Z^\top \bar R_x /n] )$.
\end{assumption}

\begin{remark}[Potentially weak identification of $\alpha_0$]\label{rem:ID}
    The $\nu_{KL}$ in Assumption \ref{as:illposedness}(ii) governs the strength of the identification of $\alpha_0$, conceptually equivalent to the issue of \textit{ill-posedness estimation} in high-dimensional IV regression models (\citealp{breunig2020ill}).
    It is important to note that, the ill-posedness problem in our context is a more practical concern, unlike the intrinsically ill-posed nature of nonparametric IV models (e.g., \citealp{blundell2007semi,hoshino2022sieve}).
    As mentioned in Remark \ref{rem:IV}, our model assumes only a finite number of exogenous variables (i.e., $x_i$), while the number of endogenous variables grows to infinity.
    One potential strategy for constructing a sufficient number of IVs is to use higher-order spatial lags of $x_i$.
    However, as the order of spatial lags increases, their correlation with the endogenous variables inevitably gets weaker, and the IVs themselves typically become more collinear.
    This results in the ill-posedness problem, slowing down the rate of convergence, and inflating the variance of our estimator.
    A similar discussion can be found in \cite{tchuente2019weak}.
    We introduce the penalty term $\lambda D$ to control the variance inflation by restricting the flexibility of the estimated function.
\end{remark}

To state the next assumption, we define the following matrices:
$\bm{V}_n(s) \coloneqq \text{diag}\{\bbE[\varepsilon_1^2(s)], \ldots, \bbE[ \varepsilon_n^2(s)]\}$, $\Omega_{n,x}(s) \coloneqq \Psi_{n,x}^\top \bm{V}_n(s) \Psi_{n,x}/n$,
\begin{align}
    \Psi_{n,x} 
    & \coloneqq X - Z (Z^\top Z / n)^{-} \bbE (Z^\top \bar R / n)  \left[ \bbE \bar R^\top M_z \bbE \bar R / n \right]^{-1} \bbE (\bar R^\top X / n), \\
    \Sigma_{n, x} 
    & \coloneqq X^\top X /n - \bbE(X^\top \bar R/n) \left[ \bbE \bar R^\top M_z \bbE \bar R / n \right]^{-1} \bbE(\bar R^\top X/n).
\end{align}

\begin{assumption}\label{as:covmat}
    $\Sigma_x \coloneqq \lim_{n \to \infty} \Sigma_{n,x}$ and $\Omega_x(s) \coloneqq \lim_{n \to \infty} \Omega_{n,x}(s)$ exist and are nonsingular.
\end{assumption}

The next theorem gives the convergence rate of our estimator.
\begin{theorem}\label{thm:conv}
    Suppose Assumptions \ref{as:inverse} and \ref{as:sample_space} -- \ref{as:covmat} hold. 
    In addition, assume $L\sqrt{K} / (\nu_{KL}^2 \sqrt{n}) \lesssim 1$.
    Then, we have
    \begin{align}
        \text{(i)} \;\;
        || \hat \beta_n(s) - \beta_0(s) || \lesssim_p n^{-1/2}, \; \text{and} \;
        \text{(ii)} \;\;
        & \left\| \hat \alpha_n(\cdot, s) - \alpha_0(\cdot, s) \right\|_{L^2} \lesssim_p \frac{\sqrt{K}/\sqrt{n} + \ell_K(s)}{\sqrt{\nu_{KL} + \lambda \rho_D}} + \frac{\lambda ||\theta_0(s)||_D}{\nu_{KL} + \lambda \rho_D},
    \end{align}
    where $\rho_D \coloneqq \rho_{\min}(D)$, and $||\theta_0(s)||_D \coloneqq \sqrt{\theta_0(s)^\top D \theta_0(s)}$.
\end{theorem}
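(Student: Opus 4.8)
Everything hinges on a bound for $\|\hat\theta_n(s)-\theta_0(s)\|$, to which part (ii) reduces: writing $\hat\alpha_n(\cdot,s)=\bm{\phi}_K(\cdot)^\top\hat\theta_n(s)$ and inserting $\pm\,\bm{\phi}_K(\cdot)^\top\theta_0(s)$,
$$
\|\hat\alpha_n(\cdot,s)-\alpha_0(\cdot,s)\|_{L^2}\le \big\|\bm{\phi}_K^\top(\hat\theta_n(s)-\theta_0(s))\big\|_{L^2}+\big\|\bm{\phi}_K^\top\theta_0(s)-\alpha_0(\cdot,s)\big\|_{L^2},
$$
where the second term is $\le\ell_K(s)$ by Assumption \ref{as:base}(ii) and the first is $\le\sqrt{\rho_{\max}(\int_0^1\bm{\phi}_K(t)\bm{\phi}_K(t)^\top\text{d}t)}\,\|\hat\theta_n(s)-\theta_0(s)\|\lesssim\|\hat\theta_n(s)-\theta_0(s)\|$ by Assumption \ref{as:base}(iii). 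To study $\hat\theta_n$, set $P:=M_z-M_x$; since $\mathrm{col}(X)\subseteq\mathrm{col}(Z)$ one has $M_xM_z=M_zM_x=M_x$, so $P$ is an orthogonal projection with $\bar R_x^\top M_z=\bar R^\top P$, $\bar R_x^\top M_z\bar R_x=\bar R^\top P\bar R$, and $PX=0$. Stacking the sieve-expanded model as $Q(s)=\bar R\theta_0(s)+X\beta_0(s)+\mathcal{E}(s)+U(s)$ and using $\bar R^\top PX=0$ gives, with $A:=\bar R^\top P\bar R+\lambda Dn$,
$$
\hat\theta_n(s)-\theta_0(s)=-\lambda n\,A^{-1}D\theta_0(s)+A^{-1}\bar R^\top P\mathcal{E}(s)+A^{-1}\bar R^\top P\,U(s),
$$
namely a regularization bias, a noise term, and a truncation bias.

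The \emph{main obstacle} is controlling the random, endogenous denominator $A$. I would first show that the empirical Gram matrix $\hat G:=\bar R^\top P\bar R/n$ concentrates on its deterministic analogue $\bbE\bar R^\top P\,\bbE\bar R/n$, which by Assumption \ref{as:illposedness} (together with the well-conditioned limit of $Z^\top Z/n$ from Assumption \ref{as:covariates}(ii)) has smallest eigenvalue $\gtrsim\nu_{KL}$ and largest eigenvalue $\lesssim1$. Splitting $\bar R=\bbE\bar R+\tilde R$, the deviation $\hat G-\bbE\bar R^\top P\,\bbE\bar R/n$ reduces to the cross term $\bbE\bar R^\top P\tilde R/n$ and the quadratic term $\tilde R^\top P\tilde R/n$; bounding their spectral norms under growing $(K,L)$ — using independence of the $\varepsilon_i$ (Assumption \ref{as:error}) and the bounded-neighbour structure (Assumption \ref{as:sample_space}) to tame the spatial dependence carried by $\tilde R$ — is exactly where the rate restriction $L\sqrt K/(\nu_{KL}^2\sqrt n)\lesssim1$ is consumed. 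On the resulting high-probability event $\rho_{\min}(A)\gtrsim n(\nu_{KL}+\lambda\rho_D)$ and $\rho_{\max}(A)\lesssim n(1+\lambda)$.

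Given this, the three terms are bounded as follows. For the regularization bias, $\|\lambda nA^{-1}D\theta_0(s)\|\le \lambda n\,\rho_{\max}(A^{-1})\|D\theta_0(s)\|\lesssim \lambda\|\theta_0(s)\|_D/(\nu_{KL}+\lambda\rho_D)$, using $\|D\theta_0(s)\|\le\sqrt{\rho_{\max}(D)}\,\|\theta_0(s)\|_D$ and $\rho_{\max}(D)\lesssim1$. For the other two terms a crude operator bound is too lossy, so I would exploit $A^{-1}(\bar R^\top P\bar R)A^{-1}=A^{-1}-\lambda nA^{-1}DA^{-1}\preceq A^{-1}$, which gives, for the spectral norm $\|\cdot\|_{\mathrm{op}}$, $\|A^{-1}\bar R^\top P\|_{\mathrm{op}}^2=\rho_{\max}(A^{-1}\bar R^\top P\bar R\,A^{-1})\le\rho_{\max}(A^{-1})\lesssim\{n(\nu_{KL}+\lambda\rho_D)\}^{-1}$. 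Since $|u_i(s)|\le\|\bar q_i\|_{L^2}\ell_K(s)$ (Cauchy--Schwarz and Assumption \ref{as:base}(ii)) and $n^{-1}\sum_i\|\bar q_i\|_{L^2}^2\lesssim_p1$, we have $\|U(s)\|\lesssim_p\sqrt n\,\ell_K(s)$, whence the truncation bound $\ell_K(s)/\sqrt{\nu_{KL}+\lambda\rho_D}$. For the noise term I would instead bound the (conditional) second moment $\mathrm{trace}(A^{-1}\bar R^\top P\,\bm V_n(s)\,P\bar R\,A^{-1})\lesssim\mathrm{trace}(A^{-1}\bar R^\top P\bar R\,A^{-1})\le\mathrm{trace}(A^{-1})\lesssim K\{n(\nu_{KL}+\lambda\rho_D)\}^{-1}$, the dependence between $\bar R$ and $\mathcal{E}(s)$ being handled by replacing $A$ with its deterministic limit and splitting off $\bbE\bar R$ (the residual cross-moment being lower order); this yields $(\sqrt K/\sqrt n)/\sqrt{\nu_{KL}+\lambda\rho_D}$. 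Summing the three terms and adding the $\ell_K(s)$ from the reduction gives (ii).

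For part (i), the symmetric idempotent $S=M_z\bar R[\bar R^\top M_z\bar R]^-\bar R^\top M_z$ satisfies $X^\top(I_n-S)\bar R=0$, because $X^\top M_z=X^\top$ and, on the event that $\bar R^\top M_z\bar R$ is invertible, $X^\top S\bar R=X^\top M_z\bar R=X^\top\bar R$. Hence the endogenous regressors drop out of the $\beta$-equation and
$$
\hat\beta_n(s)-\beta_0(s)=\big[X^\top(I_n-S)X\big]^{-1}X^\top(I_n-S)\big(\mathcal{E}(s)+U(s)\big).
$$
Here $X^\top(I_n-S)X/n=X^\top X/n-X^\top\bar R[\bar R^\top M_z\bar R]^-\bar R^\top X/n\to\Sigma_x$ (nonsingular by Assumption \ref{as:covmat}, matching $\Sigma_{n,x}$); the stochastic term $X^\top(I_n-S)\mathcal{E}(s)/n$ has mean zero with variance $\Omega_{n,x}(s)/n=O(n^{-1})$ through $\Psi_{n,x}$; and, since $d_x$ is fixed, the truncation contribution $X^\top(I_n-S)U(s)/n$ is negligible once the sieve approximation error is undersmoothed. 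The fixed dimension of $X$ is precisely what keeps $\hat\beta_n$ at the parametric $n^{-1/2}$ rate, in contrast to the $K$-dependent rate of $\hat\alpha_n$.
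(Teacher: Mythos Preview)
Your proposal is correct and follows essentially the same route as the paper: the same three-term decomposition of $\hat\theta_n(s)-\theta_0(s)$ into regularization bias, noise, and truncation bias, the same operator inequality $A^{-1}\bar R^\top P\bar R\,A^{-1}\preceq A^{-1}$ for the noise and truncation pieces, and the same concentration argument (the paper's Lemmas \ref{lem:matLLN1}--\ref{lem:matLLN2}) for the random Gram matrix. The only place you are slightly too quick is the noise term in part (i): since $S$ is random through $\bar R$, the quantity $X^\top(I_n-S)\mathcal{E}(s)/n$ does not literally have mean zero, and the paper first replaces $S$ by its population analog (yielding the $\Psi_{n,x}$ you anticipated) and then bounds the remainder by $O_P(L\sqrt K/(\nu_{KL}^2 n))$ --- it is this remainder, rather than the Gram-matrix concentration alone, that consumes the rate restriction for $\hat\beta_n$.
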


The proofs of Theorem \ref{thm:conv} and those presented below are somewhat similar in several parts to those in \cite{hoshino2022sieve}, but for completeness, they are all presented in Appendix \ref{app:proof}.
Theorem \ref{thm:conv}(i) shows that the coefficients of $x_i$ can be estimated at the root-n rate.
Meanwhile, result (ii) indicates that the $L^2$-convergence rate of $\hat \alpha_n(\cdot, s)$ is not standard owing to the potential weak identification and the presence of the penalty term $\lambda D$.
We can observe a trade-off that the first term converges to zero quickly by selecting a large $\lambda$, while the second term can vanish if we select $\lambda$ diminishing at a sufficiently fast rate such that $\nu_{KL}/\lambda \to \infty$.
It is clear that the order of $||\theta_0(s)||_D$ is bounded by $\sqrt{K}$.
When $\theta_0(s)$ is a sparse vector or it is decaying in the order of basis expansion, $||\theta_0(s)||_D \lesssim 1$ might be possible.

\bigskip

Next, define $\sigma_{n, \lambda}(t, s) \coloneqq \sqrt{\bm{\phi}_K(t)^\top \Sigma_{n, r, \lambda}^{-1} \Omega_{n,r}(s) \Sigma_{n, r, \lambda}^{-1}\bm{\phi}_K(t)}$, $\Sigma_{n, r, \lambda} \coloneqq \bbE \bar R_x^\top M_z \bbE \bar R_x / n + \lambda D$, and $\Omega_{n,r}(s) \coloneqq \bbE \bar R_x^\top M_z \bm{V}_n(s) M_z \bbE \bar R_x /n$.
Moreover, let
\begin{align}
    \hat{\bm{C}}_n(s) 
    & \coloneqq \left[ X^\top (I_n - S) X /n \right]^{-1} \left( X^\top (I_n - S)\hat{\bm{V}}_n(s) (I_n - S) X /n \right) \left[ X^\top (I_n - S) X /n \right]^{-1} \\
    [\hat \sigma_{n, \lambda}(t, s)]^2 
    & \coloneqq \bm{\phi}_K(t)^\top \left[ \bar R_x^\top M_z \bar R_x/n + \lambda D \right]^{-1} \left( \bar R_x^\top M_z \hat{\bm{V}}_n(s) M_z \bar R_x /n \right) \left[ \bar R_x^\top M_z \bar R_x/n + \lambda D \right]^{-1} \bm{\phi}_K(t)
\end{align}
where $\hat{\bm{V}}_n(s) \coloneqq \text{diag}\{\hat \varepsilon_1^2(s), \ldots , \hat \varepsilon_n^2(s) \}$, and $\hat \varepsilon_i(s) \coloneqq q_i(s) - \bar r_i^\top \check \theta_n(s) - x_i^\top \hat \beta_n(s)$, where $\check \theta_n(s)$ denotes the estimator of $\theta_0(s)$ obtained following \eqref{eq:2SLS} with $\lambda$ set to zero.
Then, the limiting distribution of our estimator can be characterized as in the following theorem.

\begin{theorem}\label{thm:normality}
    Suppose Assumptions \ref{as:inverse} and \ref{as:sample_space} -- \ref{as:covmat} hold. 
    In addition, assume
    \begin{align}
        & K \sim L, \quad
        K^3/(\nu_{KL}^4 n) \to 0, \quad
        \sqrt{n} \ell_K(s)/\sqrt{\nu_{KL}} \to 0, \\
        & \sqrt{n}|\bm{\phi}_K(t)^\top \theta_0(s) - \alpha_0(t,s)| / ||\bm{\phi}_K(t)|| \to 0, \quad
        \lambda/\nu_{KL}^2 \to 0, \quad
        \sqrt{n} \lambda ||\theta_0(s)||_D / \nu_{KL}  \to 0.
    \end{align}
    Then, we have 
    \begin{align}
        \text{(i)} \;\;
        \sqrt{n} ( \hat \beta_n(s) - \beta_0(s) ) \overset{d}{\to} \mathcal{N}(0, \Sigma_x^{-1} \Omega_x(s) \Sigma_x^{-1}), \;\;
        \text{(ii)} \;\; \frac{\sqrt{n} (\hat \alpha_n(t, s) - \alpha_0(t, s))}{\sigma_{n, \lambda}(t, s)} \overset{d}{\to} \mathcal{N}(0, 1),
    \end{align}
   (iii) $\left\|\hat{\bm{C}}_n(s) - \Sigma_x^{-1} \Omega_x(s) \Sigma_x^{-1} \right\| = o_P(1)$, and (iv) $|\hat \sigma_{n, \lambda}(t, s) - \sigma_{n, \lambda}(t, s)| = o_P(1)$.
\end{theorem}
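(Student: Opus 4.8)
The plan is to reduce all four claims to two exact linear representations of the estimation errors and then treat the resulting stochastic and bias terms separately. Writing the stacked model as $Q(s) = \bar R\theta_0(s) + X\beta_0(s) + \mathcal{E}(s) + U(s)$ with $U(s) = (u_1(s), \ldots, u_n(s))^\top$, I would first exploit the fact that $X$ is contained in $Z$ (so $M_z X = X$ and $M_z M_x = M_x$) to obtain the Frisch--Waugh-type identities $X^\top(I_n - S)\bar R = 0$, $\bar R_x^\top M_z X = 0$, and $\bar R_x^\top M_z \bar R_x = \bar R^\top(M_z - M_x)\bar R$. These give the closed forms
\begin{align}
\hat\beta_n(s) - \beta_0(s) &= \left[X^\top(I_n - S)X/n\right]^{-1} X^\top(I_n - S)\left[\mathcal{E}(s) + U(s)\right]/n, \\
\hat\theta_n(s) - \theta_0(s) &= \left[\bar R_x^\top M_z\bar R_x/n + \lambda D\right]^{-1}\bar R_x^\top M_z\left[\mathcal{E}(s) + U(s)\right]/n - \lambda\left[\bar R_x^\top M_z\bar R_x/n + \lambda D\right]^{-1}D\theta_0(s).
\end{align}
Pre-multiplying the second display by $\bm\phi_K(t)^\top$ and adding the pointwise approximation error $\bm\phi_K(t)^\top\theta_0(s) - \alpha_0(t,s)$ then expresses $\hat\alpha_n(t,s) - \alpha_0(t,s)$ as a leading term linear in $\mathcal{E}(s)$, plus a $U$-term, a regularization-bias term, and the pointwise error. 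Each of parts (i) and (ii) is thus a CLT for the leading term together with asymptotic negligibility of the remaining three.

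Second, I would replace the random design matrices by their population counterparts. The only stochastic ingredient in $S$, $\bar R_x$, and $M_z$ is $\bar R$ (since $z_i$ is nonstochastic by Assumption \ref{as:covariates}), so the task is to bound $\bar R - \bbE\bar R$ in the relevant scaled norms. Here the bounded-neighbor structure (Assumption \ref{as:sample_space}) together with the geometric decay of spatial dependence implied by the Neumann expansion of Proposition \ref{prop:completeness} under Assumption \ref{as:inverse} controls the variances of the entries of $\bar R^\top M_z\bar R/n$ and of quadratic forms in $\bar R - \bbE\bar R$; combined with the eigenvalue bounds of Assumption \ref{as:illposedness}, this yields $[\bar R_x^\top M_z\bar R_x/n + \lambda D]^{-1} = \Sigma_{n,r,\lambda}^{-1}(I + o_P(1))$ and $X^\top(I_n - S)X/n \overset{p}{\to} \Sigma_x$, so that $\sigma_{n,\lambda}(t,s)$ and $\Sigma_x^{-1}\Omega_x(s)\Sigma_x^{-1}$ are the correct normalizers. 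Getting the powers of $K$, $L$, and $\nu_{KL}$ right in these bounds is exactly what forces the rate conditions $K \sim L$ and $K^3/(\nu_{KL}^4 n)\to 0$.

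Third --- and this is where I expect the main difficulty --- I would establish the CLT for the leading term while controlling the many-instrument bias. Decomposing $\bar R_x = \bbE\bar R_x + (\bar R_x - \bbE\bar R_x)$, the piece $\bbE\bar R_x^\top M_z\mathcal{E}(s)/\sqrt n$ is a sum of independent, mean-zero summands with \emph{nonstochastic} coefficients (by Assumption \ref{as:error}(ii) and \ref{as:covariates}), to which a triangular-array Lyapunov CLT applies: the $L^4$ bound in Assumption \ref{as:error}(iii) controls the Lyapunov ratio, and the eigenvalue bounds guarantee that no single summand dominates after normalization by $\sigma_{n,\lambda}(t,s)$; the variance of this piece reproduces $\sigma_{n,\lambda}(t,s)^2$ exactly. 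The genuinely delicate part is the cross term $(\bar R_x - \bbE\bar R_x)^\top M_z\mathcal{E}(s)$: because $\bar R$ is correlated with $\mathcal{E}(s)$ through the spatial feedback, this term carries a nonzero mean of order $\mathrm{tr}(M_z\,\mathrm{Cov})\sim L$, the analogue of the Nagar/many-IV bias. Showing that, after multiplication by $\sqrt n\,\bm\phi_K(t)^\top\Sigma_{n,r,\lambda}^{-1}$ (whose norm is of order $1/\nu_{KL}$) and division by $\sigma_{n,\lambda}(t,s)$, both this mean and its fluctuation are $o_P(1)$ is what consumes the assumed rates; I would bound the quadratic form using the independence of the $\varepsilon_i$, the rank of order $L$ of $M_z$, and the decay of $\Cov(\bar r_i, \varepsilon_j)$ in spatial distance.

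Finally, the negligibility of the remaining bias terms and the plug-in claims (iii)--(iv) are comparatively routine. The $U$-term is handled through an $\|U(s)\|/\sqrt n \lesssim \ell_K(s)$-type bound and vanishes under $\sqrt n\,\ell_K(s)/\sqrt{\nu_{KL}}\to 0$; the regularization term is of order $\sqrt n\,\lambda\|\theta_0(s)\|_D/\nu_{KL} = o(1)$ by assumption; and the pointwise error is negligible by $\sqrt n\,|\bm\phi_K(t)^\top\theta_0(s) - \alpha_0(t,s)|/\|\bm\phi_K(t)\|\to 0$. For (iii) and (iv) I would show that the residual-based $\hat{\bm V}_n(s) = \mathrm{diag}(\hat\varepsilon_1^2(s), \ldots, \hat\varepsilon_n^2(s))$ reproduces $\bm V_n(s)$ in the averaged sense needed, using the consistency of $\check\theta_n(s)$ and $\hat\beta_n(s)$ from Theorem \ref{thm:conv}, the uniform boundedness of $z_i$, and Assumption \ref{as:error}; substituting into the sandwich forms and reusing the sample-to-population replacement of the second step then delivers $\hat{\bm C}_n(s)\overset{p}{\to}\Sigma_x^{-1}\Omega_x(s)\Sigma_x^{-1}$ and $\hat\sigma_{n,\lambda}(t,s) - \sigma_{n,\lambda}(t,s) = o_P(1)$.
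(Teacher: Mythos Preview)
Your plan matches the paper's proof closely: the same Frisch--Waugh identities and closed forms, the same replacement of $\bar R_x$ by $\bbE\bar R_x$ to isolate a leading term with nonstochastic coefficients, a triangular-array Lyapunov CLT on that term, and negligibility of the $U$-, $\lambda$-, and pointwise-bias terms under the stated rates; parts (iii)--(iv) are handled exactly as you describe. The one place you diverge is the cross term $(\bar R_x-\bbE\bar R_x)^\top M_z\mathcal E(s)$, which you flag as the main difficulty and propose to attack by separately bounding its (nonzero) mean and fluctuation via spatial covariance decay. The paper sidesteps this entirely: it factors the term through $Z$ as
\[
\bigl[\bar R_x^\top Z/n-\bbE\bar R_x^\top Z/n\bigr]\,(Z^\top Z/n)^{-}\,\bigl[Z^\top\mathcal E(s)/n\bigr]
\]
and bounds each factor in probability---the first is $O_P(\sqrt{KL/n})$ by the paper's LLN lemma for $Z^\top\bar R/n$, the second is $O_P(\sqrt{L/n})$ by independence of the $\varepsilon_i$. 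The resulting product bound $O_P(L\sqrt K/n)$ already subsumes whatever Nagar-type mean the term carries, and after multiplying by $\sqrt n\,\Sigma_{n,r,\lambda}^{-1}$ and dividing by $\sigma_{n,\lambda}(t,s)\gtrsim\|\bm\phi_K(t)\|$ it vanishes under $K\sim L$ and $K^3/(\nu_{KL}^4 n)\to 0$. So your proposed detour through an explicit bias computation is correct but unnecessary; the factor-through-$Z$ trick delivers the required rate without ever looking at the mean.
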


\begin{remark}[Choice of tuning parameters]
    To implement our estimator, we need to select three tuning parameters $\lambda$, $K$, and $L$.
    For the penalty parameter $\lambda$, considering the assumptions in Theorem \ref{thm:normality}, it must converge to zero faster at least than $n^{-1/2}$. 
    In the numerical studies presented below, we set $\lambda \sim n^{-3/5}$.
    For the order of basis expansion $K$, assume that $L \sim K \sim n^{\bar k}$ for some $\bar k > 0$.
    We further assume that the ill-posedness is mild such that $\nu_{KL} \lesssim K^{-\nu}$ for some $\nu > 0$ and  suppose that $\alpha_0(\cdot, s)$ is a $\pi$-smooth function such that $\ell_K(s) \lesssim K^{-\pi}$.
    Then, easy calculations yield that $K$ must satisfy $1/(2\pi - \nu) < \bar k < 1/(3 + 4\nu)$ to ensure the asymptotic normality results.
    This clearly indicates that when the IVs are not strong, a modest $K$ should be employed.
    In Section \ref{sec:MC}, we numerically examine the impact of tuning parameters selection.
    The results demonstrate that the choice of $\lambda$ is more influential on the estimation performance than that of $K$.
    More sophisticated, data-driven tuning parameter choice methods will be investigated in future studies.
\end{remark}

\subsection{Testing the presence of spatial effects}

In this section, we consider statistically testing the presence of spatial effects.
Specifically, for each given $s$, we test the following null hypothesis:
\begin{align}
    \mathbb{H}_0: \alpha_0(t, s) = 0 \;\; \text{almost everywhere $t \in \mathcal{I}$}
\end{align}
where $\mathcal{I}$ denotes a non-degenerate sub-interval of $[0,1]$.
Then, a natural test statistic for testing $\mathbb{H}_0$ would be the Wald-type statistic given as follows:
\begin{align}
    T_n \coloneqq n \int_\mathcal{I} \hat \alpha^2_n(t,s) \text{d}t,
\end{align}
where the dependence of $T_n$ on $s$ is suppressed.
To derive the asymptotic distribution of $T_n$ under $\mathbb{H}_0$, let $\Xi_n \coloneqq \Sigma_{n, r, \lambda}^{-1} \bbE (\bar R_x^\top Z / n) (Z^\top Z / n)^{-}$ and $\Phi_\mathcal{I} \coloneqq \int_\mathcal{I} \bm{\phi}_K(t) \bm{\phi}_K(t)^\top \text{d}t$.
Further, define
\begin{align}
    \mu_n 
    & \coloneqq  \text{tr}\left\{ \Xi_n^\top \Phi_\mathcal{I} \Xi_n (Z^\top \bm{V}_n(s) Z /n )\right\} \\
    v_n 
    & \coloneqq 2 \text{tr}\left\{ \Xi_n^\top \Phi_\mathcal{I} \Xi_n (Z^\top \bm{V}_n(s) Z /n ) \Xi_n^\top \Phi_\mathcal{I} \Xi_n (Z^\top \bm{V}_n(s) Z /n ) \right\},
\end{align}
which serve as the mean and variance of $T_n$, respectively.

Here, we introduce the following miscellaneous assumptions.
\begin{assumption}\label{as:misc}
    (i) $\sup_{1 \le i \le n; \; n \ge 1}||\varepsilon_i(s)||_6 \lesssim 1$; and
    (ii) $0 < \rho_\text{min}(\Phi_\mathcal{I}) \le \rho_\text{max}(\Phi_\mathcal{I}) \lesssim 1$.
\end{assumption}

The next theorem characterizes the asymptotic distribution of our test statistic.
\begin{theorem}\label{thm:test}
    Suppose Assumption \ref{as:misc} and the assumptions in Theorem \ref{thm:normality} are all satisfied.
    In addition, assume $1/(K \nu^2_{KL}) \to 0$ and $K^3/(\nu_{KL}^5 n) \to 0$.
    Then, we have $(T_n - \mu_n)/\sqrt{v_n} \overset{d}{\to} \mathcal{N}(0,1)$.
\end{theorem}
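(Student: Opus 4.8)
The plan is to reduce $T_n$ to a quadratic form in the instrument-weighted error vector $\xi_n \coloneqq Z^\top \mathcal{E}(s)/\sqrt{n}$ and then to invoke a central limit theorem for quadratic forms of independent heteroscedastic variables. First I would write $\hat\alpha_n(\cdot,s) = \bm{\phi}_K(\cdot)^\top\hat\theta_n(s)$, so that $T_n = n\,\hat\theta_n(s)^\top\Phi_\mathcal{I}\hat\theta_n(s)$. Since $x_i$ is a subvector of $z_i$, the column space of $X$ lies in that of $Z$, giving $M_z X = X$ and $M_z M_x = M_x$; this yields the orthogonalities $\bar R_x^\top M_z X = 0$ and $\bar R_x^\top M_z \bar R = \bar R_x^\top M_z \bar R_x$, so the $X\beta_0(s)$ term drops out and, writing $U(s) = (u_1(s),\ldots,u_n(s))^\top$ for the sieve approximation error,
\begin{align}
    \hat\theta_n(s) - \theta_0(s) = \left[ \bar R_x^\top M_z\bar R_x/n + \lambda D \right]^{-1}\left( \bar R_x^\top M_z\mathcal{E}(s)/n + \bar R_x^\top M_z U(s)/n - \lambda D\theta_0(s) \right).
\end{align}
Expanding $\bar R_x^\top M_z\mathcal{E}/n = (\bar R_x^\top Z/n)(Z^\top Z/n)^{-}(Z^\top\mathcal{E}/n)$ and replacing the sample moments $\bar R_x^\top M_z\bar R_x/n$ and $\bar R_x^\top Z/n$ by their expectations, the leading stochastic term becomes exactly $\Xi_n\,\xi_n/\sqrt{n}$.

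The second step is to show that all remaining pieces are negligible inside the quadratic form. Under $\mathbb{H}_0$ one has $n\,\theta_0(s)^\top\Phi_\mathcal{I}\theta_0(s) = n\int_\mathcal{I}(\bm{\phi}_K(t)^\top\theta_0(s))^2\,\text{d}t \lesssim n\,\ell_K(s)^2 \to 0$ by $\sqrt{n}\,\ell_K(s)/\sqrt{\nu_{KL}}\to 0$, so the true-parameter contribution vanishes; the regularization term is controlled via $\sqrt{n}\,\lambda\|\theta_0(s)\|_D/\nu_{KL}\to 0$ and the approximation-error term via the smoothness condition, exactly as in Theorem \ref{thm:normality}. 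Cross terms between the noise part and the bias-plus-$\theta_0$ part are bounded by Cauchy--Schwarz in the $\Phi_\mathcal{I}$ inner product, each factor being dominated using the $L^2$ rate of Theorem \ref{thm:conv} together with $\rho_{\max}(\Phi_\mathcal{I})\lesssim 1$ from Assumption \ref{as:misc}(ii). The extra rate $K^3/(\nu_{KL}^5 n)\to 0$ is what permits the replacement of $\bar R_x^\top M_z\bar R_x/n$ and $\bar R_x^\top Z/n$ by their expectations while keeping the induced error $o_P(\sqrt{v_n})$ after sandwiching in the quadratic form.

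After these reductions, $T_n = \xi_n^\top A_n\xi_n + o_P(\sqrt{v_n})$ with $A_n \coloneqq \Xi_n^\top\Phi_\mathcal{I}\Xi_n$, where $\xi_n = \sum_{i=1}^n z_i\varepsilon_i(s)/\sqrt{n}$ is a sum of independent, mean-zero vectors by Assumption \ref{as:error}. Treating $\{z_i\}$ as non-stochastic, a direct computation gives $\bbE[\xi_n^\top A_n\xi_n] = \text{tr}\{A_n(Z^\top\bm{V}_n(s)Z/n)\} = \mu_n$, and the leading contribution to the variance is the off-diagonal (degenerate) part $2\,\text{tr}\{(A_n Z^\top\bm{V}_n(s)Z/n)^2\} = v_n$. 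The final step is to establish $(\xi_n^\top A_n\xi_n - \mu_n)/\sqrt{v_n}\overset{d}{\to}\mathcal{N}(0,1)$. I would split the centered form into the diagonal part $\tfrac{1}{n}\sum_i(z_i^\top A_n z_i)(\varepsilon_i^2(s)-\bbE\varepsilon_i^2(s))$ and the off-diagonal part $\tfrac{1}{n}\sum_{i\neq j}(z_i^\top A_n z_j)\varepsilon_i(s)\varepsilon_j(s)$, show the diagonal part is $o_P(\sqrt{v_n})$, and prove asymptotic normality of the off-diagonal part through a martingale-difference CLT for quadratic forms, writing it as $\sum_i\big(\tfrac{2}{n}\sum_{j<i}z_i^\top A_n z_j\,\varepsilon_j(s)\big)\varepsilon_i(s)$. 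The condition $1/(K\nu_{KL}^2)\to 0$ guarantees that $v_n$ does not degenerate, so that Gaussian (rather than a finite-dimensional $\chi^2$) asymptotics are the correct regime, while $K^3/(\nu_{KL}^5 n)\to 0$ and the sixth-moment bound in Assumption \ref{as:misc}(i) deliver the Lyapunov condition and the negligibility of the diagonal and fourth-cumulant remainders.

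The hardest part will be this quadratic-form CLT: because the effective dimension of $A_n$ grows ($K\to\infty$) and its conditioning deteriorates through $\nu_{KL}$, I must verify the negligibility and conditional-variance-convergence conditions for the martingale array uniformly, and simultaneously confirm that the many approximations of Steps 1--2 accumulate to only $o_P(\sqrt{v_n})$ rather than the coarser $o_P(\mu_n)$. Maintaining the error analysis at the finer scale $\sqrt{v_n}$, which is of strictly smaller order than the mean $\mu_n$, is precisely the delicate bookkeeping that the additional assumptions $1/(K\nu_{KL}^2)\to 0$ and $K^3/(\nu_{KL}^5 n)\to 0$ are designed to support.
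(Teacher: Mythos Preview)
Your plan is correct and coincides with the paper's own proof: both reduce $T_n$ to $n B_{21}^\top\Phi_\mathcal{I} B_{21}=\xi_n^\top\Xi_n^\top\Phi_\mathcal{I}\Xi_n\,\xi_n$ plus an $o_P(\sqrt{v_n})=o_P(\sqrt{K})$ remainder via the $B_1,B_{21},B_{22},B_{23},B_3$ decomposition, then separate the diagonal and off-diagonal parts of the quadratic form. The only difference is that the paper establishes asymptotic normality of the off-diagonal piece by verifying the $G_{n,I},G_{n,II},G_{n,IV}$ conditions of de Jong's (1987) CLT for degenerate quadratic forms rather than a martingale-difference CLT; also, be aware that a blanket Cauchy--Schwarz on the cross terms can lose a $\sqrt{K}$ factor, so in practice you will need direct bounds (as the paper computes) for terms like $B_1^\top\Phi_\mathcal{I} B_{21}$ to stay at the $o_P(\sqrt{K})$ scale.
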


When $\mathbb{H}_0$ does not hold, the standardized test statistic $(T_n - \mu_n)/\sqrt{v_n}$ deviates to a positive value.
Thus, considering Theorem \ref{thm:test}, we can reject $\mathbb{H}_0$ at the $100\alpha$\% significance level if the realized value of $(T_n - \mu_n)/\sqrt{v_n}$ exceeds the upper $\alpha$-quantile of $\mathcal{N}(0, 1)$.
To implement the test in practice, we need to consistently estimate $\mu_n$ and $v_n$, which can be easily performed by the sample analogue estimators, the definitions of which should be clear from the context.
The consistency of these estimators is straightforward (refer to Lemmas \ref{lem:matLLN1} and \ref{lem:matLLN2} and Theorem \ref{thm:normality}(iii), (iv)).

\begin{remark}
    The proposed test can easily be extended to a more general null hypothesis: $\mathbb{H}_0: \alpha_0(t,s) = a(t)$ for $t \in \mathcal{I}$, where $a(\cdot)$ is any given function that is pre-specified by the researcher (or estimable with a certain convergence rate).
    The resulting test statistic would take the following form: $T_n = \int_\mathcal{I} (\hat \alpha_n(t,s) - a(t))^2 \text{d}t$, and $\mathbb{H}_0$ can be tested using the same procedure as above.
\end{remark}

Finally, it is important to notice that when $\mathbb{H}_0: \alpha_0(t,s) = 0$ is indeed true over the entire $[0,1]$, higher-order spatially-lagged covariates are not valid IVs, that is, for example, $\bar{\bar{x}}_i$ and $\bar q_i$ are not related to each other.
Thus, basically, we need to prepare a sufficient number of IVs using only $\bar x_i$ and possibly its transformations in this case.

\subsection{Asymptotic properties under interpolated outcome functions}

Finally, in this section, we examine the cases in which the outcome functions are only discretely observed, and they are linearly interpolated following \eqref{eq:interp}.
Letting $s_{i,0} = 0$ and $s_{i,m_i + 1} = 1$ for all $i$, we introduce the following assumption.
\begin{assumption}\label{as:interp}
    For all $i$, (i) there exists a positive sequence $\kappa \equiv \kappa_n$ tending to zero as $n$ increases such that $|s_{i,l+1} - s_{i,l}| \lesssim \kappa$, for all $l = 0,1, \ldots, m_i$; and (ii) there exists a constant $\xi \in (0,1]$ such that $|q_i(s_1) - q_i(s_2)| \lesssim |s_1 - s_2|^\xi$ for any $s_1, s_2 \in [0,1]$.
\end{assumption}
Assumption \ref{as:interp}(i) determines the overall precision of the linear interpolation approximation.
For simplicity of discussion, it assumes that the values of the outcome function are (quasi) uniformly observed such that the distance of any two consecutive observations is of order $\kappa$.
In addition, note that we treat each observation point as nonstochastic.
Assumption \ref{as:interp}(ii) requires that the outcome function is H\"{o}lder continuous with exponent $\xi$ for all $i$.
This assumption may be somewhat restrictive, but similar assumptions are often considered in the FDA literature (e.g., \citealp{crambes2009smoothing}).
Obviously, we need some form of continuity in order for the interpolation approximation to work.

The following theorem states that the approximation errors caused by the linear interpolation are asymptotically negligible if $\kappa^\xi$ is sufficiently small.

\begin{theorem}\label{thm:tilde}
    Suppose Assumption \ref{as:interp} and those in Theorem \ref{thm:normality} are all satisfied.
    In addition, assume $\sqrt{n} \kappa^\xi / \sqrt{\nu_{KL}} \to 0$.
    Then, $\tilde \beta_n(s)$ and $\tilde \alpha_n(\cdot, s)$ are asymptotically equivalent to $\hat \beta_n(s)$ and $\hat \alpha_n(\cdot, s)$, respectively.
\end{theorem}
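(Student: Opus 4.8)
The plan is to show that replacing each $q_i$ with its linear interpolant $\hat q_i$ from \eqref{eq:interp} perturbs every sample moment entering the $2$SLS formulas \eqref{eq:2SLS} by a quantity that is uniformly of order $\kappa^\xi$, and that under the rate condition $\sqrt{n}\kappa^\xi/\sqrt{\nu_{KL}} \to 0$ this perturbation is asymptotically negligible at precisely the rate at which the basis approximation error $\ell_K(s)$ is negligible in Theorem \ref{thm:normality}. First I would establish the pointwise interpolation bound. For $s \in [s_{i,l}, s_{i,l+1}]$, \eqref{eq:interp} gives
\[
\hat q_i(s) - q_i(s) = \omega_i(s)(q_i(s_{i,l}) - q_i(s)) + (1 - \omega_i(s))(q_i(s_{i,l+1}) - q_i(s)),
\]
so Assumption \ref{as:interp}(ii) bounds each difference by a constant times $\kappa^\xi$ (using Assumption \ref{as:interp}(i) and $\omega_i(s) \in [0,1]$). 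This yields $\sup_{i,s}|\hat q_i(s) - q_i(s)| \lesssim \kappa^\xi$, and hence $\sup_i \|\hat q_i - q_i\|_{L^2} \lesssim \kappa^\xi$ and $\|\hat Q(s) - Q(s)\| \lesssim \sqrt{n}\kappa^\xi$.

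Next I would propagate this bound to the endogenous regressors. Writing $\hat r_{i,k} - \bar r_{i,k} = \int_0^1 G_i(t)\phi_k(t)\text{d}t$ with $G_i \coloneqq \sum_{j=1}^n w_{i,j}(\hat q_j - q_j)$, Assumption \ref{as:inverse} gives $\|G_i\|_{L^2} \le \|W_n\|_\infty \max_j \|\hat q_j - q_j\|_{L^2} \lesssim \kappa^\xi$, and the eigenvalue bound in Assumption \ref{as:base}(iii) then gives $\sup_i \|\hat r_i - \bar r_i\| \lesssim \kappa^\xi$ for the full $K$-vector. Summing over $i$ yields $\|\hat R - \bar R\| \lesssim \sqrt{n}\kappa^\xi$, and since $I_n - M_x$ is an orthogonal projection and therefore does not increase the Frobenius norm, also $\|\hat R_x - \bar R_x\| \lesssim \sqrt{n}\kappa^\xi$.

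I would then expand each hatted moment around its infeasible counterpart bilinearly, e.g.
\[
\hat R_x^\top M_z \hat R_x - \bar R_x^\top M_z \bar R_x = (\hat R_x - \bar R_x)^\top M_z \hat R_x + \bar R_x^\top M_z (\hat R_x - \bar R_x),
\]
and analogously for $\hat R_x^\top M_z \hat Q(s)$, for $\hat S$, and for the quantities defining $\tilde\beta_n(s)$. Bounding each cross term with the bounds just derived together with the orders of $\bar R_x^\top M_z \bar R_x/n$, $\bar R_x^\top M_z Q(s)/n$ and $\|X\| \lesssim \sqrt{n}$ already controlled in the proofs of Theorems \ref{thm:conv} and \ref{thm:normality}, each moment is seen to be perturbed by $O_P(\kappa^\xi)$ after dividing by $n$. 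Transferring these bounds to the inverses through the identity $\hat A^{-1} - A^{-1} = -\hat A^{-1}(\hat A - A) A^{-1}$ with $\rho_{\min}(\bar R_x^\top M_z \bar R_x/n + \lambda D) \gtrsim \nu_{KL} + \lambda \rho_D$, and recombining the pieces exactly as in the normality proof, delivers $\sqrt{n}\|\tilde\beta_n(s) - \hat\beta_n(s)\| = o_P(1)$ and $\sqrt{n}(\tilde\alpha_n(t,s) - \hat\alpha_n(t,s))/\sigma_{n,\lambda}(t,s) = o_P(1)$, which is the claimed asymptotic equivalence.

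The main obstacle is the last step: the interpolation perturbation contaminates both the near-singular design matrix $\bar R_x^\top M_z \bar R_x/n$ and its inverse, and a crude bound on the inverse perturbation would consume two factors of $\nu_{KL}^{-1}$, forcing the much stronger requirement $\sqrt{n}\kappa^\xi/\nu_{KL}^2 \to 0$. To match the stated condition $\sqrt{n}\kappa^\xi/\sqrt{\nu_{KL}} \to 0$, I would arrange the bookkeeping so that $\kappa^\xi$ enters the estimator of $\alpha_0$ in exactly the additive slot occupied by the approximation error $\ell_K(s)$, namely as a bias term divided by $\sqrt{\nu_{KL} + \lambda \rho_D}$ rather than by $\nu_{KL} + \lambda \rho_D$, thereby recovering the sharp $\nu_{KL}^{-1/2}$ scaling. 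This parallels the role of the condition $\sqrt{n}\ell_K(s)/\sqrt{\nu_{KL}} \to 0$ in Theorem \ref{thm:normality} and is the only place where more than routine perturbation algebra is required.
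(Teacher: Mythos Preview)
Your preliminary bounds are correct (indeed, your use of Assumption~\ref{as:base}(iii) to get $\sup_i\|\hat r_i - \bar r_i\|\lesssim\kappa^\xi$ is sharper than the $\sqrt{K}\kappa^\xi$ bound the paper records in Lemma~\ref{lem:matLLN1}(ii)). But the direct perturbation strategy you describe in the body of the proposal---expanding $\hat R_x^\top M_z\hat Q(s)$ around $\bar R_x^\top M_z Q(s)$ bilinearly and pushing the difference through $\hat A^{-1}-A^{-1}=-\hat A^{-1}(\hat A-A)A^{-1}$---does not deliver the stated rate. The cross terms involve $Q(s)$ itself, whose norm is of exact order $\sqrt n$, so after dividing by $n$ and multiplying by $\rho_{\min}^{-1}$ you end up with $\sqrt n\,\kappa^\xi/\nu_{KL}$ (or worse, $\sqrt n\,\kappa^\xi/\nu_{KL}^2$ from the inverse perturbation), which is not controlled by the hypothesis $\sqrt n\,\kappa^\xi/\sqrt{\nu_{KL}}\to 0$. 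You recognize this obstacle in your final paragraph and say you would ``arrange the bookkeeping'' so that $\kappa^\xi$ lands in the $\ell_K(s)$ slot, but you do not say how.

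The paper supplies exactly the missing device: rather than comparing $\tilde\theta_n$ to $\hat\theta_n$, it rewrites the \emph{model equation} for the interpolated outcome,
\[
\hat Q(s)=\hat R\,\theta_0(s)+X\beta_0(s)+\mathcal E(s)+E(s)-V(s)+\hat U(s),
\]
where $E=\hat Q-Q$ and $V(s)=W_n\int_0^1 E(t)\alpha_0(t,s)\,\mathrm dt$ are both uniformly $O(\kappa^\xi)$. Since $\hat R$ is now the regressor, $\tilde\theta_n-\theta_0$ decomposes exactly as $\hat\theta_n-\theta_0$ did, with $E(s)-V(s)$ appearing as additional residuals alongside $\hat U(s)$. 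Each such residual term is of the form $[\hat R_x^\top M_z\hat R_x/n+\lambda D]^{-1}\hat R_x^\top M_z E(s)/n$, and the projection-type bound \eqref{eq:fact1} (with $A=M_z\hat R_x/\sqrt n$) then yields the desired $\kappa^\xi/\sqrt{\nu_{KL}+\lambda\rho_D}$ rather than $\kappa^\xi/(\nu_{KL}+\lambda\rho_D)$. This model-equation rewriting is the concrete step your proposal is missing; once it is in place, the rest of your outline (isolating the leading $B_2$ term and showing all other pieces are $o_P$ after normalization by $\sigma_{n,\lambda}$) goes through as in the proof of Theorem~\ref{thm:normality}.
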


Under Assumption \ref{as:interp}, the approximation error $|\hat q_i(s) - q_i(s)|$ is of order $\kappa^\xi$ uniformly in $s$.
The condition $\sqrt{n} \kappa^\xi / \sqrt{\nu_{KL}} \to 0$ states that the interpolation error should shrink to zero faster than $n^{-1/2}$, similar to the basis approximation error $\ell_K(s)$. 
From this result, it is also straightforward to observe the asymptotic equivalence between the feasible Wald test $\tilde T_n \coloneqq n \int_{\mathcal{I}} \tilde \alpha_n^2(t,s) \text{d}t$ and $T_n$ presented in the previous section.

\section{Numerical Experiments}\label{sec:MC}

\paragraph{Performance of the 2SLS estimator}

In this section, we first examine the finite sample performance of the proposed 2SLS estimator. 
We consider the following three data-generating processes (DGPs) for the Monte Carlo experiments:
\begin{align}
    q_i(s) = \int_0^1 \bar q_i(t) \alpha_0(t, s) \text{d}t + \sum_{j = 1}^7 x_{i,j} \beta_{0j}(s) + \varepsilon_i(s),
\end{align}
where
\begin{align*}
    \text{DGP 1:\;} & \alpha_0(t, s) = (t + s)/2\\
    \text{DGP 2:\;} & \alpha_0(t, s) = \text{PDF of $\mathcal{N}(t - s, 0.7^2)$} \\
    \text{DGP 3:\;} & \alpha_0(t, s) = 0.3 + 0.7t\sin(2\pi(t-s))
\end{align*}
$\beta_{0j}(s) = 1 + 1.2\log(s + 1)$ for $j = 1,2,3$, $\beta_{0j}(s) = \exp(s) - 0.4$ for $j = 4,\ldots, 7$, $x_{i,j} \overset{IID}{\sim} \mathcal{N}(0, 1)$ for all $j$, and $\varepsilon_i(s) = \varepsilon_{1,i} + \sum_{j = 1}^4 s^{j/2} \varepsilon_{2,i,j}$ with $\varepsilon_{1,i} \overset{IID}{\sim} \mathcal{N}(0, 0.3^2)$ and $\varepsilon_{2,i,j} \overset{IID}{\sim} \mathcal{N}(0, 0.6^2)$ for all $j$.
When estimating the model, an intercept term is also included.
We randomly allocate $n$ units on the lattice of $n/20 \times 40$, where we consider two sample sizes: $n \in \{400, 1600\}$.
The spatial weight matrix $W_n$ is defined according to the Rook contiguity with row normalization.
Since these three DGPs satisfy the requirements in Assumption \ref{as:inverse}, we can generate the outcome functions $Q$ using the Neumann series approximation: $Q \approx Q^{(L)} \coloneqq \sum_{\ell = 0}^L \mathcal{T}^\ell [X\beta_0 + \mathcal{E}]$, where $L$ is increased until $\max_{1 \le i \le n} |q_i^{(L)}(s) - q_i^{(L-1)}(s)| < 0.001$ is met for all $s$.
For computing the integrals over $[0, 1]$, we approximate them by finite summations over 199 grid points: 0.005, 0.010, \ldots, 0.995. 

For the choice of the basis functions $\{\phi_k\}$, we use the cubic B-splines.
We examine two values for the number of the inner knots of the B-splines: $\text{\# knots} \in \{2,3\}$, corresponding to $K = 6$ and $7$, respectively, both of which are equally spaced in $[0,1]$.
The IVs used are the first- and second-order spatial lags of $\{1, x_{i,1}, \ldots, x_{i,7}\}$.
Note that because there may exist some units that have no neighboring units, the spatial lags of $1$ are not necessarily constants.
For the penalty term $\lambda D$, we set $D = I_K$ (i.e., the ridge penalty) and attempt using four values for $\lambda = \lambda_c n^{-3/5}$ with $\lambda_c \in \{0.5, 1, 2, 3\}$.
The number of Monte Carlo repetitions for each setup is set to 1000.
Throughout, the evaluation point $s$ is fixed at $s = 0.5$.

The performance of the coefficient estimator $\hat \beta_n$ is evaluated using the average bias (BIAS) and the average root mean squared error (RMSE):
\begin{align}
    \text{BIAS:\;} \frac{1}{7}\sum_{j = 1}^7 \left[ \frac{1}{1000}\sum_{r = 1}^{1000} \hat \beta_{nj}^{(r)}(s) - \beta_{0j}(s) \right], \quad
    \text{RMSE:\;} \frac{1}{7}\sum_{j = 1}^7 \left[ \frac{1}{1000}\sum_{r = 1}^{1000} (\hat \beta_{nj}^{(r)}(s) - \beta_{0j}(s))^2 \right]^{1/2},
\end{align}
where superscript $(r)$ means that the estimate is obtained from the $r$-th replicated dataset.
Similarly, for the estimator $\hat \alpha_n$ of the spatial effect, we evaluate the performance based on the BIAS and RMSE averaged over the 19 evaluation points $\{t_1, t_2, \ldots, t_{19}\}$ equally spaced on $[0,1]$:
\begin{align}
    \text{BIAS:\;} \frac{1}{19}\sum_{j = 1}^{19} \left[ \frac{1}{1000}\sum_{r = 1}^{1000} \hat \alpha_n^{(r)}(t_j, s) - \alpha_0(t_j, s) \right], \quad
    \text{RMSE:\;} \frac{1}{19}\sum_{j = 1}^{19} \left[ \frac{1}{1000}\sum_{r = 1}^{1000} (\hat \alpha_n^{(r)}(t_j, s) - \alpha_0(t_j, s))^2 \right]^{1/2}.
\end{align}

Table \ref{tab:estimation} summarizes the simulation results.
Our main findings are as follows:
First, the results suggest that our estimator works satisfactorily well for all scenarios.
The RMSE values for estimating $\beta_0$ are approximately halved when the sample size is increased from 400 to 1600, which is consistent with our theorem.
Meanwhile, the RMSE values for estimating $\alpha_0$ do not decrease significantly even when the sample size is increased. 
This result would be owing to the increased variances caused by employing a smaller penalty parameter $\lambda$ (recall that $\lambda \sim n^{-3/5}$).
When comparing the results of the estimators with different $\lambda$ values, our results suggest that when the functional form of the spatial effect $\alpha_0$ is simple as in DGPs 1 and 2, using an estimator with a relatively large penalty is advisable in terms of RMSE.
In contrast, when the functional form of $\alpha_0$ is complex as in DGP 3, the estimator with the smallest penalty outperforms the others, which should be a reasonable result. 
It seems that the number of inner knots has only minute impacts on the estimation performance.

\begin{table}[ht]
    \caption{Estimation performance}
    \label{tab:estimation}
    \centering\small
    \begin{tabular}{lll|cccccccccc}
        \hline \hline
        &      &   & \multicolumn{2}{c}{$\beta$} & \multicolumn{2}{c}{$\alpha$ ($\lambda_c = 0.5$)} & \multicolumn{2}{c}{$\alpha$ ($\lambda_c = 1$)} & \multicolumn{2}{c}{$\alpha$ ($\lambda_c = 2$)} & \multicolumn{2}{c}{$\alpha$ ($\lambda_c = 3$)} \\
    DGP & $n$    & \# knots & BIAS   & RMSE  & BIAS               & RMSE  & BIAS             & RMSE  & BIAS             & RMSE  & BIAS             & RMSE  \\
    \hline
    1 & 400 & 2 & -0.0010  & 0.0361  & 0.0201  & 0.1059  & 0.0213  & 0.1024  & 0.0188  & 0.1009  & 0.0147  & 0.0996  \\
    &  & 3 & -0.0010  & 0.0362  & 0.0207  & 0.1186  & 0.0216  & 0.1146  & 0.0183  & 0.1124  & 0.0135  & 0.1108  \\
    & 1600 & 2 & -0.0004  & 0.0178  & 0.0151  & 0.0957  & 0.0189  & 0.0950  & 0.0209  & 0.0967  & 0.0207  & 0.0979  \\
    &  & 3 & -0.0004  & 0.0178  & 0.0160  & 0.1100  & 0.0196  & 0.1085  & 0.0213  & 0.1093  & 0.0207  & 0.1101  \\
   2 & 400 & 2 & -0.0017  & 0.0365  & -0.0010  & 0.0922  & -0.0054  & 0.0836  & -0.0124  & 0.0776  & -0.0187  & 0.0748  \\
    &  & 3 & -0.0018  & 0.0366  & -0.0010  & 0.1079  & -0.0057  & 0.0997  & -0.0133  & 0.0937  & -0.0202  & 0.0908  \\
    & 1600 & 2 & -0.0006  & 0.0179  & 0.0014  & 0.0852  & -0.0011  & 0.0814  & -0.0048  & 0.0781  & -0.0080  & 0.0764  \\
    &  & 3 & -0.0006  & 0.0179  & 0.0015  & 0.1018  & -0.0010  & 0.0981  & -0.0050  & 0.0948  & -0.0084  & 0.0931  \\
   3 & 400 & 2 & -0.0017  & 0.0394  & 0.0148  & 0.1821  & 0.0177  & 0.1944  & 0.0155  & 0.2060  & 0.0107  & 0.2111  \\
    &  & 3 & -0.0017  & 0.0394  & 0.0154  & 0.1850  & 0.0177  & 0.1991  & 0.0144  & 0.2115  & 0.0088  & 0.2165  \\
    & 1600 & 2 & -0.0006  & 0.0192  & 0.0077  & 0.1544  & 0.0133  & 0.1680  & 0.0172  & 0.1860  & 0.0175  & 0.1955  \\
    &  & 3 & -0.0006  & 0.0192  & 0.0087  & 0.1531  & 0.0140  & 0.1705  & 0.0173  & 0.1906  & 0.0172  & 0.2008  \\
        \hline \hline
    \end{tabular}
\end{table}

\paragraph{Performance of the Wald test}

Next, we assess the finite sample performance of our test for the presence of spatial effects.
In this analysis, we use the same DGP as given above to generate the data, with a slight modification on $\alpha_0$ in DGP 2. 
Specifically,
\begin{align}
    \alpha_0(t,s) = \varrho \times  \text{PDF of $\mathcal{N}(t - s, 0.7^2)$},
\end{align}
where $\varrho \in \{0, 0.1, 0.2\}$.
The null hypothesis to be tested is $\mathbb{H}_0: \alpha_0(t,0.5) = 0$ for $t \in [0.1, 0.9]$. 
Thus, $\mathbb{H}_0$ holds true when $\varrho = 0$. 

In Table \ref{tab:test}, we present the rejection frequency over 1000 Monte Carlo repetitions at the 10\%, 5\%, and 1\% significance levels.
The results for $\varrho = 0$ demonstrate that the size of our test is reasonably well-controlled, with at most 1--2\% deviation from the nominal levels for most cases.
When the spatial effect is mild in magnitude ($\varrho = 0.1$), the estimator with a smaller penalty ($\lambda_c = 0.5$) is not sufficiently powerful to detect the effect probably owing to its large estimation variance.
However, as expected, the power of the test can be significantly improved by increasing the sample size.
In the case of a stronger spatial effect ($\varrho = 0.2$), all tests exhibit nearly perfect power property for all sample sizes.

\begin{table}[ht]
    \caption{Rejection frequency}
    \label{tab:test}
    \centering\small
    \begin{tabular}{lll|ccccccccc}
        \hline \hline
      &      &        & \multicolumn{3}{c}{$\varrho = 0$} & \multicolumn{3}{c}{$\varrho = 0.1$} & \multicolumn{3}{c}{$\varrho = 0.2$}\\
    \# knots & $n$    & $\lambda_c$ & 10\%  & 5\%   & 1\%   & 10\%  & 5\%     & 1\%   & 10\%  & 5\%     & 1\%   \\
    \hline
    2 & 400 & 0.5 & 0.085  & 0.048  & 0.023  & 0.315  & 0.231  & 0.128  & 0.985  & 0.974  & 0.900  \\
    &  & 1 & 0.080  & 0.054  & 0.021  & 0.762  & 0.686  & 0.497  & 1.000  & 1.000  & 0.998  \\
    &  & 2 & 0.074  & 0.047  & 0.018  & 0.971  & 0.956  & 0.908  & 1.000  & 1.000  & 1.000  \\
    &  & 3 & 0.068  & 0.047  & 0.022  & 0.982  & 0.977  & 0.959  & 1.000  & 1.000  & 1.000  \\
    & 1600 & 0.5 & 0.081  & 0.058  & 0.030  & 0.642  & 0.474  & 0.260  & 1.000  & 1.000  & 1.000  \\
    &  & 1 & 0.084  & 0.056  & 0.029  & 0.980  & 0.951  & 0.811  & 1.000  & 1.000  & 1.000  \\
    &  & 2 & 0.087  & 0.057  & 0.029  & 1.000  & 1.000  & 1.000  & 1.000  & 1.000  & 1.000  \\
    &  & 3 & 0.086  & 0.056  & 0.027  & 1.000  & 1.000  & 1.000  & 1.000  & 1.000  & 1.000  \\
    3 & 400 & 0.5 & 0.085  & 0.048  & 0.023  & 0.322  & 0.236  & 0.129  & 0.986  & 0.976  & 0.912  \\
    &  & 1 & 0.080  & 0.053  & 0.021  & 0.782  & 0.695  & 0.516  & 1.000  & 1.000  & 0.998  \\
    &  & 2 & 0.077  & 0.046  & 0.017  & 0.972  & 0.957  & 0.915  & 1.000  & 1.000  & 1.000  \\
    &  & 3 & 0.069  & 0.047  & 0.023  & 0.982  & 0.979  & 0.962  & 1.000  & 1.000  & 1.000  \\
    & 1600 & 0.5 & 0.077  & 0.054  & 0.028  & 0.624  & 0.457  & 0.243  & 1.000  & 1.000  & 1.000  \\
    &  & 1 & 0.081  & 0.055  & 0.028  & 0.979  & 0.952  & 0.817  & 1.000  & 1.000  & 1.000  \\
    &  & 2 & 0.085  & 0.055  & 0.028  & 1.000  & 1.000  & 1.000  & 1.000  & 1.000  & 1.000  \\
    &  & 3 & 0.081  & 0.053  & 0.027  & 1.000  & 1.000  & 1.000  & 1.000  & 1.000  & 1.000  \\
      \hline \hline
    \end{tabular}
\end{table}

\paragraph{Simulations under discretely observed outcome functions}

Finally, we evaluate the performance of our estimator and test when the entire shapes of the outcome functions are not perfectly observed but their values are discretely observable at finite points. 
The DGPs investigated here are identical to those used previously. 
To recover the entire functional form of the outcome function for each unit, we use the linear interpolation method in \eqref{eq:interp}.
For all units, we assume that $m$ pairs of points $\{(s_{i,j}, q_i(s_{i,j}))\}_{j=1}^m$ are observable, where $s_{i,j}$'s are uniformly randomly drawn from $[0,1]$, and $m$ is selected from two values $m \in \{15, 50\}$.

To save space, the simulation results are omitted here and provided in Tables \ref{tab:estimation_interp} and \ref{tab:test_interp} in Appendix \ref{sec:app_MC}.
From these tables, we can observe similar overall tendencies as those shown above.
An interesting finding is that, although increasing $m$ from 15 to 50 improves the RMSE for most cases, there are some situations in which the estimator with a smaller $m$ achieves an even slightly better RMSE.
Similarly, comparing the results when $m = 15$ with those when the outcome function is fully observable (those reported in Table \ref{tab:estimation}), the former occasionally exhibits smaller RMSE values.
We conjecture that these phenomena occurred because the linear interpolation ``smoothed out'' the original, potentially noisier, outcome function, leading to a reduction in estimation variance.
A similar discussion can be found in \cite{imaizumi2018pca} in a different but related context.
In contrast, regarding the size property of the Wald test, the linear interpolation seems to introduce certain distortions.
Unsurprisingly, these distortions can be somewhat mitigated if $m$ is large.
Except when $\lambda_c = 0.5$, the test exhibits a satisfactory power for both values of $m$.

\section{An Empirical Illustration: Age Distribution of Japanese Cities}\label{sec:empir}

In this section, we apply the proposed estimator and test to analyze the determinants of the age distribution of Japanese cities.
While this type of data has been regularly studied in the FDA literature (e.g., \citealp{delicado2011dimensionality, hron2016simplicial, Bigot2017geodesic}), there are few papers attempting a regression-based analysis.
In recent decades, many rural Japanese cities have been facing a serious aging population, prompting them to plan campaigns to encourage young people from urban areas to settle in their cities.
Thus, investigating the relationship between the regional socioeconomic characteristics and the age structure and the impact of neighborhood trend on it would be meaningful.

Our sample comprises all local municipalities (\textit{Shi-ku-cho-son}) in Japan.
The age distribution data for each city are taken from the 2020 Census.
For the covariates to explain the age distribution, we use the ratio of agricultural, forestry, and fishery workers, number of hospital beds per capita, number of childcare facilities per capita, unemployment rate, logarithm of annual commercial sales, and logarithm of average residential landprice.
All variables are as of the most recent year before 2020, and they are all publicly available.\footnote{
    Landprice data: \url{https://www.lic.or.jp/landinfo/research.html};
    all others: \url{https://www.e-stat.go.jp/en}.
    }
In addition to these, we include five regional dummies.\footnote{
    They correspond to each of the following: \textit{Hokkaido-Tohoku}, \textit{Chubu}, \textit{Kinki}, \textit{Chugoku-Shikoku}, and \textit{Kyushu-Okinawa} regions.
}
After excluding the observations with missing items, the analysis is performed on 1883 municipalities.
Table \ref{tab:desctab} in Appendix \ref{sec:app_empir} summarizes the detailed definitions of the variables used and their basic statistics.

Our age distribution data are not complete; we only have information on the population size at five-year intervals (0 -- 4 years old, 5 -- 9 years old, and so forth).
Therefore, when computing the quantile function for each city, we performed the linear interpolation as in \eqref{eq:interp}.
In Figure \ref{fig:Quants}, we depict the obtained quantile functions for 20 randomly selected cities from our dataset.
The figure clearly shows the existence of certain regional heterogeneity in age compositions except those close to the boundary points.

\begin{figure}[ht!]
    \centering
    \includegraphics[width = 13cm]{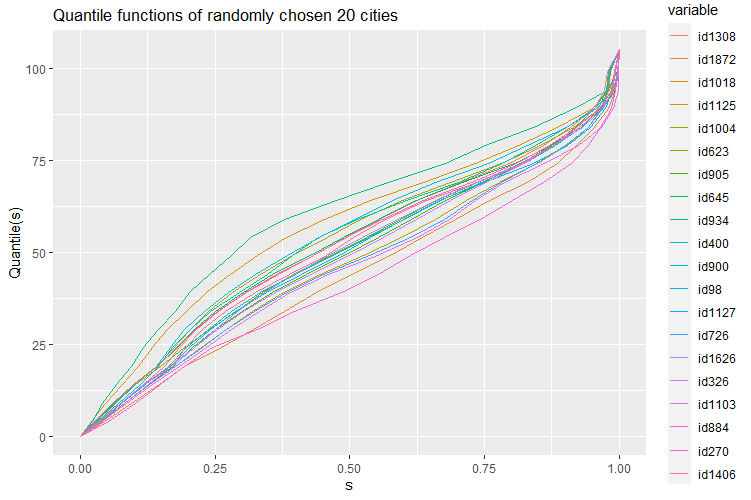}
    \caption{Age quantile functions of randomly selected cities}
    \label{fig:Quants}
\end{figure}

For estimation, we follow the same procedure as in the previous section with $K = 7$ (three inner knots) and $\lambda = 3n^{-3/5}$.
The integrals are replaced by summations over 399 equally-spaced grid points on $[0,1]$.
For the spatial weight, expecting that the impacts from demographic changes in large cities should be larger than those from small cities, we consider the following specification:
\begin{align}
    w_{i,j} = \frac{\bm{1}\{\text{$i$ and $j$ are adjacent}\}\sqrt{\text{Population}_j}}{\sum_{j \neq i}\bm{1}\{\text{$i$ and $j$ are adjacent}\}\sqrt{\text{Population}_j}}.
\end{align}
When city $i$ has no neighbors (e.g., islands), we set $w_{i,j} = 0$ for all $j$.
The estimation is performed on nine quantile values: $s = 0.1$, $0.2$, \ldots, $0.9$.

To save space, the estimated coefficients $\beta_0(s)$ are presented in Figure \ref{fig:beta} in Appendix \ref{sec:app_empir}.
Our major findings from the figure are as follows:
Interestingly, for all variables, the impacts on age distribution become prominent around the median ($s = 0.5$), suggesting the residential flexibility of this age group in response to the socioeconomic conditions of a city.
The variables considered as indicators of urbanness, such as the commercial sales and the landprice, exhibit negative effects, contributing to population rejuvenation.
As expected, cities with a higher rate of agricultural workers exhibit a significant aging trend.
Both the number of hospital beds and childcare facilities positively affect age distribution, although the underlying mechanisms are unclear.
It is important to recall that in this study, the covariates are treated as fixed, and their potential endogeneity is ignored.
To interpret the obtained results as a causal relationship, addressing the endogeneity issue more carefully would be necessary. 

The estimated spatial effect function is reported in Figure \ref{fig:alpha}.
The figure includes nine panels, each corresponding to different $s$-values.
In the figure, we also report the computed test statistic $(T_n - \mu_n)/\sqrt{v_n}$ for $\mathcal{I} = [0, 1]$.
From these results, we can observe the following:
First, the values of the test statistic suggest that the spatial effects exist significantly at all nine quantiles.
However, when quantile $t$ of the neighbor is close to either of the boundary points 0 or 1, almost no or weak spatial effects are present.
This seems reasonable considering Figure \ref{fig:Quants}; only a little regional heterogeneity in age distribution is present at these extreme quantiles.
The spatial interaction effects become particularly strong when both $t$ and $s$ are approximately 0.2 -- 0.5, which roughly correspond to the ages of the younger working population.
This result might suggest that the growth of economic activities and their spillovers play main roles in forming the spatial trend of age distribution.
Notably, the impacts from these lower-to-middle quantile values somewhat persist even for higher quantile ages.
This could be reflecting the indirect effects from positive interactions among younger age groups, rather than a direct causal relationship across different quantiles.

\begin{figure}[ht]
    \begin{minipage}[b]{0.48\linewidth}
    \centering
    \includegraphics[width=\textwidth]{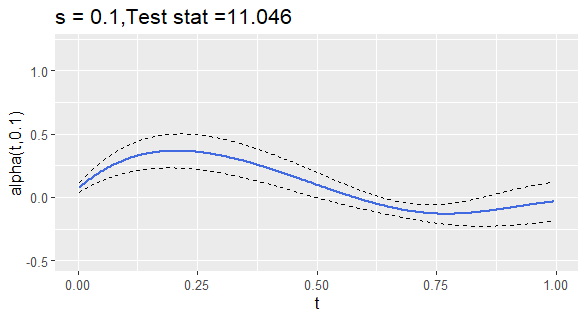}
    \end{minipage}
    \begin{minipage}[b]{0.48\linewidth}
    \centering
    \includegraphics[width=\textwidth]{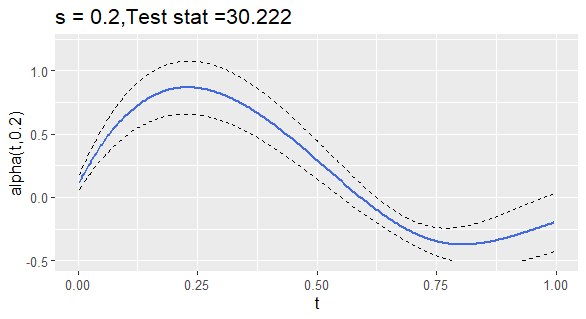}
    \end{minipage}

    \begin{minipage}[b]{0.48\linewidth}
    \centering
    \includegraphics[width=\textwidth]{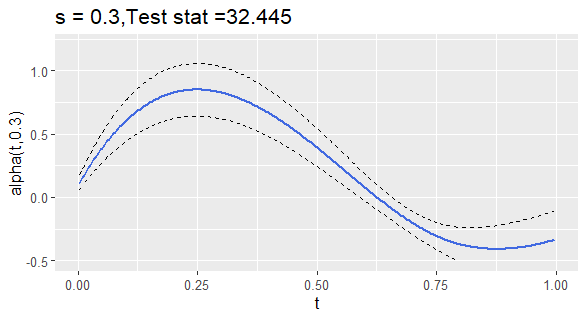}
    \end{minipage}
    \begin{minipage}[b]{0.48\linewidth}
    \centering
    \includegraphics[width=\textwidth]{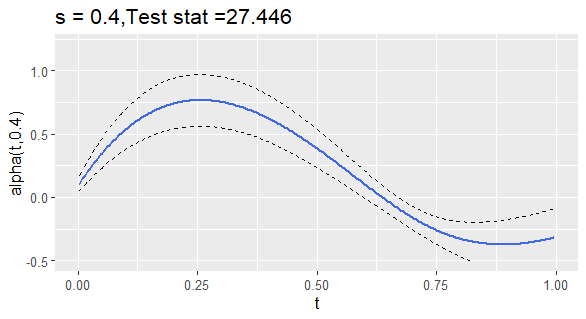}
    \end{minipage}

    \begin{minipage}[b]{0.48\linewidth}
    \centering
    \includegraphics[width=\textwidth]{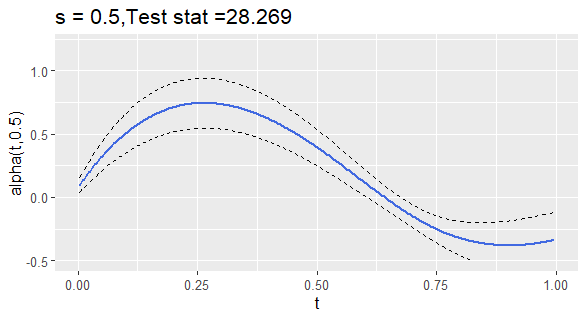}
    \end{minipage}
    \begin{minipage}[b]{0.48\linewidth}
    \centering
    \includegraphics[width=\textwidth]{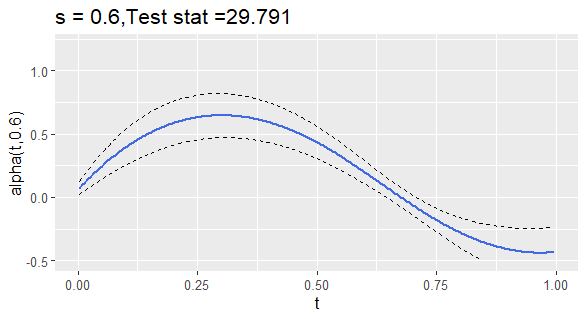}
    \end{minipage}

    \begin{minipage}[b]{0.48\linewidth}
    \centering
    \includegraphics[width=\textwidth]{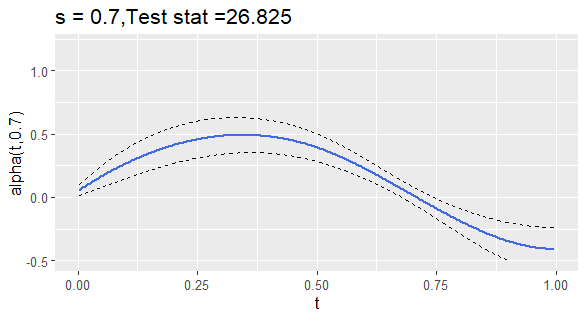}
    \end{minipage}
    \begin{minipage}[b]{0.48\linewidth}
    \centering
    \includegraphics[width=\textwidth]{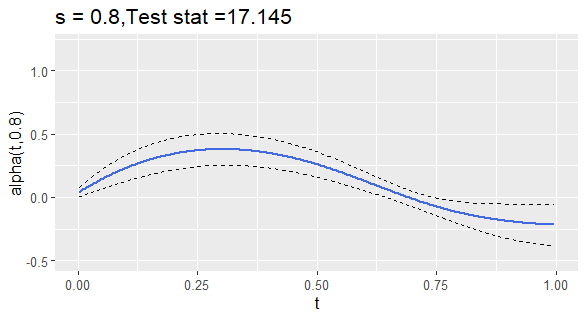}
    \end{minipage}

    \begin{minipage}[b]{0.48\linewidth}
    \centering
    \includegraphics[width=\textwidth]{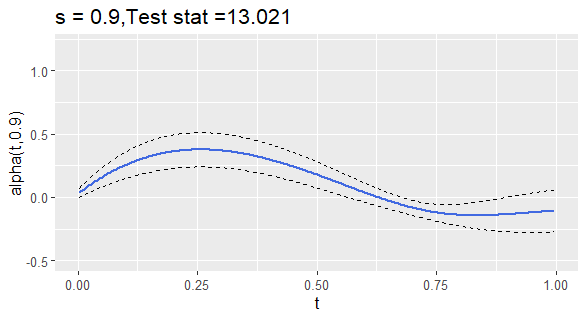}
    \end{minipage}

    \caption{Estimated spatial effect function}\label{fig:alpha}
    \centering
    \footnotesize
    In each panel, the solid line indicates the estimated $\alpha_0(\cdot, s)$, and the dotted lines indicate the 95\% confidence interval.
\end{figure}

\section{Conclusion}\label{sec:conclude}

In this study, we developed a new SAR model for analyzing spatial interactions among functional outcomes.
For estimation, we developed a penalized 2SLS estimator and established its asymptotic properties under certain regularity conditions.
Additionally, we developed a method for statistically testing the presence of spatial interactions.
To illustrate the effectiveness of our proposed method, we performed an empirical analysis focusing on the age distribution in Japanese cities.

An important potential limitation of our study is that, while we have treated the covariates as fixed variables to simplify the theoretical exposition, this approach essentially obscures the endogeneity issue underlying the covariates.
For instance, in our empirical analysis, it might be reasonable to consider the unemployment rate as an endogenous variable correlated with unobserved regional factors affecting the age distribution as well.
One way to mitigate the endogeneity issue would be to extend the current model to a panel data model with functional fixed effects, which should be a promising topic for future studies.
Another important future work is how to perform the estimation and inference when $\alpha_0$ and $\beta_0$ are not significant, leading to a weak IV problem.
We conjecture that the inclusion of additional moment conditions based on the distribution of the error term might be effective in addressing this issue, as in \cite{lee2007gmm}.
Several other issues that need future investigation include: data-driven selection of tuning parameters and developing methods for uniform inference on the functional parameters.

\clearpage

\bibliography{references.bib}

\clearpage


\appendix
\begin{center}
\Huge Appendix
\end{center}

\renewcommand{\thefigure}{\Alph{section}\arabic{figure}}
\setcounter{figure}{0}
\renewcommand{\thetable}{\Alph{section}\arabic{table}}
\setcounter{table}{0}
\renewcommand{\thefootnote}{\roman{footnote}}
\setcounter{footnote}{0}

\section{Proofs}\label{app:proof}

\begin{definition}\label{def:NED}
    Let $\bm{x} =\{ x_{n,i}: i \in \mathcal{D}_n; \; n \geq 1\}$ and $\bm{e} = \{e_{n,i} : i \in \mathcal{D}_n; \; n \geq 1\}$ be triangular arrays of random fields, where $x$ and $e$ are real-valued and general (possibly infinite-dimensional) random variables, respectively. 
    Then, the random field $\bm{x}$ is said to be $L^p$-near-epoch dependent (NED) on $\bm{e}$ if
    \begin{equation*}
    \left\| x_{n,i} - \bbE \left[ x_{n,i} \mid \mathcal{F}_{n,i}(\delta)\right] \right\|_{p} \leq c_{n,i} \varphi(\delta)
    \end{equation*}
    for an array of finite positive constants $\{c_{n,i} : i \in \mathcal{D}_n; \; n \geq 1\}$ and some function $\varphi(\delta) \geq 0$ with $\varphi(\delta) \to 0$ as $\delta\to \infty $, where $\mathcal{F}_{n,i}(\delta)$ is the $\sigma$-field generated by $\{ e_{n,j} : \Delta(i,j) \leq \delta \}$. 
    The $c_{n,i}$'s and $\varphi(\delta)$ are called the NED scaling factors and NED coefficient, respectively.
    The $\bm{x}$ is said to be uniformly $L^p$-NED on $\bm{e}$ if $c_{n,i}$ is uniformly bounded.
    If $\varphi(\delta) \lesssim \varrho^{\delta}$ for some $ 0 < \varrho <1$, then it is called geometrically $L^p$-NED. 
\end{definition}


\begin{lemma}\label{lem:NED1}
    Suppose that Assumptions \ref{as:inverse}, \ref{as:sample_space}, \ref{as:covariates}(i), and \ref{as:error}(i) hold.
    Then, for a given $s \in (0,1)$, $\{q_i(s) : i \in \mathcal{D}_n; \; n \geq 1\}$ is uniformly and geometrically $L^p$-NED on $\{\varepsilon_i: i \in \mathcal{D}_n; \; n \geq 1\}$.
\end{lemma}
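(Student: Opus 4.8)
The plan is to build on the Neumann series representation from Proposition \ref{prop:completeness}. Under Assumption \ref{as:inverse}, $Q = \sum_{\ell = 0}^\infty \mathcal{T}^\ell[X\beta_0 + \mathcal{E}]$, so for each $i$ and the fixed $s$ I would write
\begin{align}
    q_i(s) = d_i(s) + \sum_{\ell = 0}^\infty \{\mathcal{T}^\ell \mathcal{E}\}_i(s), \qquad d_i(s) \coloneqq \sum_{\ell = 0}^\infty \{\mathcal{T}^\ell[X\beta_0]\}_i(s),
\end{align}
where $d_i(s)$ is nonrandom (since $X$ and $\beta_0$ are nonrandom by Assumption \ref{as:covariates}(i)) and hence measurable with respect to every $\mathcal{F}_{n,i}(\delta)$. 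Expanding the operator gives $\{\mathcal{T}^\ell \mathcal{E}\}_i(s) = \sum_{j=1}^n (W_n^\ell)_{i,j}\, g_j^{(\ell)}(s)$, where $g_j^{(\ell)}(s) \coloneqq \int_0^1 \cdots \int_0^1 \varepsilon_j(t_1) \alpha_0(t_1,t_2)\cdots \alpha_0(t_\ell, s)\,\mathrm{d}t_1 \cdots \mathrm{d}t_\ell$ depends on the error field only through $\varepsilon_j$. I would then form the truncation $\tilde q_i^{(\delta)}(s)$ obtained by discarding every summand with $\Delta(i,j) > \delta$; this object is $\mathcal{F}_{n,i}(\delta)$-measurable, and since $\|Y - \bbE[Y \mid \mathcal{G}]\|_p \le 2\|Y - c\|_p$ for any $\mathcal{G}$-measurable $c$ (triangle inequality plus conditional Jensen), it suffices to control $\|q_i(s) - \tilde q_i^{(\delta)}(s)\|_p$.

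The decisive step is to link the order $\ell$ of the spatial lag to the spatial radius $\delta$. Because $\Delta$ is a coordinate-wise maximum and therefore satisfies the triangle inequality, and because $w_{k,k'} = 0$ whenever $\Delta(k,k') > \bar\Delta$ by Assumption \ref{as:sample_space}(ii), a nonzero entry $(W_n^\ell)_{i,j}$ requires a chain $i = k_0, k_1, \ldots, k_\ell = j$ with every consecutive coordinate gap at most $\bar\Delta$, which forces $\Delta(i,j) \le \ell \bar\Delta$. Hence $(W_n^\ell)_{i,j} = 0$ whenever $\Delta(i,j) > \ell\bar\Delta$, so the discarded remote terms can appear only for $\ell > \delta/\bar\Delta$, i.e.
\begin{align}
    q_i(s) - \tilde q_i^{(\delta)}(s) = \sum_{\ell > \delta/\bar\Delta}\ \sum_{j:\, \Delta(i,j) > \delta} (W_n^\ell)_{i,j}\, g_j^{(\ell)}(s).
\end{align}

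The remainder is routine norm bookkeeping. Using $|\alpha_0| \le \bar\alpha_0$ and integrating the bounded kernels over the unit interval gives $|g_j^{(\ell)}(s)| \le \bar\alpha_0^\ell \int_0^1 |\varepsilon_j(t)|\,\mathrm{d}t \le \bar\alpha_0^\ell \|\varepsilon_j\|_{L^p}$ by Jensen's inequality on $[0,1]$, so that $\|g_j^{(\ell)}(s)\|_p \le \bar\alpha_0^\ell C_\varepsilon$ with $C_\varepsilon \coloneqq \sup_{j,n} \big\| \|\varepsilon_j\|_{L^p} \big\|_p$, finite under the (uniform) integrability of Assumption \ref{as:error}(i). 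Applying Minkowski's inequality, the submultiplicativity of $\|\cdot\|_\infty$ (giving $\sum_j |(W_n^\ell)_{i,j}| \le \|W_n\|_\infty^\ell$), and summing the geometric series in $\gamma \coloneqq \bar\alpha_0 \|W_n\|_\infty < 1$ would yield
\begin{align}
    \|q_i(s) - \tilde q_i^{(\delta)}(s)\|_p \le C_\varepsilon \sum_{\ell > \delta/\bar\Delta} \gamma^\ell \lesssim \frac{C_\varepsilon}{1 - \gamma}\, \varrho^\delta, \qquad \varrho \coloneqq \gamma^{1/\bar\Delta} \in (0,1).
\end{align}
Combining with the factor-of-two bound gives $\|q_i(s) - \bbE[q_i(s) \mid \mathcal{F}_{n,i}(\delta)]\|_p \le c\, \varrho^\delta$ with $c$ independent of $i$ and $n$, which is precisely uniform, geometric $L^p$-NED with $\varphi(\delta) = \varrho^\delta$.

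I expect the main obstacle to be the geometric implication $(W_n^\ell)_{i,j} = 0$ for $\Delta(i,j) > \ell\bar\Delta$, since this is exactly what converts the operator-norm decay of the Neumann series into decay in the spatial metric $\Delta$; everything else reduces to kernel and matrix-norm estimates, modulo reading Assumption \ref{as:error}(i) as a uniform $L^p$ moment bound so that $C_\varepsilon$, and hence the NED scaling factor, is bounded uniformly in $i$ and $n$.
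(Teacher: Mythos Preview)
Your argument is correct and reaches the same conclusion, but it takes a different route from the paper's proof. The paper uses a coupling construction in the style of \cite{jenish2012nonparametric}: it builds an independent copy $\mathcal{E}^{(\delta)}$ of the error field that agrees with $\mathcal{E}$ inside the ball $\{j:\Delta(i,j)\le\delta\}$, defines $q_i^{(\delta)}$ as the model solution driven by $\mathcal{E}^{(\delta)}$, shows $\bbE[q_i(s)\mid\mathcal{F}_{n,i}(\delta)]=\bbE[q_i^{(\delta)}(s)\mid\mathcal{F}_{n,i}(\delta)]$, and then iterates the SAR equation itself to obtain $\|q_i(s)-q_i^{(\delta)}(s)\|_p\le C_p\,\varrho^{\lfloor\delta/\bar\Delta\rfloor+1}$. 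You instead work directly with the Neumann series, truncate spatially, and extract the geometric rate from the elementary support property $(W_n^\ell)_{i,j}=0$ for $\Delta(i,j)>\ell\bar\Delta$ combined with the operator-norm bound $\sum_j|(W_n^\ell)_{i,j}|\le\|W_n\|_\infty^\ell$.

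Your approach is more elementary for this model: because $q_i$ is \emph{linear} in $\mathcal{E}$, the Polish-space coupling machinery and the Dudley lemma are not needed, and your truncated series $\tilde q_i^{(\delta)}(s)$ is already $\mathcal{F}_{n,i}(\delta)$-measurable, so the factor-of-two bound $\|Y-\bbE[Y\mid\mathcal{G}]\|_p\le 2\|Y-c\|_p$ suffices. What the paper's coupling buys is generality (it would go through unchanged for nonlinear $f_i$) and a slightly sharper constant (factor $1$ rather than $2$); neither matters here. Your caveat about needing a uniform moment bound $C_\varepsilon=\sup_{j,n}\big\|\|\varepsilon_j\|_{L^p}\big\|_p<\infty$ to get \emph{uniform} NED scaling factors is well placed; the paper has the same implicit requirement, packaged as $C_p\coloneqq 2\sup_{i,n}\int_0^1\|q_i(t)\|_p\,\mathrm{d}t$, which is finite precisely when a uniform error moment bound holds (together with the uniform bound on $x_i$ from Assumption~\ref{as:covariates}(i)).
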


\begin{proof}
    We prove the lemma in a similar manner to \cite{jenish2012nonparametric} and \cite{hoshino2022sieve}.
    First, note that $Q$ is uniquely determined in $\mathcal{H}_{n,\infty}$ as $Q = (\text{Id} - \mathcal{T})^{-1}X\beta_0 + (\text{Id} - \mathcal{T})^{-1}\mathcal{E}$ under Assumption \ref{as:inverse}.
    We denote the $i$-th element of $(\text{Id} - \mathcal{T})^{-1}X\beta_0 + (\text{Id} - \mathcal{T})^{-1}[\cdot]$ as $f_i[\cdot]$, such that $q_i = f_i[\mathcal{E}]$ holds for each $i = 1, \ldots, n$.
    
    Define 
    \[
        \mathcal{E}_{1,i}^{(\delta)} \coloneqq \{\varepsilon_j\}_{j : \Delta(i,j) \le \delta},
        \qquad
        \mathcal{E}_{2,i}^{(\delta)} \coloneqq \{\varepsilon_j\}_{j : \Delta(i,j) > \delta}
    \]
    for some $\delta > 0$.
    Since $L^p(0,1)$ is separable for $1 \le p < \infty$, under Assumption \ref{as:error}(i), both $\mathcal{E}_{1,i}^{(\delta)}$ and $\mathcal{E}_{2,i}^{(\delta)}$ are Polish space-valued random elements in $(\mathcal{H}_{|\{j : \Delta(i,j) \le \delta\}|,p}, ||\cdot||_{\infty,p})$ and $(\mathcal{H}_{|\{j : \Delta(i,j) > \delta\}|,p}, ||\cdot||_{\infty, p})$, respectively (recall: $\mathcal{H}_{n,p} \coloneqq \{H = (h_1, \ldots, h_n) : h_i \in L^p(0,1) \; \text{for all} \; i\}$). 
    Then, by Lemma 2.11 of \cite{dudley1983invariance} (see also Lemma A.1 of \cite{jenish2012nonparametric}), a function $\chi$ exists such that $(\mathcal{E}_{1,i}^{(\delta)}, \chi(U, \mathcal{E}_{1,i}^{(\delta)}))$ has the same law as that of $(\mathcal{E}_{1,i}^{(\delta)}, \mathcal{E}_{2,i}^{(\delta)})$, which is an appropriate rearrangement of $\mathcal{E}$, where $U$ is a random variable uniformly distributed on $[0,1]$ and independent of $\mathcal{E}_{1,i}^{(\delta)}$.
    
    Now, write $f_i[\mathcal{E}_{1,i}^{(\delta)}, \mathcal{E}_{2,i}^{(\delta)}] \equiv f_i[\mathcal{E}]$, and define $q^{(\delta)}_i \coloneqq f_i[\mathcal{E}_{1,i}^{(\delta)}, \chi(U, \mathcal{E}_{1,i}^{(\delta)})] \equiv f_i[\mathcal{E}^{(\delta)}]$ with $\mathcal{E}^{(\delta)} = (\varepsilon_1^{(\delta)}, \ldots, \varepsilon_n^{(\delta)})^\top$; specifically,
    \begin{align}
        q_i^{(\delta)}(s)
        & = \left\{ (\text{Id} - \mathcal{T})^{-1}[X\beta_0 + \mathcal{E}^{(\delta)}](s)\right\}_i \\
        & = \sum_{j = 1}^n w_{i,j}\int_0^1 q_j^{(\delta)}(t) \alpha_0(t,s) \text{d}t + x_i^\top \beta_0(s) + \varepsilon_i^{(\delta)}(s).
    \end{align}
    By construction, we have 
    \begin{align*}
    \bbE [q_i(s) \mid \mathcal{F}_{n,i}(\delta)] 
    & = \bbE \left[f_i[\mathcal{E}_{1,i}^{(\delta)}, \mathcal{E}_{2,i}^{(\delta)}](s) \mid \mathcal{E}_{1,i}^{(\delta)} \right] \\
    & = \bbE \left[f_i[\mathcal{E}_{1,i}^{(\delta)}, \chi(U, \mathcal{E}_{1,i}^{(\delta)})](s) \mid \mathcal{E}_{1,i}^{(\delta)}\right]
    = \bbE [q^{(\delta)}_i(s) \mid \mathcal{F}_{n,i}(\delta)],
    \end{align*}
    where $\mathcal{F}_{n,i}(\delta)$ is the $\sigma$-field generated by $\{ \varepsilon_j : \Delta(i,j) \le \delta \}$.

    Here, suppose that $0 < \delta < \bar \Delta$, where $\bar \Delta$ is as provided in Assumption \ref{as:sample_space}(ii). 
    Then, because at least $i$'s own $\varepsilon_i$ is included in $\mathcal{E}_{1,i}^{(\delta)}$, we have  $\varepsilon_i \equiv \varepsilon_i^{(\delta)}$, and hence 
    \begin{align}
        q_i(s) - q^{(\delta)}_i(s)  = \sum_{j=1}^n w_{i,j} \int_0^1 [q_j(t) - q^{(\delta)}_j(t)]\alpha_0(t,s)\text{d}t
    \end{align}
    holds.
    By Minkowski's inequality,
    \begin{align*}
    \left\| q_i(s) - q^{(\delta)}_i(s) \right\|_p 
    & = \left\| \sum_{j=1}^n w_{i,j} \int_0^1 [q_j(t) - q^{(\delta)}_j(t)]\alpha_0(t,s)\text{d}t \right\|_p  \\
    & \le \sum_{j = 1}^n | w_{i,j} | \cdot \int_0^1 \left\| q_j(t) - q^{(\delta)}_j(t)\right\|_p | \alpha_0(t,s)| \text{d}t \\
    & \le 2 \bar \alpha_0 ||W_n||_\infty \max_{1 \le j \le n} \int_0^1 \left\| q_j(t) \right\|_p \text{d}t \le C_p \cdot \varrho,
    \end{align*}
    where $C_p \coloneqq 2 \sup_{1 \le i \le n ; \: n \ge 1} \int_0^1 || q_i(t)||_p \text{d}t$, and $\varrho \coloneqq \bar \alpha_0 ||W_n||_\infty$.
    Similarly, when $\bar \Delta \leq \delta < 2 \bar \Delta$ holds, noting now that under Assumption \ref{as:sample_space}(ii) we have $\varepsilon_j \equiv \varepsilon^{(\delta)}_j$ for all $j$'s who are direct neighbors of $i$,
    \begin{align*}
    \begin{split}
    \left\| q_i(s) - q^{(\delta)}_i(s) \right\|_p
    & \le \bar \alpha_0 \sum_{j = 1}^n | w_{i,j} | \cdot \int_0^1 \left\| q_j(t_1) - q^{(\delta)}_j(t_1)\right\|_p \text{d}t_1 \\
    & = \bar \alpha_0 \sum_{j = 1}^n | w_{i,j} | \cdot \int_0^1 \left\| \sum_{k=1}^n w_{j,k} \int_0^1 [q_k(t_2) - q^{(\delta)}_k(t_2)]\alpha_0(t_2,t_1)\text{d}t_2 \right\|_p \text{d}t_1 \\
    & = \bar \alpha_0^2 \sum_{j = 1}^n | w_{i,j} | \sum_{k = 1}^n | w_{j,k} | \cdot \int_0^1  \int_0^1 \left\|  q_k(t_2) - q^{(\delta)}_k(t_2)\right\|_p \text{d}t_2 \text{d}t_1  \le C_p \cdot \varrho^2.
    \end{split}
    \end{align*}
    Applying the same argument recursively, for $m \bar \Delta \leq \delta < (m + 1)\bar \Delta$ such that $\varepsilon_j \equiv \varepsilon^{(\delta)}_j$ for all $j$'s in the $m$-th order neighborhood of $i$, we obtain
    \begin{align}\label{eq:qdiff}
    \left\| q_i(s) - q^{(\delta)}_i(s) \right\|_p  \leq C_p \cdot \varrho^{\lfloor \delta / \bar \Delta \rfloor + 1}.
    \end{align}
    Finally, by Jensen's inequality and \eqref{eq:qdiff},
    \begin{align}\label{eq:y_NED}
    \begin{split}
    \left\| q_i(s) - \bbE [q_i(s) \mid \mathcal{F}_{n,i}(\delta)] \right\|_p
    & = \left\| \int_0^1 \left[ f_i[\mathcal{E}_{1,i}^{(\delta)}, \mathcal{E}_{2,i}^{(\delta)}](s) - f_i[\mathcal{E}_{1,i}^{(\delta)}, \chi(u, \mathcal{E}_{1,i}^{(\delta)})](s) \right] \mathrm{d}u \right\|_p \\
    & \leq \left\{ \bbE \int_0^1 \left| f_i[\mathcal{E}_{1,i}^{(\delta)}, \mathcal{E}_{2,i}^{(\delta)}](s) - f_i[\mathcal{E}_{1,i}^{(\delta)}, \chi(u, \mathcal{E}_{1,i}^{(\delta)})](s) \right|^p \mathrm{d}u \right\}^{1/p} \\
    & = \left\{ \bbE \left| f_i[\mathcal{E}_{1,i}^{(\delta)}, \mathcal{E}_{2,i}^{(\delta)}](s) - f_i[\mathcal{E}_{1,i}^{(\delta)}, \chi(U, \mathcal{E}_{1,i}^{(\delta)})](s) \right|^p \right\}^{1/p} \\
    & = \left\| f_i[\mathcal{E}_{1,i}^{(\delta)}, \mathcal{E}_{2,i}^{(\delta)}](s) - f_i[\mathcal{E}_{1,i}^{(\delta)}, \chi(U, \mathcal{E}_{1,i}^{(\delta)})](s)\right\|_p\\
    & = \left\| q_i(s) - q^{(\delta)}_i(s) \right\|_p \le C_p \cdot \varrho^{\lfloor \delta / \bar \Delta \rfloor + 1} \to 0
    \end{split}
    \end{align}
    as $\delta \to \infty$ by Assumption \ref{as:inverse}.
    This completes the proof.
\end{proof}


\begin{lemma}\label{lem:cov}
    Suppose that $\{x_{n,i}: i \in \mathcal{D}_n; \; n \geq 1\}$ is geometrically $L^2$-NED on $\{\varepsilon_i: i \in \mathcal{D}_n; \; n \geq 1\}$.
    Then, under Assumption \ref{as:error}(ii), $\left| \text{Cov}\left(x_{n,i}, x_{n,j} \right) \right| \le 32 (\max_{1 \le i \le n}||x_{n,i}||_2)^2 \varphi(\Delta(i,j) / 3)$ for some geometric NED coefficient $\varphi$.
\end{lemma}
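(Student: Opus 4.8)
The plan is to control $\mathrm{Cov}(x_{n,i}, x_{n,j})$ by replacing each variable with a conditional-expectation approximation that depends only on the errors in a small neighborhood, and then to exploit the independence of the errors. Write $M \coloneqq \max_{1\le i \le n}\|x_{n,i}\|_2$ and, for the fixed pair $(i,j)$, set $\delta \coloneqq \Delta(i,j)/3$. Let $g_i \coloneqq \mathbb{E}[x_{n,i}\mid\mathcal{F}_{n,i}(\delta)]$ and $g_j \coloneqq \mathbb{E}[x_{n,j}\mid\mathcal{F}_{n,j}(\delta)]$, where $\mathcal{F}_{n,i}(\delta)$ is the $\sigma$-field generated by $\{\varepsilon_k : \Delta(i,k)\le\delta\}$ as in Definition \ref{def:NED}. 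The starting point is the algebraic identity $\mathrm{Cov}(x_{n,i}, x_{n,j}) = \mathrm{Cov}(x_{n,i}-g_i, x_{n,j}) + \mathrm{Cov}(g_i, x_{n,j}-g_j) + \mathrm{Cov}(g_i, g_j)$, which isolates a coupling remainder (the first two terms) from the covariance of the two approximations (the last term).

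The key step is to show that the last term vanishes. Because $\delta = \Delta(i,j)/3$, the two index sets $\{k:\Delta(i,k)\le\delta\}$ and $\{k:\Delta(j,k)\le\delta\}$ are disjoint: a common index $k$ would force $\Delta(i,j)\le\Delta(i,k)+\Delta(k,j)\le 2\delta = 2\Delta(i,j)/3 < \Delta(i,j)$ by the triangle inequality, a contradiction. Hence $g_i$ and $g_j$ are measurable functions of disjoint sub-collections of $\{\varepsilon_k\}$, and under the independence in Assumption \ref{as:error}(ii) they are independent, so $\mathrm{Cov}(g_i,g_j)=0$. For the two remaining terms I would apply the Cauchy--Schwarz inequality together with the $L^2$-NED bounds $\|x_{n,i}-g_i\|_2 \le c_{n,i}\varphi(\delta)$ and $\|x_{n,j}-g_j\|_2 \le c_{n,j}\varphi(\delta)$ from Definition \ref{def:NED}, and the contraction property $\|g_i\|_2\le\|x_{n,i}\|_2\le M$ of conditional expectation, which gives $|\mathrm{Cov}(x_{n,i}-g_i, x_{n,j})|\le c_{n,i}\varphi(\delta)M$ and $|\mathrm{Cov}(g_i, x_{n,j}-g_j)|\le M c_{n,j}\varphi(\delta)$.

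Collecting these bounds yields $|\mathrm{Cov}(x_{n,i},x_{n,j})|\le (c_{n,i}+c_{n,j})M\varphi(\delta)$; since the field is uniformly NED, the scaling factors are uniformly bounded, and after rescaling the geometric coefficient (which is permitted, as the statement only asserts the bound for \emph{some} geometric NED coefficient $\varphi$) these factors can be absorbed to reach the stated constant and the final bound $32 M^2 \varphi(\Delta(i,j)/3)$. I expect the main obstacle to be the independence step: one must argue carefully that $g_i$ and $g_j$ really are functions of disjoint error blocks — this is exactly where the choice $\delta=\Delta(i,j)/3$ and Assumption \ref{as:error}(ii) enter — since everything else reduces to routine Cauchy--Schwarz estimates and constant bookkeeping. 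The factor $1/3$, rather than the $1/2$ that bare disjointness would permit, leaves a separating buffer of width $\delta$ between the two neighborhoods, which is the standard device allowing the same argument to extend to errors that are only weakly dependent rather than exactly independent.
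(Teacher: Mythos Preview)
Your argument is correct and matches the paper's proof in all essentials: the paper likewise sets $\delta=\Delta(i,j)/3$, kills $\mathrm{Cov}(g_i,g_j)$ by independence of the two disjoint error blocks under Assumption~\ref{as:error}(ii), and bounds the remaining pieces via Cauchy--Schwarz together with the NED bound and the contraction $\|g_i\|_2\le\|x_{n,i}\|_2$. The only cosmetic difference is that the paper expands $\mathrm{Cov}(g_i+r_i,\,g_j+r_j)$ into four symmetric terms rather than your three-term telescope, and it writes the NED scaling factor directly as $2M$, which is precisely your absorption step.
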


\begin{proof}
    Decompose $x_{n,i} = x_{n,1,i}^{(\delta)} + x_{n,2,i}^{(\delta)}$, where
    \begin{align}
      x_{n,1,i}^{(\delta)} \coloneqq \bbE\left[ x_{n,i} \mid \mathcal{F}_{n,i}(\delta) \right], 
      \;\; \text{and} \;\; x_{n,2,i}^{(\delta)} \coloneqq x_{n,i} - \bbE\left[ x_{n,i} \mid \mathcal{F}_{n,i}(\delta)\right].
    \end{align}
    Then, for each pair $x_{n,i}$ and $x_{n,j}$, setting: $\delta=\Delta(i,j)/3$,
    \begin{align}
      \left| \mathrm{Cov}\left( x_{n,i},x_{n,j}\right) \right|  
      & = \left| \mathrm{Cov}\left( x_{n,1,i}^{(\Delta(i,j)/3)} + x_{n,2,i}^{(\Delta(i,j)/3)}, x_{n,1,j}^{(\Delta(i,j)/3)} + x_{n,2,j}^{(\Delta(i,j)/3)} \right) \right| \\
      & \le \left| \mathrm{Cov}\left( x_{n,1,i}^{(\Delta(i,j)/3)}, x_{n,1,j}^{(\Delta(i,j)/3)}\right) \right| + \left| \mathrm{Cov}\left( x_{n,1,i}^{(\Delta(i,j)/3)},x_{n,2,j}^{(\Delta(i,j)/3)}\right) \right| \\
      & \quad + \left| \mathrm{Cov}\left( x_{n,2,i}^{(\Delta(i,j)/3)}, x_{n,1,j}^{(\Delta(i,j)/3)}\right) \right| + \left| \mathrm{Cov}\left( x_{n,2,i}^{(\Delta(i,j)/3)}, x_{n,2,j}^{(\Delta(i,j)/3)}\right) \right|.
    \end{align}
    Since $\{\varepsilon_k : \Delta(i,k) \le \Delta(i,j)/3\}$ and $\{\varepsilon_k : \Delta(j,k) \le \Delta(i,j)/3\}$ do not overlap, the first term on the right-hand side is zero by Assumption \ref{as:error}(ii). 
    Note that, by Jensen's inequality, $|| x_{n,1,i}^{(\Delta(i,j)/3)}||_2 \le || x_{n,i} ||_2$.
    In addition, $|| x_{n,2,i}^{(\Delta(i,j)/3)}||_2 \le 2 || x_{n,i} ||_2$.
    Then, since $\{x_{n,i}\}$ is assumed to be $L^2$-NED, it holds that
    \begin{align}
        \left\| x_{n,2,i}^{(\Delta(i,j)/3)} \right\|_2
        =  \left\| x_{n,i} - \bbE\left[ x_{n,i} \mid \mathcal{F}_{n,i}(\Delta(i,j)/3)\right] \right\|_2 
        \le 2 \max_{1 \le i \le n} ||x_{n,i}||_2 \varphi(\Delta(i,j) / 3).
    \end{align}
    Hence, Cauchy--Schwarz inequality gives
    \begin{align}
      \left| \mathrm{Cov}\left( x_{n,1,i}^{(\Delta(i,j)/3)}, x_{n,2,j}^{(\Delta(i,j)/3)}\right) \right| 
      & \le 4 \left\| x_{n,1,i}^{(\Delta(i,j)/3)} \right\|_2 \left\| x_{n,2,j}^{(\Delta(i,j)/3)}\right\|_2  \le 8 \left( \max_{1 \le i \le n} ||x_{n,i}||_2 \right)^2 \varphi(\Delta(i,j)/3).
    \end{align}
    Similarly, 
    \begin{align}
      \left| \mathrm{Cov}\left( x_{n,2,i}^{(\Delta(i,j)/3)}, x_{n,2,j}^{(\Delta(i,j)/3)}\right) \right| 
      & \le 4\left\| x_{n,2,i}^{(\Delta(i,j)/3)}\right\|_2\left\| x_{n,2,j}^{(\Delta(i,j)/3)}\right\|_2 \le 16 \left( \max_{1 \le i \le n} ||x_{n,i}||_2 \right)^2  \varphi(\Delta(i,j)/3).
    \end{align}
    This completes the proof.
\end{proof}


\begin{lemma}\label{lem:matLLN1}
    Suppose that Assumptions \ref{as:inverse}, \ref{as:sample_space}, \ref{as:covariates}(i), \ref{as:error}(i)--(ii), and \ref{as:base}(i) hold.
    Then,
    \begin{align}
    \text{(i)} \quad \left\|  Z^\top \bar R /n  - \bbE(Z^\top \bar R / n) \right\| \lesssim_p \frac{\sqrt{KL}}{\sqrt{n}}.
    \end{align}
    If Assumption \ref{as:interp} additionally holds, we have
    \begin{align}
        \text{(ii)} \quad \left\|  Z^\top \hat R /n  - \bbE(Z^\top \bar R / n) \right\| \lesssim_p  \frac{\sqrt{KL}}{\sqrt{n}} + \kappa^\xi \sqrt{KL}.
    \end{align}
\end{lemma}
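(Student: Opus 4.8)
The plan is to establish part (i) through a variance bound followed by Markov's inequality, and then to obtain part (ii) by controlling separately the additional deterministic error induced by interpolation. Since $\{z_i\}$ is non-stochastic by Assumption \ref{as:covariates}(i), the $(l,k)$ entry of $Z^\top\bar R/n - \bbE(Z^\top\bar R/n)$ equals $n^{-1}\sum_{i=1}^n z_{i,l}(\bar r_{i,k} - \bbE\bar r_{i,k})$, so that
\[
    \bbE\left\| Z^\top\bar R/n - \bbE(Z^\top\bar R/n) \right\|^2 = \sum_{l,k}\frac{1}{n^2}\sum_{i,i'} z_{i,l}z_{i',l}\,\Cov(\bar r_{i,k},\bar r_{i',k}).
\]
It therefore suffices to bound $\Cov(\bar r_{i,k},\bar r_{i',k})$ and then sum the bound over $i,i',l,k$.

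The key --- and most delicate --- step is to show that for each fixed $k$ the field $\{\bar r_{i,k}\}$ is uniformly and geometrically $L^2$-NED on $\{\varepsilon_i\}$ with scaling factors of order $||\phi_k||_{L^2}$. Starting from the NED bound for $q_j(t)$ in Lemma \ref{lem:NED1}, which holds uniformly in $t$, I would move the conditional expectation inside the integral defining $r_{j,k} = \int_0^1 q_j(t)\phi_k(t)\,\text{d}t$ and apply Minkowski's integral inequality to obtain $|| r_{j,k} - \bbE[r_{j,k}\mid\mathcal{F}_{n,j}(\delta)] ||_2 \lesssim ||\phi_k||_{L^2}\,\varrho^{\lfloor\delta/\bar\Delta\rfloor}$, using $||\phi_k||_{L^1}\le||\phi_k||_{L^2}$ and Assumption \ref{as:base}(i). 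Because $w_{i,j}$ vanishes outside a $\bar\Delta$-neighborhood (Assumption \ref{as:sample_space}), $\bar r_{i,k}=\sum_j w_{i,j}r_{j,k}$ is a localized weighted sum; a $\sigma$-field containment argument (shifting $\delta$ by $\bar\Delta$) together with $||W_n||_\infty\lesssim 1$ then transfers the NED property to $\bar r_{i,k}$ at the same geometric rate, while a parallel Cauchy--Schwarz bound gives $\max_i|| \bar r_{i,k} ||_2\lesssim ||\phi_k||_{L^2}$. Lemma \ref{lem:cov} then yields $|\Cov(\bar r_{i,k},\bar r_{i',k})|\lesssim ||\phi_k||_{L^2}^2\,\varphi(\Delta(i,i')/3)$ for a geometric $\varphi$.

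Substituting this covariance bound and using $|z_{i,l}|\lesssim 1$, the summation factorizes. The spatial sum satisfies $n^{-2}\sum_{i,i'}\varphi(\Delta(i,i')/3)\lesssim n^{-1}$, which follows from the geometric decay of $\varphi$ combined with the polynomial-in-$\delta$ growth of the number of lattice points within distance $\delta$ implied by Assumption \ref{as:sample_space}(i) (cf. \cite{jenish2012spatial}); the sum over $k$ obeys $\sum_{k=1}^K||\phi_k||_{L^2}^2 = \text{tr}(\int_0^1\bm{\phi}_K(t)\bm{\phi}_K(t)^\top\,\text{d}t)\le K\,\rho_{\max}(\int_0^1\bm{\phi}_K(t)\bm{\phi}_K(t)^\top\,\text{d}t)\lesssim K$ by Assumption \ref{as:base}(iii); and the sum over the $O(L)$ columns of $Z$ contributes a factor $L$. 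Hence $\bbE|| Z^\top\bar R/n - \bbE(Z^\top\bar R/n)||^2\lesssim KL/n$, and Markov's inequality delivers part (i).

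For part (ii) I would decompose $Z^\top\hat R/n - \bbE(Z^\top\bar R/n) = (Z^\top\bar R/n - \bbE(Z^\top\bar R/n)) + Z^\top(\hat R-\bar R)/n$ and bound the second, deterministic, term directly. Under Assumption \ref{as:interp} the interpolation error obeys $\sup_{i,t}|\hat q_i(t)-q_i(t)|\lesssim\kappa^\xi$, so $|\hat r_{i,k}-\bar r_{i,k}|\le\sum_j|w_{i,j}|\int_0^1|\hat q_j(t)-q_j(t)|\,|\phi_k(t)|\,\text{d}t\lesssim\kappa^\xi||W_n||_\infty||\phi_k||_{L^1}\lesssim\kappa^\xi||\phi_k||_{L^2}$; together with $|z_{i,l}|\lesssim1$ and $\sum_k||\phi_k||_{L^2}^2\lesssim K$ this yields $|| Z^\top(\hat R-\bar R)/n||^2\lesssim\kappa^{2\xi}KL$, and the triangle inequality combined with part (i) completes the proof. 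The principal obstacle is the NED-inheritance step: the approximation must be carried out uniformly in the basis index $k$ so that the scaling factor is of order exactly $||\phi_k||_{L^2}$, since it is precisely this uniformity that lets Assumption \ref{as:base}(iii) collapse the aggregation over $k$ into the sharp $\sqrt{K}$ factor rather than a larger one.
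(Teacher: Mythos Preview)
Your proposal is correct and follows essentially the same route as the paper's proof: expand the expected squared Frobenius norm entrywise, establish that $\{\bar r_{i,k}\}$ is uniformly geometrically $L^2$-NED on $\{\varepsilon_i\}$ by pushing the conditional expectation through the integral defining $r_{j,k}$ and using Lemma~\ref{lem:NED1}, apply Lemma~\ref{lem:cov} to bound the covariances, sum the geometric decay over the lattice via the Jenish--Prucha counting argument, and conclude by Markov's inequality; part~(ii) then follows from the triangle inequality together with the uniform interpolation bound $|\hat q_i(t)-q_i(t)|\lesssim\kappa^\xi$. The only cosmetic difference is that you track the scaling factor $||\phi_k||_{L^2}$ explicitly and close the sum over $k$ via Assumption~\ref{as:base}(iii), whereas the paper simply treats $\sup_k||\phi_k||_{L^2}$ as $O(1)$ --- which is in fact the same condition, so your bookkeeping is slightly more transparent.
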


\begin{proof}
    (i) Write $z_i = (z_{1,i}^\top, x_i^\top)^\top = (z_{i,1}, \ldots, z_{i,L+d_x})^\top$.
    By Assumption \ref{as:covariates}(i),
    \begin{align}\label{eq:Markov1}
    \begin{split}
    \bbE\left\| \frac{1}{n}\sum_{i=1}^n \left( z_i \bar r^\top_i - \bbE(z_i \bar r^\top_i) \right) \right\|^2 
    & = \frac{1}{n^2} \sum_{k = 1}^K \sum_{l = 1}^{L + d_x}	\bbE\left\{\sum_{i=1}^n \left( z_{i, l} \bar r_{i,k} - \bbE(z_{i, l} \bar r_{i,k} ) \right) \right\}^2 \\
    & = \frac{1}{n^2} \sum_{k =1}^K \sum_{l = 1}^{L + d_x} \sum_{i=1}^n \text{Var}\left(z_{i, l} \bar r_{i,k} \right) \\
    & \quad + \frac{1}{n^2} \sum_{k = 1}^K \sum_{l = 1}^{L + d_x} \sum_{i=1}^n \sum_{j \neq i}^n \text{Cov}\left(z_{i, l} \bar r_{i,k}, z_{j, l} \bar r_{j,k} \right) \\
    & \lesssim  \frac{L}{n^2} \sum_{k =1}^K \sum_{i=1}^n \bbE \left[ \bar r_{i,k}^2 \right] + \frac{L}{n^2} \sum_{k = 1}^K  \sum_{i=1}^n \sum_{j \neq i}^n \left| \text{Cov}\left(\bar r_{i,k}, \bar r_{j,k} \right) \right|.
    \end{split}
    \end{align}
    By Cauchy--Schwarz inequality, $|\int_0^1 q_i(s) \phi_k(s) \text{d}s| \le ||q_i||_{L^2} ||\phi_k||_{L^2} < \infty$ by Assumption \ref{as:base}(i).
    Thus,  
    \begin{align}\label{eq:L2_r}
        \bbE \left[ \bar r_{i,k}^2 \right] 
        & = \sum_{l=1}^n \sum_{j=1}^n w_{i,j} w_{i,l} \bbE\left[\int_0^1 q_j(s_1) \phi_k(s_1) \text{d}s_1 \int_0^1 q_l(s_2) \phi_k(s_2) \text{d}s_2 \right] < \infty
    \end{align}
    uniformly in $i$, implying that the first term on the right-hand side of \eqref{eq:Markov1} is of order $KL/n$.

    Next, by Lemma \ref{lem:NED1} as $\delta \to \infty$,
    \begin{align*}
    \left\| \bar r_{i,k} - \bbE [ \bar r_{i,k} \mid \mathcal{F}_{n,i}(\delta)] \right\|_2 
    & = \left\| \sum_{j=1}^n w_{i,j} \int_0^1 q_j(t) \phi_k(t)\text{d}t - \sum_{j=1}^n w_{i,j}  \int_0^1 \bbE \left[ q_j(t) \mid \mathcal{F}_{n,i}(\delta)\right] \phi_k(t)\text{d}t \right\|_2 \\
    & \le \sum_{j=1}^n |w_{i,j}| \left\|\int_0^1 ( q_j(t) - \bbE [ q_j(t) \mid \mathcal{F}_{n,i}(\delta)] ) \phi_k(t)\text{d}t \right\|_2 \\
    & \le \sum_{j=1}^n |w_{i,j}| \int_0^1 \left\| q_j(t) - \bbE [ q_j(t) \mid \mathcal{F}_{n,i}(\delta)] \right\|_2 |\phi_k(t)| \text{d}t \\
    & \leq ||W_n||_\infty C_2 \cdot \int_0^1 |\phi_k(t)| \text{d}t \cdot \varrho^{\lfloor \delta / \bar \Delta \rfloor + 1} \to 0.
    \end{align*}
    Thus, $\{\bar r_{i,k}\}$ is uniformly and geometrically $L^2$-NED on $\{\varepsilon_i\}$, and from Lemma \ref{lem:cov} and \eqref{eq:L2_r}, $\left| \text{Cov}\left(\bar r_{i,k}, \bar r_{j,k} \right) \right| \lesssim \varphi(\Delta(i,j)/3)$, where $\varphi(\delta)$ is some geometric NED coefficient.
    Then, Lemma A.1(iii) of \cite{jenish2009central} gives
    \begin{align}
    \begin{split}
        \frac{1}{n} \sum_{k = 1}^K  \sum_{i=1}^n \sum_{j \neq i}^n \left| \text{Cov}\left(\bar r_{i,k}, \bar r_{j,k} \right) \right|
        & \lesssim  \frac{1}{n} \sum_{k=1}^K \sum_{i=1}^n \sum_{j \neq i}^n \varphi(\Delta(i,j)/3)  \\
        & =  \frac{1}{n} \sum_{k=1}^K \sum_{i=1}^n \sum_{m = 1}^\infty \sum_{j: \; \Delta(i,j) \in [m, m+1)}  \varphi(\Delta(i,j)/3) \\
        & \lesssim \frac{1}{n} \sum_{k=1}^K \sum_{i=1}^n \sum_{m = 1}^\infty m^{d-1} \varphi(m) \lesssim K,
    \end{split}
    \end{align}
    where the last equality holds by the geometric NED property.
    
    Combining these results, by Markov's inequality, we have the desired result.

    \bigskip

    (ii) By the triangle inequality,
    \begin{align}
        \left\| Z^\top \hat R /n  - \bbE(Z^\top \bar R / n) \right\| 
        & \le \left\| Z^\top (\hat R - \bar R) / n \right\| + \underbracket{\left\| Z^\top \bar R /n  - \bbE(Z^\top \bar R / n) \right\|}_{\lesssim_p \sqrt{KL}/\sqrt{n} \; \text{by (i)}}.
    \end{align}
    For the first term on the right-hand side, observe that
    \begin{align}
        Z^\top (\hat R - \bar R) / n
        & = \frac{1}{n} \sum_{i=1}^n z_i (\hat r_{i,k} - \bar r_{i,k}) \\
        & = \frac{1}{n} \sum_{i=1}^n \sum_{j = 1}^n z_i  w_{i,j} \int_0^1 \left[ \hat q_j(t) - q_j(t) \right] \bm{\phi}_K(t)^\top \text{d}t,
    \end{align}
    and hence $\left\| Z^\top (\hat R - \bar R) / n \right\|
     \le \frac{1}{n} \sum_{i=1}^n \sum_{j = 1}^n ||z_i|| \cdot |w_{i,j}| \cdot \left\| \int_0^1 [\hat q_j(t) - q_j(t)] \bm{\phi}_K(t) \text{d}t \right\|$.
    Here, define
    \begin{align}
        \omega^*_i(t)
        = \begin{cases}
            0 & \text{if} \;\; t < s_{i,1} \\
            \frac{s_{i,l(t) + 1} - t}{s_{i,l(t) + 1} - s_{i,l(t)}} & \text{if} \;\; s_{i,1} \le t \le s_{i,m_i} \; \text{with} \; t \in [s_{i,l(t)}, s_{i,l(t) + 1}] \\
            1 & \text{if} \;\; t > s_{i,m_i} 
        \end{cases}
    \end{align}
    such that we can write for all $t \in [0,1]$
    \begin{align}
        \hat q_i(t) = \omega^*_i(t) [q_i(s_{i,l(t)}) - q_i(s_{i,l(t) + 1})] + q_i(s_{i,l(t) + 1})
    \end{align}
    (recall: $s_{i,0} = 0$ and $s_{i, m_i + 1} = 1$).
    Thus, under Assumption \ref{as:interp},
    \begin{align}\label{eq:interp_err}
        \left| \hat q_i(t) - q_i(t) \right|
        & \le \left|  \omega^*_i(t) [q_i(s_{i,l(t)}) - q_i(s_{i,l(t) + 1})]\right| + \left|q_i(s_{i,l(t) + 1}) - q_i(t) \right| \\
        & \le \left| s_{i,l(t)} - s_{i,l(t) + 1} \right|^\xi + \left|s_{i,l(t) + 1} - t \right|^\xi \lesssim \kappa^\xi,
    \end{align}
    uniformly in $t$, leading to
    \begin{align}
        \left\| \int_0^1 [\hat q_i(t) - q_i(t)] \bm{\phi}_K(t) \text{d}t \right\|
        & \le  \int_0^1 \left\| [\hat q_i(t) - q_i(t)] \bm{\phi}_K(t) \right\| \text{d}t \\
        & \lesssim \kappa^\xi \int_0^1 \left\| \bm{\phi}_K(t) \right\| \text{d}t \\         
        & \lesssim \sqrt{K} \kappa^\xi
    \end{align}
    for all $i$, where the last line follows from $\int_0^1 \left\| \bm{\phi}_K(t) \right\| \text{d}t = \int_0^1 \left( \sum_{k = 1}^K  \phi^2_k(t) \right)^{1/2} \text{d}t \le \left( \sum_{k = 1}^K || \phi_k ||_{L^2}^2 \right)^{1/2}$.
    This completes the proof.
\end{proof}
	

\begin{lemma}\label{lem:matLLN2}
    Suppose that Assumptions \ref{as:inverse}, \ref{as:sample_space}, \ref{as:covariates}, \ref{as:error}(i)--(ii), \ref{as:base}(i), and \ref{as:illposedness} hold.
    Then,
    \begin{align}
        \text{(i)} 
        & \quad \left\|  \bar R^\top M_z \bar R/ n  - \bbE \bar R^\top M_z \bbE \bar R/ n \right\| \lesssim_p \frac{\sqrt{KL}}{\sqrt{n}} \\
        \text{(ii)} 
        & \quad \left\|  \left[ \bar R^\top M_z \bar R/ n \right]^{-1} - \left[ \bbE \bar R^\top M_z \bbE \bar R/ n \right]^{-1} \right\| \lesssim_p  \frac{\sqrt{KL}}{\nu_{KL}^2 \sqrt{n}} 
    \end{align}
    If Assumption \ref{as:interp} additionally holds, we have
    \begin{align}
        \text{(iii)} 
        & \quad \left\|  \hat R^\top M_z \hat R/ n  - \bbE \bar R^\top M_z \bbE \bar R/ n \right\| \lesssim_p \frac{\sqrt{KL}}{\sqrt{n}} + \kappa^\xi \sqrt{KL} \\
        \text{(iv)} 
        & \quad \left\| \left[ \hat R^\top M_z \hat R/ n \right]^{-1} - \left[ \bbE \bar R^\top M_z \bbE \bar R/ n \right]^{-1} \right\| \lesssim_p  \frac{\sqrt{KL}}{\nu_{KL}^2 \sqrt{n}} + \frac{\kappa^\xi \sqrt{KL}}{\nu_{KL}^2}.
    \end{align}
\end{lemma}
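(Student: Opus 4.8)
The plan is to exploit the quadratic-form structure $\bar R^\top M_z \bar R/n = G_n P_n G_n^\top$, where $G_n \coloneqq \bar R^\top Z/n$ and $P_n \coloneqq (Z^\top Z/n)^{-}$, together with the fluctuation bounds already established in Lemma \ref{lem:matLLN1}. Because $Z$ is non-stochastic (Assumption \ref{as:covariates}(i)), the population counterpart is simply $\bbE \bar R^\top M_z \bbE \bar R/n = \bar G_n P_n \bar G_n^\top$ with $\bar G_n \coloneqq \bbE(\bar R^\top Z/n)$, so that the two matrices differ only through $G_n$ versus $\bar G_n$. Writing $\|A\|_{\mathrm{op}} \coloneqq \rho_{\max}^{1/2}(A^\top A)$ for the spectral norm, I would first record the ingredients used throughout: (a) $\|G_n - \bar G_n\| \lesssim_p \sqrt{KL}/\sqrt{n}$ by Lemma \ref{lem:matLLN1}(i) and transpose-invariance of the Frobenius norm; (b) $\|P_n\|_{\mathrm{op}} \lesssim 1$, since Assumption \ref{as:covariates}(ii) forces $\rho_{\min}(Z^\top Z/n) \gtrsim 1$ uniformly in $L$, so $M_z$ is the orthogonal projection onto $\mathrm{col}(Z)$ and $P_n = (Z^\top Z/n)^{-1}$; and (c) $\|\bar G_n\|_{\mathrm{op}} \lesssim 1$, which is precisely $\rho_{\max}(\bbE[\bar R^\top Z/n]\bbE[Z^\top \bar R/n]) \lesssim 1$ from Assumption \ref{as:illposedness}(i).

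For (i), I would substitute $G_n = \bar G_n + \Delta_n$ with $\Delta_n \coloneqq G_n - \bar G_n$ and expand the bilinear difference,
\begin{align}
G_n P_n G_n^\top - \bar G_n P_n \bar G_n^\top = \bar G_n P_n \Delta_n^\top + \Delta_n P_n \bar G_n^\top + \Delta_n P_n \Delta_n^\top .
\end{align}
Bounding each summand in Frobenius norm by placing the spectral norm on the ``sandwich'' factors and keeping Frobenius only on $\Delta_n$—for instance $\|\bar G_n P_n \Delta_n^\top\| \le \|\bar G_n\|_{\mathrm{op}}\|P_n\|_{\mathrm{op}}\|\Delta_n\|$—the two linear terms are each $\lesssim_p \sqrt{KL}/\sqrt{n}$ by (a)--(c), while the quadratic term is $\lesssim_p \|\Delta_n\|^2 \lesssim_p KL/n$, which is of smaller order under the maintained rate conditions. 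This yields (i).

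For (ii), let $P \coloneqq \bar R^\top M_z \bar R/n$ and $\tilde P \coloneqq \bbE \bar R^\top M_z \bbE \bar R/n$ and apply the resolvent identity $P^{-1} - \tilde P^{-1} = P^{-1}(\tilde P - P)\tilde P^{-1}$, so $\|P^{-1} - \tilde P^{-1}\| \le \|P^{-1}\|_{\mathrm{op}} \|\tilde P - P\| \|\tilde P^{-1}\|_{\mathrm{op}}$ with $\|\tilde P - P\| \lesssim_p \sqrt{KL}/\sqrt{n}$ from (i). The crux, and the step I expect to be the main obstacle, is showing $\|\tilde P^{-1}\|_{\mathrm{op}} \lesssim 1/\nu_{KL}$: Assumption \ref{as:illposedness}(ii) lower-bounds the eigenvalues of $\bar G_n \bar G_n^\top$, not of the sandwiched matrix $\tilde P = \bar G_n P_n \bar G_n^\top$ carrying the extra factor $P_n$. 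I would bridge this gap by noting that $P_n \succeq \rho_{\min}(P_n) I \gtrsim I$ on the range of $\bar G_n^\top$ (again from $\rho_{\max}(Z^\top Z/n) \lesssim 1$), whence $\tilde P \succeq \rho_{\min}(P_n)\,\bar G_n \bar G_n^\top$ and hence $\rho_{\min}(\tilde P) \gtrsim \rho_{\min}(\bar G_n \bar G_n^\top) \ge \nu_{KL}$. The companion bound $\|P^{-1}\|_{\mathrm{op}} \lesssim_p 1/\nu_{KL}$ then follows from Weyl's inequality, $\rho_{\min}(P) \ge \rho_{\min}(\tilde P) - \|P - \tilde P\|_{\mathrm{op}} \gtrsim \nu_{KL} - O_p(\sqrt{KL}/\sqrt{n})$, since $\sqrt{KL}/\sqrt{n} = o(\nu_{KL})$ under the rate conditions in force when the lemma is invoked. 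Combining gives $\|P^{-1} - \tilde P^{-1}\| \lesssim_p \sqrt{KL}/(\nu_{KL}^2\sqrt{n})$, i.e.\ (ii).

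Finally, (iii) and (iv) are obtained by repeating the two arguments verbatim with $\hat G_n \coloneqq \hat R^\top Z/n$ in place of $G_n$, the only change being that the perturbation is now controlled by Lemma \ref{lem:matLLN1}(ii): $\|\hat G_n - \bar G_n\| \lesssim_p \sqrt{KL}/\sqrt{n} + \kappa^\xi \sqrt{KL}$. This extra interpolation term propagates additively through the bilinear expansion to give (iii), and then through the resolvent identity (divided by $\nu_{KL}^2$) to give (iv); the Weyl argument for $\|[\hat R^\top M_z \hat R/n]^{-1}\|_{\mathrm{op}} \lesssim_p 1/\nu_{KL}$ goes through as before, provided the full perturbation $\sqrt{KL}/\sqrt{n} + \kappa^\xi\sqrt{KL}$ is $o(\nu_{KL})$, which is guaranteed by the additional rate condition $\sqrt{n}\kappa^\xi/\sqrt{\nu_{KL}} \to 0$ imposed wherever the interpolated estimator is analyzed.
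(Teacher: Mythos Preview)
Your proposal is correct and takes essentially the same approach as the paper's. The paper simply defers to Lemma~A.7 of \cite{hoshino2022sieve} (invoking its ``Fact~A.2'' for the bilinear perturbation in (i) and ``Fact~A.3'' together with $\rho_{\min}(\bar R^\top M_z \bar R/n)\gtrsim_p \nu_{KL}$ for the inverse perturbation in (ii)); your explicit bilinear expansion $G_nP_nG_n^\top-\bar G_nP_n\bar G_n^\top$ and resolvent identity $P^{-1}-\tilde P^{-1}=P^{-1}(\tilde P-P)\tilde P^{-1}$, combined with the Weyl step $\rho_{\min}(P)\ge\rho_{\min}(\tilde P)-\|P-\tilde P\|$, are precisely the content of those cited facts.
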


\begin{proof}
    The proofs are analogous to the proof of Lemma A.7 in \cite{hoshino2022sieve}, and thus are omitted.
\end{proof}


\begin{flushleft}
    \textbf{Proof of Theorem \ref{thm:conv}}
\end{flushleft}

(i) Letting $U(s) = (u_1(s), \ldots, u_n(s))^\top$, we write
\begin{align}
    Q(s) = \bar R \theta_0(s) + X \beta_0(s) + \mathcal{E}(s) + U(s).
\end{align}
Observe that
\begin{align}
    u_i(s)
    & = \sum_{j=1}^n w_{i,j} \left[ \int_0^1 q_j(t) \alpha_0(t,s)\text{d}t - \sum_{k=1}^K r_{j,k}\theta_{0k}(s) \right]\\
    & = \sum_{j=1}^n w_{i,j} \left[ \int_0^1 q_j(t) \alpha_0(t,s)\text{d}t - \sum_{k=1}^K \int_0^1 q_j(t) \phi_k(t) \text{d}t \cdot \theta_{0k}(s) \right]\\
    & = \sum_{j=1}^n w_{i,j} \int_0^1 q_j(t) \left\{ \alpha_0(t,s) - \sum_{k=1}^K \phi_k(t) \theta_{0k}(s)  \right\} \text{d}t.
\end{align}
Hence, by Cauchy--Schwarz inequality,
\begin{align}\label{eq:bias}
    |u_i(s)|
    & \le \left\| W_n \right\|_\infty \cdot \max_{1 \le i \le n}||q_i||_{L^2} \cdot \left\| \alpha_0(\cdot,s) - \sum_{k=1}^K \phi_k(\cdot) \theta_{0k}(s)  \right\|_{L^2} \lesssim \ell_K(s)
\end{align}
uniformly in $i$.

Noting that $X^\top S \bar R = X^\top \bar R$, we decompose
\begin{align*}
    \hat \beta_n(s) - \beta_0(s)
    & = \left[ X^\top (I_n - S) X \right]^{-1} X^\top (I_n - S) Q(s) - \beta_0(s) \\
    & = \left[ X^\top (I_n - S) X \right]^{-1} X^\top (I_n - S) [\bar R \theta_0(s) + \mathcal{E}(s) + U(s)] \\
    & = \left[ X^\top (I_n - S) X \right]^{-1} X^\top (I_n - S) \mathcal{E}(s) + \left[ X^\top (I_n - S) X \right]^{-1} X^\top (I_n - S) U(s)\\
    & \eqqcolon A_1 + A_2, \;\; \text{say.} 
\end{align*}
By a straightforward matrix-norm calculation (see, e.g., Fact A.2 in \cite{hoshino2022sieve}) and Lemmas \ref{lem:matLLN1}(i) and \ref{lem:matLLN2}(ii)
\begin{align}
    & \left\| X^\top S X /n - \bbE(X^\top \bar R/n) \left[\bbE \bar R^\top M_z \bbE \bar R /n \right]^{-1} \bbE(\bar R^\top X/n)  \right\| \\
    & \lesssim \rho_{\max}\left( (\bar R^\top X/n)  (X^\top \bar R/n) \right) \left\| \left[\bar R^\top M_z \bar R /n \right]^{-1} - \left[\bbE \bar R^\top M_z \bbE \bar R /n \right]^{-1} \right\| \\
    & \quad + \rho_{\max}\left(  \left[\bbE \bar R^\top M_z \bbE \bar R /n \right]^{-1} \right) \left\| \bar R^\top X/n - \bbE(\bar R^\top X/n) \right\| \lesssim_p \frac{\sqrt{KL}} {\nu_{KL}^2 \sqrt{n}} + \frac{\sqrt{K}}{\nu_{KL} \sqrt{n}}.
\end{align}
This implies that
\begin{align}\label{eq:sigma_X}
\left\|  X^\top (I_n - S) X / n  - \Sigma_{n,x} \right\| \lesssim_p \frac{\sqrt{KL}} {\nu_{KL}^2 \sqrt{n}} \to 0,
\end{align}
where recall that
\begin{align}
    \Sigma_{n, x} 
    = X^\top X /n - \bbE(X^\top \bar R/n) \left[ \bbE \bar R^\top M_z \bbE \bar R / n \right]^{-1} \bbE(\bar R^\top X/n).
\end{align}
Under Assumption \ref{as:covmat}, \eqref{eq:sigma_X} ensures that $\rho_{\text{min}}( X^\top (I_n - S) X / n  ) > 0$ with probability approaching one.
Because the eigenvalue of an idempotent matrix is at most one, we obtain by \eqref{eq:bias} that
\begin{align}
    ||A_2||^2
    & =  U(s)^\top (I_n - S) X \left[ X^\top (I_n - S) X \right]^{-2} X^\top (I_n - S) U(s) \\
    & \lesssim_p U(s)^\top (I_n - S) X \left[ X^\top (I_n - S) X \right]^{-1} X^\top (I_n - S) U(s) / n  \\
    & \lesssim_p \frac{1}{n} \sum_{i = 1}^n |u_i(s)|^2 \lesssim \ell^2_K(s).
\end{align}
Hence, $ ||A_2|| \lesssim_p \ell_K(s)$.

Next, for $A_1$, decompose $A_1 \coloneqq A_{11} + A_{12}$, where
\begin{align}
    A_{11} & \coloneqq \Sigma_{n,x}^{-1} \Psi_{n,x}^\top \mathcal{E}(s)/n \\
    A_{12} & \coloneqq \left[ X^\top (I_n - S) X /n \right]^{-1} X^\top (I_n - S) \mathcal{E}(s)/n - \Sigma_{n,x}^{-1} \Psi_{n,x}^\top \mathcal{E}(s)/n
\end{align}
and recall that
\begin{align}
    \Psi_{n,x} = X - \bbE (M_z \bar R / n)  \left[ \bbE \bar R^\top M_z \bbE \bar R / n \right]^{-1} \bbE (\bar R^\top X / n).
\end{align}
By Assumptions \ref{as:error}(ii) and \ref{as:covmat}, $\bbE ||A_{11}||^2 = \text{tr}\left\{ \Sigma_{n,x}^{-1} \Omega_{n,x} \Sigma_{n,x}^{-1}\right\}/n \lesssim 1/n$.
Hence, it follows from Markov's inequality that $||A_{11}|| \lesssim_p n^{-1/2}$. 

Next, let
\begin{align}
    \mathcal{P} & \coloneqq (Z^\top Z / n)^{-} \bbE (Z^\top \bar R / n)  \left[ \bbE \bar R^\top M_z \bbE \bar R / n \right]^{-1} \bbE (\bar R^\top Z / n) (Z^\top Z / n)^{-} \\
    \hat{\mathcal{P}} & \coloneqq (Z^\top Z / n)^{-} (Z^\top \bar R / n)  \left[ \bar R^\top M_z \bar R / n \right]^{-1} (\bar R^\top Z / n) (Z^\top Z / n)^{-}
\end{align}
such that we can write $X^\top (I_n - S)\mathcal{E}(s)/n = X^\top \mathcal{E}(s) / n - (X^\top Z /n) \hat{\mathcal{P}} (Z^\top \mathcal{E}(s)/n)$ and $\Psi_{n,x}^\top \mathcal{E}(s) / n = X^\top \mathcal{E}(s) / n - (X^\top Z/n) \mathcal{P} (Z^\top \mathcal{E}(s) / n)$.
$A_{12}$ can be decomposed further into three terms in the following manner:
\begin{align}
    A_{12}
    & = \left\{ \left[ X^\top (I_n - S) X /n \right]^{-1} -  \Sigma_{n,x}^{-1} \right\} X^\top \mathcal{E}(s)/n \\
    & \quad - \left\{ \left[ X^\top (I_n - S) X /n \right]^{-1} -  \Sigma_{n,x}^{-1} \right\} (X^\top Z /n) \hat{\mathcal{P}} (Z^\top \mathcal{E}(s)/n) - \Sigma_{n,x}^{-1} (X^\top Z /n) \{ \hat{\mathcal{P}} - \mathcal{P} \} (Z^\top \mathcal{E}(s)/n) \\
    & \eqqcolon A_{12a} - A_{12b} - A_{12c}, \;\; \text{say.}
\end{align}
By Markov's inequality, it is easy to observe $||X^\top \mathcal{E}(s)/n|| \lesssim_p n^{-1/2}$.
Then, by \eqref{eq:sigma_X} and Assumption \ref{as:covmat}, we have $||A_{12a}|| \lesssim_p \sqrt{KL} / (\nu_{KL}^2 n)$.
Similarly, for $A_{12b}$, noting that $\rho_{\max}(\hat{\mathcal{P}}) \le \rho_{\max}((Z^\top Z / n)^{-}) \lesssim 1$,
\begin{align}
    || A_{12b} ||^2 
    & \lesssim_p  \frac{KL}{\nu_{KL}^4 n} \cdot \text{tr}\left\{ (X^\top Z /n) \hat{\mathcal{P}} (Z^\top \mathcal{E}(s)/n) (\mathcal{E}(s)^\top Z/n) \hat{\mathcal{P}} (Z^\top X /n) \right\}\\
    & \lesssim_p \frac{KL}{\nu_{KL}^4 n} \cdot \left\| (X^\top Z /n) (Z^\top \mathcal{E}(s)/n) \right\|^2.
\end{align}
Hence, it also holds that $||A_{12b}|| \lesssim_p \sqrt{KL} / (\nu_{KL}^2 n)$ by Assumptions \ref{as:error}(ii) and (iii).

Here, note that
\begin{align}
    \left\| \hat{\mathcal{P}} - \mathcal{P} \right\| \lesssim_p \frac{\sqrt{KL}} {\nu_{KL}^2 \sqrt{n}}
\end{align}
by Lemmas \ref{lem:matLLN1}(i) and \ref{lem:matLLN2}(ii).
Thus, for $A_{12c}$, we have
\begin{align}
    || A_{12c} ||^2
    & \lesssim \left\| \Sigma_{n,x}^{-1} (X^\top Z /n) \{ \hat{\mathcal{P}} - \mathcal{P} \} \right\|^2 \cdot \left\| Z^\top \mathcal{E}(s)/n \right\|^2
\end{align}
implying that $|| A_{12c} || \lesssim_p L\sqrt{K} / (\nu_{KL}^2 n)$.
Combining these results, we obtain the desired result under $L\sqrt{K} / (\nu_{KL}^2 \sqrt{n}) \lesssim 1$.

\bigskip

(ii) Note that $\bar R_x^\top M_z \bar R_x / n = \bar R^\top (M_z - M_x) \bar R / n$.
Then,
\begin{align}
    \left\| \bar R_x^\top M_z \bar R_x / n - \bbE \bar R_x^\top M_z \bbE \bar R_x \right\|
    & \le \left\| \bar R^\top M_z \bar R / n -  \bbE \bar R^\top M_z \bbE \bar R / n  \right\| \\
    & \quad + \left\| \bar R^\top M_x \bar R / n -  \bbE \bar R^\top M_x \bbE \bar R / n  \right\| = o_P(1)
\end{align}
by Lemma \ref{lem:matLLN2}(i).
With this, Assumption \ref{as:illposedness}(ii) implies that $\rho_{\min}(\bar R_x^\top M_z \bar R_x / n) \ge c\nu_{KL}$ with probability approaching one for some $1 > c > 0$.
Thus, by Weyl's inequality,
\begin{align}
    \rho_{\min}\left( \bar R_x^\top M_z \bar R_x / n + \lambda D  \right) \ge c[\nu_{KL} + \lambda \rho_D]
\end{align}
with probability approaching one.
Now, decompose
\begin{align}
    \hat \theta_n(s) - \theta_0(s)
    & = \left[ \bar R_x^\top M_z \bar R_x + \lambda D n \right]^{-1} \bar R_x^\top M_z Q(s) - \theta_0(s) \\
    & = \left[ \bar R_x^\top M_z \bar R_x + \lambda D n \right]^{-1} \bar R_x^\top M_z \bar R \theta_0(s)  - \theta_0(s) \\
    & \quad + \left[ \bar R_x^\top M_z \bar R_x + \lambda D n \right]^{-1} \bar R_x^\top M_z \mathcal{E}(s) + \left[ \bar R_x^\top M_z \bar R_x + \lambda D n \right]^{-1} \bar R_x^\top M_z U(s) \\
    & \eqqcolon B_1 + B_2 + B_3, \;\; \text{say}.
\end{align}
Noting that $R_x^\top M_z \bar R_x = R_x^\top M_z \bar R$ and
\begin{align}
    B_1 
    & = \left[ \bar R_x^\top M_z \bar R_x + \lambda D n \right]^{-1} \left[ \bar R_x^\top M_z \bar R + \lambda D n -  \lambda D n \right] \theta_0(s) - \theta_0(s) \\
    & = - \lambda \hat \Sigma_{n,r,\lambda}^{-1} D \theta_0(s),
\end{align}
where $\hat \Sigma_{n,r,\lambda} \coloneqq \bar R_x^\top M_z \bar R_x / n + \lambda D$,
\begin{align}
    ||B_1||^2 
    = \lambda^2 \cdot \theta_0(s)^\top D \hat \Sigma_{n,r,\lambda}^{-2} D \theta_0(s) 
    \lesssim_p \frac{\lambda^2 \left\|\theta_0(s)\right\|^2_D}{(\nu_{KL} + \lambda \rho_D)^2}.
\end{align}

Here, for any matrices $A$ and $B$ such that $A^\top A$ is nonsingular and $B$ is symmetric and positive semidefinite, it holds that\footnote{
    The symmetricity can be confirmed by $A\left([A^\top A + B]^{-1} B [A^\top A]^{-1}\right)A^\top = A\left([A^\top A]^{-1} B [A^\top A + B]^{-1}\right)A^\top$.
    }
\begin{align}
    A [A^\top A]^{-1} A^\top - A [A^\top A + B]^{-1} A^\top
    & = A\left([A^\top A]^{-1} - [A^\top A + B]^{-1}\right)A^\top \\
    & = A\left([A^\top A + B]^{-1} \left[ (A^\top A + B) - (A^\top A) \right][A^\top A]^{-1}\right)A^\top \\
    & = \underset{\text{symmetric positive semidefinite}}{A\left([A^\top A + B]^{-1} B [A^\top A]^{-1}\right)A^\top},
\end{align}
which implies that
\begin{align}\label{eq:fact1}
    \rho_{\max}(A [A^\top A + B]^{-1} A^\top) \le 1.
\end{align}

From this with $A = M_z \bar R_x/\sqrt{n}$ and $B = \lambda D$, we can easily observe that $||B_3|| \lesssim_p \ell_K(s)/\sqrt{\nu_{KL} + \lambda \rho_D}$.
For $B_2$, decompose $B_2 \coloneqq B_{21} + B_{22} + B_{23}$, where
\begin{align}
    B_{21}
    & \coloneqq \Sigma_{n, r, \lambda}^{-1} \bbE \bar R_x^\top M_z \mathcal{E}(s) / n\\
    B_{22}
    & \coloneqq \left\{ \hat \Sigma_{n,r,\lambda}^{-1} - \Sigma_{n, r, \lambda}^{-1} \right\} \bbE \bar R_x^\top M_z \mathcal{E}(s) / n \\
    B_{23}
    & \coloneqq \hat \Sigma_{n,r,\lambda}^{-1} \left\{ \bar R_x - \bbE \bar R_x \right\}^\top M_z \mathcal{E}(s) / n.
\end{align}
Note that \eqref{eq:fact1} implies the following:
\small\begin{align}\label{eq:fact2}
    \begin{split}
   \rho_{\max}\left( [A^\top A + B]^{-1}A^\top A[A^\top A + B]^{-1}\right)
   & = \rho_{\max}\left( [A^\top A + B]^{-1}A^\top A (A^\top A)^{-1} A^\top A [A^\top A + B]^{-1}\right) \\
   & = \rho_{\max}\left( (A^\top A)^{-1/2} A^\top A [A^\top A + B]^{-2} A^\top A  (A^\top A)^{-1/2} \right) \\
   & \le \rho_{\max}\left([A^\top A + B]^{-1}\right).
    \end{split}
\end{align}\normalsize
Using this, we obtain $||B_{21}|| \lesssim_p \sqrt{K}/\sqrt{n(\nu_{KL} + \lambda \rho_D)}$.
Next, by the same argument as in Lemma \ref{lem:matLLN2}(ii), we have
\begin{align}
    \left\| \hat \Sigma_{n,r,\lambda}^{-1} - \Sigma_{n, r, \lambda}^{-1}\right\| \lesssim_p \frac{\sqrt{KL}}{\sqrt{n}(\nu_{KL} + \lambda \rho_D )^2} \to 0.
\end{align}
Thus, by Markov's inequality, $||B_{22}|| = o_P(\sqrt{K}/\sqrt{n})$.
Finally, it is straightforward to observe that
\begin{align}
    ||B_{23}||
    & \le \left\|\hat \Sigma_{n,r,\lambda}^{-1} \left\{ \bar R_x^\top Z / n - \bbE \bar R_x^\top Z  /n \right\} \right\| \cdot \left\| (Z^\top Z / n)^{-1} Z^\top \mathcal{E}(s) / n \right\| \\
    & \lesssim_p \left( \frac{\sqrt{KL}}{\sqrt{n}(\nu_{KL} + \lambda \rho_D)} \right) \cdot \left( \frac{\sqrt{L}}{\sqrt{n}} \right) = o\left(\frac{\sqrt{K}}{\sqrt{n(\nu_{KL} + \lambda \rho_D)}}\right)
\end{align}
by Lemma \ref{lem:matLLN1}(i), where the last equality is due to $L/\sqrt{n (\nu_{KL} + \lambda \rho_D)} = o(1)$.
Combining these results, we obtain
\begin{align}\label{eq:theta_conv}
    \left\| \hat \theta_n(s) - \theta_0(s)\right\| \lesssim_p \frac{\sqrt{K}/\sqrt{n} + \ell_K(s)}{\sqrt{\nu_{KL} + \lambda \rho_D}} + \frac{\lambda ||\theta_0(s)||_D}{\nu_{KL} + \lambda \rho_D }.
\end{align}

By the triangle inequality,
\begin{align}
    \left\| \hat \alpha_n(\cdot, s) - \alpha_0(\cdot, s) \right\|_{L^2} 
    & \le \left\| \bm{\phi}_K(\cdot)^\top \theta_0(s) - \alpha_0(\cdot, s) \right\|_{L^2} + \left\| \bm{\phi}_K(\cdot)^\top (\hat \theta_n(s) - \theta_0(s)) \right\|_{L^2} \\
    & \lesssim \ell_K(s) + \left\| \bm{\phi}_K(\cdot)^\top (\hat \theta_n(s) - \theta_0(s)) \right\|_{L^2}.
\end{align}
Under Assumption \ref{as:base}(iii), we have
\begin{align}
    \left\| \bm{\phi}_K(\cdot)^\top (\hat \theta_n(s) - \theta_0(s)) \right\|_{L^2}^2
    & =   (\hat \theta_n(s) - \theta_0(s))^\top \left\{ \int_0^1 \bm{\phi}_K(t) \bm{\phi}_K(t)^\top \text{d}t \right\}  (\hat \theta_n(s) - \theta_0(s)) \\
    & \lesssim \left\| \hat \theta_n(s) - \theta_0(s)\right\|^2.
\end{align}
Then, the proof is completed in view of \eqref{eq:theta_conv}.
\qed


\begin{flushleft}
    \textbf{Proof of Theorem \ref{thm:normality}}
\end{flushleft}

(i) Using the notation in the proof of Theorem \ref{thm:conv}(i), we have $\sqrt{n}(\hat \beta_n(s) - \beta_0(s)) = \sqrt{n} A_1 + \sqrt{n} A_2$.
Recalling that $||A_2|| \lesssim_p \ell_K(s)$, $\sqrt{n}A_2 = o_P(1)$ holds by assumption.
Further, as shown in the proof of Theorem \ref{thm:conv}(i), $A_1 = A_{11} + A_{12a} - A_{12b} - A_{12c}$ with
\begin{align}
    ||A_{12a}|| \lesssim_p \frac{\sqrt{KL}}{\nu_{KL}^2 n}, \quad 
    ||A_{12b}|| \lesssim_p \frac{\sqrt{KL}}{\nu_{KL}^2 n}, \quad
    ||A_{12c}|| \lesssim_p \frac{L\sqrt{K}}{\nu_{KL}^2 n}.
\end{align}
Hence, $\sqrt{n} A_1 = \sqrt{n} A_{11} + o_P(1)$ under the assumptions made here.
Here, let $\mathbf{c}$ be an arbitrary $d_x \times 1$ vector such that $||\mathbf{c}|| = 1$, and let
\begin{align}
	a_n \coloneqq \mathbf{c}^\top \Lambda_{n,x}^{-1/2}\underbracket{\Sigma_{n,x}^{-1} \Psi_{n,x}^\top \mathcal{E}_n(s) / \sqrt{n}}_{\sqrt{n} A_{11}},
\end{align}
where $\Lambda_{n,x} \coloneqq \Sigma_{n,x}^{-1} \Omega_{n,x} \Sigma_{n,x}^{-1}$.
Below, we show that
\begin{align}
	a_n \overset{d}{\to} \mathcal{N}(0,1),
\end{align}
which implies the desired result.
Define 
\begin{align}
    \tilde a_{n,i} \coloneqq n^{-1/2}\mathbf{c}^\top \Lambda_{n,x}^{-1/2} \Sigma_{n,x}^{-1} (x_i - \tilde z_i)\varepsilon_i(s),
\end{align}
where $\tilde z_i \coloneqq (X^\top Z/n) \mathcal{P} z_i$.
Then, $a_n = \sum_{i=1}^n \tilde a_{n,i}$ holds with $\bbE \tilde a_{n,i} = 0$ and $\sum_{i=1}^n \bbE(\tilde a_{n,i}^2) = 1$.
Letting $\tilde a_{n,1,i} \coloneqq n^{-1/2}\mathbf{c}^\top \Lambda_{n,x}^{-1/2} \Sigma_{n,x}^{-1} x_i \varepsilon_i(s)$ and $\tilde a_{n,2,i} \coloneqq - n^{-1/2}\mathbf{c}^\top \Lambda_{n,x}^{-1/2} \Sigma_{n,x}^{-1} \tilde z_i\varepsilon_i(s)$ so that $\tilde a_{n,i} = \tilde a_{n,1,i} + \tilde a_{n,2,i}$, by the $c_r$-inequality, $\bbE (\tilde a_{n,i}^4) \le 8 \bbE ( \tilde a_{n,1,i}^4) + 8 \bbE ( \tilde a_{n,2,i}^4)$ holds, where
\begin{align}
    \bbE (\tilde a_{n,1,i}^4)
    \lesssim n^{-2} \mathbf{c}^\top \Lambda_{n,x}^{-1/2} \Sigma_{n,x}^{-1} x_i x_i^\top \Sigma_{n,x}^{-1} \Lambda_{n,x}^{-1/2} \mathbf{c}\mathbf{c}^\top \Lambda_{n,x}^{-1/2} \Sigma_{n,x}^{-1} x_i x_i^\top \Sigma_{n,x}^{-1} \Lambda_{n,x}^{-1/2} \mathbf{c} \lesssim n^{-2},
\end{align}
and
\begin{align}
    \bbE ( \tilde a_{n,2,i}^4)
    \lesssim n^{-2} \mathbf{c}^\top \Lambda_{n,x}^{-1/2} \Sigma_{n,x}^{-1} \tilde z_i \tilde z_i^\top \Sigma_{n,x}^{-1} \Lambda_{n,x}^{-1/2} \mathbf{c}\mathbf{c}^\top \Lambda_{n,x}^{-1/2} \Sigma_{n,x}^{-1} \tilde z_i \tilde z_i^\top \Sigma_{n,x}^{-1} \Lambda_{n,x}^{-1/2} \mathbf{c} \lesssim n^{-2} ||\tilde z_i||^4 \lesssim L^2/n^2
\end{align}
under Assumption \ref{as:error}(iii).
Hence, $\sum_{i=1}^n \bbE (\tilde a_{n,i}^4) \lesssim L^2/n \to 0$.
Then, applying Lyapunov's central limit theorem completes the proof.

\bigskip

(ii) First, by Weyl's inequality,
\small\begin{align}\label{eq:weyl_min}
    \rho_{\min}([A^\top A + B]^{-1}A^\top A[A^\top A + B]^{-1})
    & = \rho_{\min}([A^\top A + B]^{-1} - [A^\top A + B]^{-1}B[A^\top A + B]^{-1}) \\
    & \ge \rho_{\min}( [A^\top A + B]^{-1}) - \rho_{\max}( [A^\top A + B]^{-1}B[A^\top A + B]^{-1}).
\end{align}\normalsize
Then, setting $A = M_z \bbE \bar R_x/\sqrt{n}$ and $B = \lambda D$, by Assumption \ref{as:error}(iii) we have 
\small\begin{align}\label{eq:sigma}
    \begin{split}
    [\sigma_{n, \lambda}(t, s)]^2 
    & =  \bm{\phi}_K(t)^\top \Sigma_{n, r, \lambda}^{-1} \Omega_{n,r}(s) \Sigma_{n, r, \lambda}^{-1}\bm{\phi}_K(t) \\
    & \ge c_1 \cdot \bm{\phi}_K(t)^\top \left[ \bbE \bar R_x^\top M_z \bbE \bar R_x / n + \lambda D \right]^{-1} (\bbE \bar R_x^\top M_z \bbE \bar R_x /n) \left[ \bbE \bar R_x^\top M_z \bbE \bar R_x / n + \lambda D \right]^{-1} \bm{\phi}_K(t) \\
    & \ge \{c_2/(1 + \lambda) - c_3 \lambda / (\nu_{KL} + \lambda \rho_D)^2\} \cdot ||\bm{\phi}_K(t)||^2 \\
    & \ge c_4 \cdot ||\bm{\phi}_K(t)||^2
    \end{split}
\end{align}\normalsize
for a sufficiently large $n$ under the assumption $\lambda/\nu_{KL}^2 \to 0$, where the $c_1, \ldots, c_4$ are some fixed constants.

We can write
\begin{align}
    \frac{\sqrt{n}(\hat \alpha_n (t,s) - \alpha_0(t,s))}{\sigma_{n, \lambda}(t, s)}
    & = \frac{\sqrt{n}\bm{\phi}_K(t)^\top(B_1 + B_{21} + B_{22} + B_{23} + B_3)}{\sigma_{n, \lambda}(t, s)} + \frac{\sqrt{n}(\bm{\phi}_K(t)^\top \theta_0(s) - \alpha_0(t,s))}{\sigma_{n, \lambda}(t, s)}.
\end{align}
Recalling that
\begin{align}
    ||B_1|| \lesssim_p \frac{\lambda ||\theta_0(s)||_D}{\nu_{KL} + \lambda \rho_D}, \quad
    ||B_{22}|| \lesssim_p \frac{K\sqrt{L}}{n (\nu_{KL} + \lambda \rho_D)^2}, \quad 
    ||B_{23}|| \lesssim_p \frac{L\sqrt{K}}{n (\nu_{KL} + \lambda \rho_D)}, \quad
    ||B_3|| \lesssim_p \frac{\ell_K(s)}{\sqrt{\nu_{KL} + \lambda \rho_D}},
\end{align}
we can find that the dominant term of $\sqrt{n}(\hat \alpha_n (t,s) - \alpha_0(t,s))/\sigma_{n, \lambda}(t, s)$ is $\sqrt{n}\bm{\phi}_K(t)^\top B_{21}/\sigma_{n, \lambda}(t, s)$ considering \eqref{eq:sigma} under the assumptions introduced here.

Let
\begin{align}
    b_n 
    & \coloneqq [\sigma_{n, \lambda}(t, s)]^{-1} \bm{\phi}_K(t)^\top \underbracket{ \Sigma_{n, r, \lambda}^{-1} \bbE \bar R_x^\top M_z \mathcal{E}(s) / \sqrt{n}}_{\sqrt{n} B_{21}} \\
    \tilde b_{n,i}
    & \coloneqq n^{-1/2} [\sigma_{n, \lambda}(t, s)]^{-1} \bm{\phi}_K(t)^\top \Sigma_{n, r, \lambda}^{-1} (\bbE \bar R_x^\top Z/n) (Z^\top Z /n)^{-1} z_i \varepsilon_i(s)
\end{align}
such that $b_n = \sum_{i=1}^n \tilde b_{n,i}$, $\bbE \tilde b_{n,i} = 0$, and $\sum_{i=1}^n \bbE(\tilde b_{n,i}^2) = 1$ hold.
Observe that
\begin{align}
    \bbE (\tilde b_{n,i}^4)
    & \lesssim \frac{1}{n^2 [\sigma_{n, \lambda}(t, s)]^4} \left( z_i^\top (Z^\top Z /n)^{-1} (Z^\top \bbE \bar R_x /n) \Sigma_{n, r, \lambda}^{-1}\bm{\phi}_K(t) \bm{\phi}_K(t)^\top \Sigma_{n, r, \lambda}^{-1} (\bbE \bar R_x^\top Z/n) (Z^\top Z /n)^{-1} z_i \right)^2 \\
    & \lesssim \frac{1}{n^2} \left( z_i^\top (Z^\top Z /n)^{-1} (Z^\top \bbE \bar R_x /n) \Sigma_{n, r, \lambda}^{-2} (\bbE \bar R_x^\top Z/n) (Z^\top Z /n)^{-1} z_i \right)^2 \\
    & \lesssim \frac{||z_i||^2}{n^2}  z_i^\top (Z^\top Z /n)^{-1} (Z^\top \bbE \bar R_x /n) \Sigma_{n, r, \lambda}^{-2} (\bbE \bar R_x^\top M_z \bbE \bar R_x /n) \Sigma_{n, r, \lambda}^{-2} (\bbE \bar R_x^\top Z/n) (Z^\top Z /n)^{-1} z_i \\
    & \lesssim \frac{||z_i||^2}{n^2(\nu_{KL} + \lambda \rho_D)}  z_i^\top (Z^\top Z /n)^{-1} (Z^\top \bbE \bar R_x /n) \Sigma_{n, r, \lambda}^{-2}  (\bbE \bar R_x^\top Z/n) (Z^\top Z /n)^{-1} z_i \\
    & \lesssim \frac{||z_i||^2}{n^2(\nu_{KL} + \lambda \rho_D)^2}  z_i^\top (Z^\top Z /n)^{-1} (Z^\top \bbE \bar R_x /n) \Sigma_{n, r, \lambda}^{-1} (\bbE \bar R_x^\top Z/n) (Z^\top Z /n)^{-1} z_i \\
    & \lesssim \frac{||z_i||^2}{n^2(\nu_{KL} + \lambda \rho_D)^2}  z_i^\top (Z^\top Z /n)^{-1} z_i \\
    & \lesssim \frac{L^2}{n^2(\nu_{KL} + \lambda \rho_D)^2}
\end{align}
where we have used \eqref{eq:fact2} with $A = M_z \bbE \bar R_x/\sqrt{n}$ and $B = \lambda D$ in the fourth inequality and \eqref{eq:fact1} with $A = (Z^\top Z /n)^{-1/2} (Z^\top \bbE \bar R_x /n) $ and $B = \lambda D$ in the sixth inequality.
This suggests that $\sum_{i=1}^n \bbE (\tilde b_{n,i}^4) \to 0$, and the result follows from Lyapunov's central limit theorem.

\bigskip

(iii), (iv) We only prove that $Z^\top \hat{\bm{V}}_n(s) Z/n$ converges in probability to $Z^\top \bm{V}_n(s) Z/n$; the consistency of $X^\top \hat{\bm{V}}_n(s) X/n$ is analogous.
The consistency of the other parts are already proved in the preceding arguments.
Let $\tilde{\bm{V}}_n(s) \coloneqq \text{diag}\{\varepsilon_1^2(s), \ldots , \varepsilon_n^2(s)\}$.
By the triangle inequality,
\begin{align}
    \left\| Z^\top \hat{\bm{V}}_n(s) Z/n - Z^\top \bm{V}_n(s) Z/n\right\| \le \left\| Z^\top \left[ \hat{\bm{V}}_n(s) - \tilde{\bm{V}}_n(s) \right] Z / n \right\| + \left\| Z^\top \left[ \tilde{\bm{V}}_n(s) - \bm{V}_n(s) \right] Z / n \right\|.
\end{align}
Under Assumptions \ref{as:error}(ii) and (iii), by Markov's inequality, it is easy to observe that the second term on the right-hand side is of order $L/\sqrt{n}$ under Assumption \ref{as:error}(iii).

Write $\hat \varepsilon_i(s) = \varepsilon_i(s) + t_{n,r, i}(s) + t_{n,x,i}(s)$, where
\begin{align}
    t_{n,r,i}(s) \coloneqq \int_0^1 \bar q_i(t) [\alpha_0(t,s) - \check \alpha_n(t,s)] \text{d}t, \quad
    t_{n,x,i}(s) \coloneqq x_i^\top (\beta_0(s) - \hat \beta_n(s)),
\end{align}
where $\check \alpha_n(t,s) \coloneqq \bm{\phi}_K(t)^\top \check \theta_n(s)$.
By Theorem \ref{thm:conv}(i), we have $|t_{n,x,i}(s)| \lesssim_p n^{-1/2}$ uniformly in $i$.
For $t_{n,r,i}(s)$, noting that $|\int_0^1 \bar q_i(t) [\alpha_0(t,s) - \check \alpha_n(t,s)] \text{d}t| \lesssim ||\alpha_0(\cdot,s) - \check \alpha_n(\cdot,s)||_{L^2}$ by Cauchy--Schwarz inequality, Theorem \ref{thm:conv}(ii) gives $|t_{n,r,i}(s)| \lesssim_p \sqrt{K}/\sqrt{n \nu_{KL}}$ uniformly in $i$.
As $\hat \varepsilon_i^2(s) - \varepsilon_i^2(s) = t^2_{n,r, i}(s) + t^2_{n,x,i}(s) + 2t_{n,r,i}(s)\varepsilon_i(s) + 2t_{n,x,i}(s)\varepsilon_i(s) + 2t_{n,r,i}(s)t_{n,x,i}(s)$, we can decompose
\begin{align}
    Z^\top \left[ \hat{\bm{V}}_n(s) - \tilde{\bm{V}}_n(s) \right] Z / n = \gamma_{n1} + \gamma_{n2} + 2 \gamma_{n3} + 2 \gamma_{n4} + 2 \gamma_{n5},
\end{align}
where $\gamma_{n1} \coloneqq n^{-1}\sum_{i=1}^n z_i z_i^\top t^2_{n,r,i}(s)$, $\gamma_{n2} \coloneqq n^{-1}\sum_{i=1}^n z_i z_i^\top t^2_{n,x,i}(s)$, $\gamma_{n3} \coloneqq n^{-1}\sum_{i=1}^n z_i z_i^\top \varepsilon_i(s) t_{n,r,i}(s)$, $\gamma_{n4} \coloneqq n^{-1}\sum_{i=1}^n z_i z_i^\top \varepsilon_i(s) t_{n,x,i}(s)$, and $\gamma_{n5} \coloneqq n^{-1}\sum_{i=1}^n z_i z_i^\top t_{n,r,i}(s) t_{n,x,i}(s)$.
Then, by Markov's inequality, we have
\begin{align}
    &||\gamma_{n1}|| \lesssim_p K\sqrt{L}/(n \nu_{KL}), \quad
    ||\gamma_{n2}|| \lesssim_p \sqrt{L}/n, \quad
    ||\gamma_{n3}|| \lesssim_p \sqrt{KL}/\sqrt{n \nu_{KL}} \\
    &||\gamma_{n4}|| \lesssim_p \sqrt{L}/\sqrt{n}, \quad
    ||\gamma_{n5}|| \lesssim_p \sqrt{KL}/(n \sqrt{\nu_{KL}}).
\end{align}
This completes the proof.
\qed


\begin{lemma}\label{lem:quadraticCLT}
    Under the assumptions made in Theorem \ref{thm:test}, we have
    \begin{align}
        \frac{n B_{21}^\top \Phi_\mathcal{I} B_{21} - \mu_n}{\sqrt{v_n}} \overset{d}{\to} \mathcal{N}(0,1).
    \end{align}
\end{lemma}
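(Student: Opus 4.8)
The plan is to recognize $n B_{21}^\top \Phi_\mathcal{I} B_{21}$ as a quadratic form in the independent errors $\mathcal{E}(s) = (\varepsilon_1(s), \ldots, \varepsilon_n(s))^\top$ and to invoke a central limit theorem for such forms. From the proof of Theorem \ref{thm:conv}(ii), $\sqrt{n} B_{21} = \Sigma_{n, r, \lambda}^{-1} \bbE \bar R_x^\top M_z \mathcal{E}(s) / \sqrt{n}$, and since $M_z = Z (Z^\top Z)^{-} Z^\top$ this equals $\Xi_n Z^\top \mathcal{E}(s)/\sqrt{n}$. Hence
\begin{align}
    n B_{21}^\top \Phi_\mathcal{I} B_{21} = \mathcal{E}(s)^\top G_n \mathcal{E}(s), \qquad G_n \coloneqq \tfrac{1}{n} Z A_n Z^\top, \qquad A_n \coloneqq \Xi_n^\top \Phi_\mathcal{I} \Xi_n,
\end{align}
with $A_n, G_n$ symmetric and positive semidefinite. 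Because $\Xi_n$ is nonrandom, this is a genuine quadratic form in independent variables, so no spatial-dependence machinery is needed. A direct moment calculation using Assumption \ref{as:error}(ii)--(iii) gives $\bbE[\mathcal{E}(s)^\top G_n \mathcal{E}(s)] = \text{tr}\{A_n (Z^\top \bm{V}_n(s) Z/n)\} = \mu_n$, and shows that the exact variance coincides with $v_n = 2\,\text{tr}\{(A_n Z^\top \bm{V}_n(s) Z/n)^2\}$ up to the diagonal remainder $\sum_i (G_n)_{ii}^2[\bbE\varepsilon_i^4(s) - 3(\bbE\varepsilon_i^2(s))^2]$. It therefore remains to prove a CLT for $(\mathcal{E}(s)^\top G_n \mathcal{E}(s) - \mu_n)/\sqrt{v_n}$ and to bound this remainder.

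The decisive step is to pin down the spectrum of $G_n$. Writing $\Gamma_n \coloneqq \bbE\bar R_x^\top M_z \bbE\bar R_x/n$ so that $\Sigma_{n, r, \lambda} = \Gamma_n + \lambda D$, one computes $\Xi_n \Xi_n^\top = \Sigma_{n, r, \lambda}^{-1} M_n \Sigma_{n, r, \lambda}^{-1}$ with $M_n \coloneqq \bbE(\bar R_x^\top Z/n)(Z^\top Z/n)^{-2}\bbE(Z^\top \bar R_x/n)$. By Assumption \ref{as:covariates}(ii), $(Z^\top Z/n)^{-2}$ and $(Z^\top Z/n)^{-1}$ are spectrally equivalent, so $M_n \sim \Gamma_n$ (in the Loewner order) and $\Sigma_{n, r, \lambda}^{-1} M_n \Sigma_{n, r, \lambda}^{-1} \sim \Sigma_{n, r, \lambda}^{-1}\Gamma_n \Sigma_{n, r, \lambda}^{-1} = \Sigma_{n, r, \lambda}^{-1} - \lambda \Sigma_{n, r, \lambda}^{-1} D \Sigma_{n, r, \lambda}^{-1}$. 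Since $\lambda/\nu_{KL}^2 \to 0$ and $\rho_{\min}(\Sigma_{n, r, \lambda}) \gtrsim \nu_{KL}$ by Assumption \ref{as:illposedness}(ii), the correction is asymptotically dominated by the leading term, yielding the sharp two-sided bound $\sigma_{\min}(\Xi_n)^2 \gtrsim 1$ and $\sigma_{\max}(\Xi_n)^2 \lesssim \nu_{KL}^{-1}$; a naive bound would only give $\sigma_{\min}(\Xi_n)^2 \gtrsim \nu_{KL}$, which is too weak. Combining with $0 < \rho_{\min}(\Phi_\mathcal{I}) \le \rho_{\max}(\Phi_\mathcal{I}) \lesssim 1$ (Assumption \ref{as:misc}(ii)), the rank-$K$ matrix $A_n$ has eigenvalues confined to $[\,c, C\nu_{KL}^{-1}]$; and since $Z^\top \bm{V}_n(s) Z/n$ and $Z^\top Z/n$ have eigenvalues bounded away from $0$ and $\infty$,
\begin{align}
    v_n \sim \text{tr}(G_n^2) \gtrsim K, \qquad \rho_{\max}(G_n) \lesssim \nu_{KL}^{-1}, \qquad \frac{\rho_{\max}(G_n)^2}{\text{tr}(G_n^2)} \lesssim \frac{1}{K \nu_{KL}^2} \to 0.
\end{align}
Exploiting $G_n = Z A_n Z^\top/n$ I would further record $\max_{i,j}|(G_n)_{ij}| \lesssim (n\nu_{KL})^{-1}$ and $(G_n^2)_{ii} = n^{-2}\|Z A_n z_i\|^2 \lesssim (n\nu_{KL}^2)^{-1}$.

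For the limit law I would split the centered form along the filtration $\mathcal{G}_i = \sigma(\varepsilon_1(s), \ldots, \varepsilon_i(s))$ into the diagonal part $\sum_i (G_n)_{ii}(\varepsilon_i^2(s) - \bbE\varepsilon_i^2(s))$ and the martingale-difference sum $\sum_i D_i$, $D_i \coloneqq 2\varepsilon_i(s)\sum_{j<i}(G_n)_{ij}\varepsilon_j(s)$. The diagonal part and the kurtosis remainder of the first paragraph both have variance $\lesssim \sum_i (G_n)_{ii}^2 \lesssim (n\nu_{KL}^2)^{-1} = o(v_n)$ (since $n K \nu_{KL}^2 \to \infty$), so they are negligible after scaling by $\sqrt{v_n}$. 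To $\sum_i D_i$ I would apply the martingale central limit theorem for triangular arrays, checking (a) the conditional-variance convergence $v_n^{-1}\sum_i \bbE[D_i^2 \mid \mathcal{G}_{i-1}] \overset{p}{\to} 1$ and (b) the Lyapunov condition $v_n^{-2}\sum_i \bbE[D_i^4] \to 0$. Using $\bbE[D_i^4] \lesssim \bbE\varepsilon_i^4(s)\,\bbE[(\sum_{j<i}(G_n)_{ij}\varepsilon_j(s))^4]$, the sharpened estimate for $(G_n^2)_{ii}$, and the sixth-moment control in Assumption \ref{as:misc}(i), condition (b) reduces to $(n\nu_{KL}^2)^{-1}\text{tr}(G_n^2)/v_n^2 \lesssim (n K\nu_{KL}^2)^{-1} \to 0$.

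I expect the main obstacle to lie in two places. First, the sharp spectral bound $\sigma_{\min}(\Xi_n)^2 \gtrsim 1$ in the second paragraph is essential: without the near-cancellation of $\Sigma_{n, r, \lambda}^{-1}$ against $\Gamma_n$ the argument would require $K\nu_{KL}^6 \to \infty$ rather than the stated $1/(K\nu_{KL}^2)\to 0$. Second, condition (a) is the heaviest computation: the conditional variance $\sum_i \bbE[D_i^2 \mid \mathcal{G}_{i-1}] = 4\sum_i \bbE\varepsilon_i^2(s)(\sum_{j<i}(G_n)_{ij}\varepsilon_j(s))^2$ is itself a quadratic form whose mean is the off-diagonal part of $v_n$; proving concentration requires bounding its variance, which unwinds into a fourth-moment sum over index quadruples dominated by $\text{tr}(G_n^4)/v_n^2 \lesssim \rho_{\max}(G_n)^2/\text{tr}(G_n^2) \lesssim (K\nu_{KL}^2)^{-1} \to 0$. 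The conditions $1/(K\nu_{KL}^2)\to 0$ and $K^3/(\nu_{KL}^5 n)\to 0$ are exactly the inputs that make these ratios vanish and close the argument.
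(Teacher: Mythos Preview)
Your proposal is correct and follows the same overall architecture as the paper: write $nB_{21}^\top\Phi_\mathcal{I}B_{21}$ as a quadratic form $\mathcal{E}(s)^\top G_n \mathcal{E}(s)$ with $G_n = ZA_nZ^\top/n$, show the diagonal contribution is $o_P(\sqrt{v_n})$, establish the crucial lower bound $v_n\gtrsim K$ via the near-cancellation $\Sigma_{n,r,\lambda}^{-1}\Gamma_n\Sigma_{n,r,\lambda}^{-1}=\Sigma_{n,r,\lambda}^{-1}-\lambda\Sigma_{n,r,\lambda}^{-1}D\Sigma_{n,r,\lambda}^{-1}$ together with $\lambda/\nu_{KL}^2\to 0$, and then invoke a CLT for the off-diagonal sum. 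The paper does exactly this; its display \eqref{eq:v_order} is precisely your sharp spectral step.

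The one genuine difference is the CLT device. The paper applies Proposition~3.2 of \cite{de1987central} for degenerate $U$-statistics and checks the moment sums $G_{n,I},G_{n,II},G_{n,IV}$, whereas you recast the off-diagonal part as a martingale difference array and verify a Lyapunov condition plus conditional-variance stability. The two routes are interchangeable here: the de~Jong conditions and your martingale conditions both reduce to the same ratio controls $\rho_{\max}(G_n)^2/\text{tr}(G_n^2)\lesssim 1/(K\nu_{KL}^2)\to 0$ and $\max_i(G_n^2)_{ii}/\text{tr}(G_n^2)\to 0$, driven by the hypotheses $1/(K\nu_{KL}^2)\to 0$ and $K^3/(\nu_{KL}^5 n)\to 0$. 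A minor bookkeeping point: your entrywise bounds $\max_{i,j}|(G_n)_{ij}|\lesssim (n\nu_{KL})^{-1}$ and $(G_n^2)_{ii}\lesssim(n\nu_{KL}^2)^{-1}$ drop a factor of $L$ coming from $\|z_i\|^2\lesssim L$; the correct bounds are $L/(n\nu_{KL})$ and $L/(n\nu_{KL}^2)$, but since $L\sim K$ this does not affect any of your ratio conclusions.
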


\begin{proof}
    Recalling that $\Xi_n \coloneqq \Sigma_{n, r, \lambda}^{-1} \bbE (\bar R_x^\top Z) (Z^\top Z)^{-}$, observe 
\begin{align}
    \bbE[n B_{21}^\top \Phi_\mathcal{I} B_{21}] 
    & = \bbE[\mathcal{E}(s)^\top Z (Z^\top Z)^{-} \bbE( Z^\top \bar R_x) \Sigma_{n, r, \lambda}^{-1} \Phi_\mathcal{I} \Sigma_{n, r, \lambda}^{-1} \bbE (\bar R_x^\top Z) (Z^\top Z)^{-} Z^\top \mathcal{E}(s)]/n \\
    & = \text{tr}\left\{ \bbE[\Xi_n^\top \Phi_\mathcal{I} \Xi_n Z^\top \mathcal{E}(s) \mathcal{E}(s)^\top Z ]/n \right\} \\
    & = \underbracket{\text{tr}\left\{ \Xi_n^\top \Phi_\mathcal{I} \Xi_n (Z^\top \bm{V}_n(s) Z /n )\right\}}_{= \: \mu_n}.
\end{align}
Letting $\pi_{n,i,j} \coloneqq z_i^\top \Xi_n^\top \Phi_\mathcal{I} \Xi_n z_j$, we can write
\begin{align}
    n B_{21}^\top \Phi B_{21} 
    & = \frac{1}{n} \sum_{i = 1}^n \sum_{j = 1}^n \pi_{n,i,j} \varepsilon_i(s) \varepsilon_j(s) \\
    & = \frac{2}{n} \sum_{1 \le i < j \le n} \pi_{n,i,j} \varepsilon_i(s) \varepsilon_j(s) + \underbracket{\frac{1}{n} \sum_{i = 1}^n \pi_{n,i,i} \varepsilon^2_i(s)}_{= \: \text{tr}\left\{ \Xi_n^\top \Phi_\mathcal{I} \Xi_n (Z^\top \tilde{\bm{V}}_n(s) Z /n )\right\}}.
\end{align}
Here, we have
\begin{align}
    ||\Xi_n^\top \Phi_\mathcal{I} \Xi_n||^2
    & \lesssim \text{tr}\left\{ \Xi_n \Xi_n^\top \Xi_n \Xi_n^\top \right\} \\
    & \lesssim \text{tr}\left\{\Sigma_{n, r, \lambda}^{-1} (\bbE \bar R_x^\top M_z \bbE \bar R_x/n) \Sigma_{n, r, \lambda}^{-1} \Sigma_{n, r, \lambda}^{-1} (\bbE \bar R_x^\top M_z \bbE \bar R_x / n) \Sigma_{n, r, \lambda}^{-1} \right\} \lesssim \frac{K}{(\nu_{KL} + \lambda \rho_D)^2}
\end{align}
by \eqref{eq:fact2}.
Further, 
\begin{align}
    \left| \frac{1}{n} \sum_{i = 1}^n \pi_{n,i,i} \varepsilon^2_i(s) - \mu_n \right|
    & = \left| \text{tr}\left\{ \Xi_n^\top \Phi_\mathcal{I} \Xi_n \left(Z^\top \left[ \tilde{\bm{V}}_n(s) - \bm{V}_n(s) \right] Z /n \right)\right\} \right| \\
    & \le \left\|\Xi_n^\top \Phi_\mathcal{I} \Xi_n \right\| \cdot \left\| Z^\top \left[ \tilde{\bm{V}}_n(s) - \bm{V}_n(s) \right] Z /n \right\| \\
    & \lesssim_p \frac{L\sqrt{K}}{\sqrt{n}(\nu_{KL} + \lambda \rho_D)}
\end{align}
as in the proof of Theorem \ref{thm:normality}(iii), (iv).

Meanwhile, for a sufficiently large $n$,
\begin{align}\label{eq:v_order}
    \begin{split}
    v_n
    & = 2 \text{tr}\left\{ \Xi_n^\top \Phi_\mathcal{I} \Xi_n (Z^\top \bm{V}_n(s) Z /n ) \Xi_n^\top \Phi_\mathcal{I} \Xi_n (Z^\top \bm{V}_n(s) Z /n ) \right\} \\
    & \ge c_1  \text{tr}\left\{  (Z^\top \bm{V}_n(s) Z /n ) \Xi_n^\top \Xi_n (Z^\top \bm{V}_n(s) Z /n ) \Xi_n^\top \Xi_n \right\} \\
    & \ge c_2 \text{tr}\left\{ \Xi_n  (Z^\top Z /n) \Xi_n^\top \Xi_n (Z^\top Z /n) \Xi_n^\top \right\} \\
    & = c_2 \text{tr}\left\{ \Sigma_{n, r, \lambda}^{-1} (\bbE \bar R_x^\top M_z \bbE \bar R_x/n) \Sigma_{n, r, \lambda}^{-1} \Sigma_{n, r, \lambda}^{-1} (\bbE \bar R_x^\top M_z \bbE \bar R_x/n) \Sigma_{n, r, \lambda}^{-1} \right\} \\
    & \ge \{c_3/(1 + \lambda) - c_4 \lambda / (\nu_{KL} + \lambda \rho_D)^2\}^2 K \\
    & \ge c_5 K > 0
    \end{split}
\end{align}
for some constants $c_1, \ldots, c_5$, by Assumptions \ref{as:error}(iii) and \ref{as:misc}(ii) and \eqref{eq:weyl_min}.
These imply that
\begin{align}
    \frac{\frac{1}{n} \sum_{i = 1}^n \pi_{n,i,i} \varepsilon^2_i(s) - \mu_n}{\sqrt{v_n}} \lesssim_p \frac{L}{\sqrt{n}(\nu_{KL} + \lambda \rho_D)} \to 0.
\end{align}
Hence, we have
\begin{align}
    \frac{n B_{21}^\top \Phi_\mathcal{I} B_{21} - \mu_n}{\sqrt{v_n}} = \sum_{1 \le i < j \le n} \zeta_{n,i,j}  + o_P(1),
\end{align}
where $\zeta_{n,i,j} \coloneqq  (n \sqrt{v_n})^{-1} 2 \pi_{n,i,j}\varepsilon_i(s) \varepsilon_j(s)$.

To derive the limiting distribution of $\sum_{1 \le i < j \le n} \zeta_{n,i,j}$, we can use the central limit theorem for quadratic forms developed by \cite{de1987central}.
From Proposition 3.2 of \cite{de1987central}, if (1) $\text{Var}(\sum_{1 \le i < j \le n} \zeta_{n,i,j}) = 1 + o(1)$, (2) $G_{n,I} = o(1)$, (3) $G_{n,II} = o(1)$, and (4) $G_{n,IV} = o(1)$, we have $\sum_{1 \le i < j \le n} \zeta_{n,i,j} \overset{d}{\to} N(0,1)$, where
\begin{align}
	G_{n,I} & \coloneqq \sum_{1 \le i < j \le n} \bbE(\zeta_{n,i,j}^4) \\
	G_{n,II} & \coloneqq \sum_{1 \le i < j < k \le n} \bbE(\zeta_{n,i,j}^2\zeta_{n,ik}^2 + \zeta_{n,j,i}^2\zeta_{n,jk}^2 + \zeta_{n,k,i}^2\zeta_{n,k,j}^2) \\
	G_{n,IV} & \coloneqq \sum_{1 \le i < j < k < l\le n} \bbE(\zeta_{n,ij}\zeta_{n,i,k}\zeta_{n,lj}\zeta_{n,l,k} + \zeta_{n,i,j}\zeta_{n,i,l}\zeta_{n,k,j}\zeta_{n,k,l} + \zeta_{n,i,k}\zeta_{n,i,l}\zeta_{n,j,k}\zeta_{n,j,l}).
\end{align}
For (1), observe that
\begin{align}
    \text{Var}\left( \frac{2}{n} \sum_{1 \le i < j \le n} \pi_{n,i,j} \varepsilon_i(s) \varepsilon_j(s)\right)
    & = \frac{4}{n^2} \sum_{1 \le i < j \le n} \sum_{1 \le k < l \le n} \pi_{n,i,j} \pi_{n,k,l}  \bbE[ \varepsilon_i(s) \varepsilon_j(s) \varepsilon_k(s) \varepsilon_l(s)] \\
    & = \frac{4}{n^2} \sum_{1 \le i < j \le n} \pi^2_{n,i,j} \bbE[ \varepsilon^2_i(s)] \bbE[ \varepsilon^2_j(s)] \\
    & = \frac{2}{n^2} \sum_{i \neq j} \text{tr}\left\{  \Xi_n^\top \Phi_\mathcal{I} \Xi_n z_j z_j^\top \Xi_n^\top \Phi_\mathcal{I} \Xi_n z_i z_i^\top \right\} \bbE[ \varepsilon^2_i(s)] \bbE[ \varepsilon^2_j(s)] \\
    & = v_n - \frac{2}{n^2} \sum_{i = 1}^n \text{tr}\left\{ \Xi_n^\top \Phi_\mathcal{I} \Xi_n z_i z_i^\top \Xi_n^\top \Phi_\mathcal{I} \Xi_n z_i z_i^\top \right\} \left( \bbE[ \varepsilon^2_i(s)] \right)^2.
\end{align}
By easy calculation, we can find
\begin{align}
    \frac{2}{n^2} \sum_{i = 1}^n \text{tr}\left\{ \Xi_n^\top \Phi_\mathcal{I} \Xi_n z_i z_i^\top \Xi_n^\top \Phi_\mathcal{I} \Xi_n z_i z_i^\top \right\} \left( \bbE[ \varepsilon^2_i(s)] \right)^2
    & \lesssim \frac{KL}{n (\nu_{KL} + \lambda \rho_D)^2}.
\end{align}
Then, by \eqref{eq:v_order}, $\text{Var}(\sum_{1 \le i < j \le n} \zeta_{n,i,j}) \to 1$.

(2), (3), and (4) can be verified in the same manner as in the proof of Lemma A.11 of \cite{hoshino2022sieve}.
Indeed, the following results hold:
\begin{align}
    G_{n,I}
    & \lesssim  L^3/(n^2 K^2 [\nu_{KL} + \lambda \rho_D]^4) + L^4/(n^3 K^2 [\nu_{KL} + \lambda \rho_D]^4)\\
    G_{n,II}
    & \lesssim L^2/(n K^2 [\nu_{KL} + \lambda \rho_D]^4) + L^3/(n^2 K^2 [\nu_{KL} + \lambda \rho_D]^4) + L^4/(n^3 K^2 [\nu_{KL} + \lambda \rho_D]^4) \\
    G_{n,IV}
    & \lesssim 1/(K [\nu_{KL} + \lambda \rho_D]^2) + L^4/(n K^2 [\nu_{KL} + \lambda \rho_D]^4) + L^4/(n^2 K^2 [\nu_{KL} + \lambda \rho_D]^4) - G_{n,II}.
\end{align}
This completes the proof.
\end{proof}


\begin{flushleft}
    \textbf{Proof of Theorem \ref{thm:test}}
\end{flushleft}

Our test statistic is defined as a standardization of $T_n = n \int_\mathcal{I} \hat \alpha^2_n(t,s)\text{d}t$.
Trivially, under $\mathbb{H}_0$, we can write $T_n = n \int_\mathcal{I} (\hat \alpha_n(t,s) - \alpha_0(t,s))^2 \text{d}t$.
In view of \eqref{eq:v_order}, if we can verify that 
\begin{align}
T_n - n B_{21}^\top \Phi_\mathcal{I} B_{21} = o_P(\sqrt{K}),
\end{align}
the proof is completed by Lemma \ref{lem:quadraticCLT}.

Observe that
\begin{align}
    T_n 
    & = n \int_\mathcal{I} (\hat \alpha_n(t,s) - \alpha_0(t,s))^2 \text{d}t \\
    & = n \int_\mathcal{I} (\bm{\phi}_K(t)^\top [\hat \theta_n(s) - \theta_0(s)] + [\bm{\phi}_K(t)^\top \theta_0(s) - \alpha_0(t,s)])^2 \text{d}t \\
    & = n [\hat \theta_n(s) - \theta_0(s)]^\top \Phi_\mathcal{I} [\hat \theta_n(s) - \theta_0(s)] + 2 n \int_\mathcal{I} \bm{\phi}_K(t)^\top [\hat \theta_n(s) - \theta_0(s)] [\bm{\phi}_K(t)^\top \theta_0(s) - \alpha_0(t,s)] \text{d}t \\
    & \quad + n \underbracket{\int_\mathcal{I} (\bm{\phi}_K(t)^\top \theta_0(s) - \alpha_0(t,s))^2 \text{d}t}_{\le \; \ell_K^2(s)}. 
\end{align}
By Cauchy--Schwarz inequality,
\begin{align}
    & \left| \int_\mathcal{I} \bm{\phi}_K(t)^\top [\hat \theta_n(s) - \theta_0(s)] [\bm{\phi}_K(t)^\top \theta_0(s) - \alpha_0(t,s)] \text{d}t \right| \\
    & \le \ell_K(s) \left\| \bm{\phi}_K(t)^\top [\hat \theta_n(s) - \theta_0(s)] \right\|_{L^2} 
    \lesssim_p  \frac{\ell_K(s) \sqrt{K}}{\sqrt{n(\nu_{KL} + \lambda \rho_D)}} + \frac{\ell_K^2(s)}{\sqrt{\nu_{KL} + \lambda \rho_D}} + \frac{\lambda \ell_K(s) ||\theta_0(s)||_D}{\nu_{KL} + \lambda \rho_D }
\end{align}
as in the proof of Theorem \ref{thm:conv}(ii).

Next, using the decomposition in the proof of Theorem \ref{thm:conv}(ii), write
\begin{align}
    n [\hat \theta_n(s) - \theta_0(s)]^\top \Phi_\mathcal{I} [\hat \theta_n(s) - \theta_0(s)] = n \sum_{a = 1}^3 \sum_{b = 1}^3 B_a^\top \Phi_\mathcal{I} B_b.
\end{align}
We can easily observe the following results:
\begin{align}
    \left| B_1^\top \Phi_\mathcal{I} B_1 \right|
    \lesssim_p \frac{\lambda^2 ||\theta_0(s)||_D^2}{(\nu_{KL} + \lambda \rho_D)^2}, 
    \quad
    \left| B_3^\top \Phi_\mathcal{I} B_3 \right|
    \lesssim_p \frac{\ell_K^2(s)}{\nu_{KL} + \lambda \rho_D},
    \quad
    \left| B_1^\top \Phi_\mathcal{I} B_3 \right|
    \lesssim_p \frac{\lambda \ell_K(s)||\theta_0(s)||_D}{(\nu_{KL} + \lambda \rho_D)^{3/2}}.
\end{align}
Moreover, write $B_1^\top \Phi_\mathcal{I} B_2 = B_1^\top \Phi_\mathcal{I} (B_{21} + B_{22} + B_{23})$.
By similar calculations as above,
\small\begin{align}
    \left| B_1^\top \Phi_\mathcal{I} B_{21} \right|
    & = \lambda \left| \theta_0(s)^\top D \hat \Sigma_{n,r,\lambda}^{-1} \Phi_\mathcal{I} \Sigma_{n,r,\lambda}^{-1} \bbE \bar R_x^\top M_z \mathcal{E}(s)/n \right| \\
    & \le \lambda \left| \theta_0(s)^\top D \Sigma_{n,r,\lambda}^{-1} \Phi_\mathcal{I} \Sigma_{n,r,\lambda}^{-1} \bbE \bar R_x^\top M_z \mathcal{E}(s)/n \right|  + \lambda \left| \theta_0(s)^\top D \left\{ \hat \Sigma_{n,r,\lambda}^{-1} - \Sigma_{n,r,\lambda}^{-1} \right\}\Phi_\mathcal{I} \Sigma_{n,r,\lambda}^{-1} \bbE \bar R_x^\top M_z \mathcal{E}(s)/n \right| \\
    & \lesssim_p \frac{\lambda ||\theta_0(s)||_D}{\sqrt{n} (\nu_{KL} + \lambda \rho_D)^{3/2}} + \frac{\lambda ||\theta_0(s)||_D K \sqrt{L}}{n (\nu_{KL} + \lambda \rho_D)^{5/2}}.
\end{align}\normalsize
Similarly,
\begin{align}
    \left| B_1^\top \Phi_\mathcal{I} B_{22} \right|
    & = \lambda \left| \theta_0(s)^\top D \hat \Sigma_{n,r,\lambda}^{-1} \Phi_\mathcal{I} \left\{ \hat \Sigma_{n,r,\lambda}^{-1} - \Sigma_{n, r, \lambda}^{-1} \right\} \bbE \bar R_x^\top M_z \mathcal{E}(s) / n  \right| \lesssim_p \frac{\lambda ||\theta_0(s)||_D K \sqrt{L}}{n (\nu_{KL} + \lambda \rho_D)^3}.
\end{align}
and
\begin{align}
    \left| B_1^\top \Phi_\mathcal{I} B_{23} \right|
    & = \lambda \left| \theta_0(s)^\top D \hat \Sigma_{n,r,\lambda}^{-1} \Phi_\mathcal{I} \hat \Sigma_{n,r,\lambda}^{-1} \left\{ (\bar R_x^\top Z /n) - \bbE (\bar R_x^\top Z/n) \right\}^\top (Z^\top Z /n)^{-} Z^\top \mathcal{E}(s) / n \right| \\
    & \lesssim_p \frac{\lambda ||\theta_0(s)||_D L \sqrt{K} }{ n (\nu_{KL} + \lambda \rho_D)^2}.
\end{align}
We can also observe that
\begin{align}
    & \left| B_3^\top \Phi_\mathcal{I} B_{21} \right|
    \lesssim_p \frac{\ell_K(s)}{\sqrt{n}(\nu_{KL} + \lambda \rho_D)} + \frac{\ell_K(s)K \sqrt{L}}{n (\nu_{KL} + \lambda \rho_D)^{5/2}} + \frac{\ell_K(s) K \sqrt{L}}{n (\nu_{KL} + \lambda \rho_D)^{3/2}} \\
    &\left| B_3^\top \Phi_\mathcal{I} B_{22} \right|
    \lesssim_p \frac{\ell_K(s) K \sqrt{L}}{n (\nu_{KL} + \lambda \rho_D)^{5/2}}, 
    \quad
    \left| B_3^\top \Phi_\mathcal{I} B_{23} \right|
    \lesssim_p \frac{\ell_K(s) L \sqrt{K}}{n (\nu_{KL} + \lambda \rho_D)^{3/2}} \\
    & \left| B_{21}^\top \Phi_\mathcal{I} B_{22} \right| 
    \lesssim_p \frac{K\sqrt{KL}}{n^{3/2}(\nu_{KL} + \lambda \rho_D)^{5/2}},
    \quad 
    \left| B_{21}^\top \Phi_\mathcal{I} B_{23} \right|
    \lesssim_p \frac{KL}{n^{3/2}(\nu_{KL} + \lambda \rho_D)^{3/2}} \\
    & \left| B_{22}^\top \Phi_\mathcal{I} B_{22} \right| 
    \lesssim_p \frac{K^2 L}{n^2(\nu_{KL} + \lambda \rho_D)^4},
    \quad 
    \left| B_{23}^\top \Phi_\mathcal{I} B_{23} \right|
    \lesssim_p \frac{KL^2}{n^2(\nu_{KL} + \lambda \rho_D)^2},
    \quad 
    \left| B_{22}^\top \Phi_\mathcal{I} B_{23} \right|
    \lesssim_p \frac{(KL)^{3/2}}{n^2(\nu_{KL} + \lambda \rho_D)^3}.
\end{align}
Under the assumptions made, we can find that $T_n = n B_{21}^\top \Phi_\mathcal{I} B_{21} + o_p(\sqrt{K})$, as desired. 
\qed


\begin{flushleft}
    \textbf{Proof of Theorem \ref{thm:tilde}}
\end{flushleft}

Observe that
\begin{align}
	\hat Q(s) 
    & = W_n \int_0^1 Q(t) \alpha_0(t,s) \text{d}t + X \beta_0(s) + \mathcal{E}(s) + E(s)\\
    & = W_n \int_0^1 \hat Q(t) \alpha_0(t,s) \text{d}t - W_n \int_0^1 E(t) \alpha_0(t,s) \text{d}t + X \beta_0(s) + \mathcal{E}(s) + E(s)\\
    & = \hat R \theta_0(s) + X \beta_0(s) + \mathcal{E}(s) + E(s) - V(s) + \hat U(s),
\end{align}
where $E = (e_1, \ldots, e_n)^\top$, $e_i \coloneqq \hat q_i - q_i$, $V(s) = (v_1(s), \ldots, v_n(s))^\top$, $v_i(s) \coloneqq \sum_{j=1}^n w_{i,j} \int_0^1 e_j(t) \alpha_0(t,s) \text{d}t$, $\hat U(s) = (\hat u_1(s), \ldots, \hat u_n(s))^\top$, and $\hat u_i(s) \coloneqq \sum_{j = 1}^n w_{i,j} \int_0^1 \hat q_j(t) \alpha_0(t,s)\text{d}t - \hat r_i^\top \theta_0 $.
As shown in \eqref{eq:interp_err}, $|e_i| \lesssim \kappa^\xi$ uniformly.
From this, $|v_i(s)| \lesssim \kappa^\xi$ is straightforward.
Further, similar to \eqref{eq:bias}, we have $| \hat u_i(s) | \lesssim \ell_K(s)$ uniformly.
Write
\begin{align}
	\tilde \beta_n(s) - \beta_0(s) 
    & = \left[ X^\top (I_n - \hat S) X \right]^{-1} X^\top (I_n - \hat S) \hat Q(s) - \beta_0(s) \\
	& = \left[ X^\top (I_n - \hat S) X \right]^{-1} X^\top (I_n - \hat S) [\mathcal{E}(s) + \hat U(s) + E(s) - V(s)] \\
	& \eqqcolon A'_1 + A'_2 + A'_3 + A'_4, \;\; \text{say.} 
\end{align}
Applying Fact A.2 in \cite{hoshino2022sieve} and Lemmas \ref{lem:matLLN1}(ii) and \ref{lem:matLLN2}(iv), we have
\begin{align}
    & \left\| X^\top \hat S X /n - \bbE(X^\top \bar R/n) \left[\bbE \bar R^\top M_z \bbE \bar R /n \right]^{-1} \bbE(\bar R^\top X/n)  \right\| \\
    & \lesssim \rho_{\max}\left( (\hat R^\top X/n)  (X^\top \hat R/n) \right) \left\| \left[\hat R^\top M_z \hat R /n \right]^{-1} - \left[\bbE \bar R^\top M_z \bbE \bar R /n \right]^{-1} \right\| \\
    & \quad + \rho_{\max}\left(  \left[\bbE \bar R^\top M_z \bbE \bar R /n \right]^{-1} \right) \left\| \hat R^\top X/n - \bbE(\bar R^\top X/n) \right\| \lesssim_p \frac{\sqrt{KL}} {\nu_{KL}^2 \sqrt{n}} + \frac{\kappa^\xi \sqrt{KL}} {\nu_{KL}^2} + \frac{\sqrt{K}}{\nu_{KL} \sqrt{n}} + \frac{\kappa^\xi \sqrt{K}}{\nu_{KL}}.
\end{align}
This implies that $||X^\top (I_n - \hat S) X /n - \Sigma_{n,x}|| \lesssim_p \sqrt{KL}/(\nu_{KL}^2 \sqrt{n})$ and that $\rho\left( X^\top (I_n - \hat S) X /n \right) > 0$ with probability approaching one.
Then, by the same argument as in the proof of Theorem \ref{thm:conv}(i), we can easily observe that $||A_2'|| \lesssim_p \ell_K(s)$, $||A_3'|| \lesssim_p \kappa^\xi$, and $||A_4'|| \lesssim_p \kappa^\xi$ hold.

For $A_1'$, decompose $A_1' = A_1 + A_{12}' + A_{13}'$, where 
\begin{align}
    A_1
    & = \left[ X^\top (I_n - S) X /n \right]^{-1} X^\top (I_n -  S) \mathcal{E}(s) /n \\
    A_{12}'
    & = \left[ X^\top (I_n - S) X /n \right]^{-1} X^\top (S - \hat S) \mathcal{E}(s) /n \\
    A_{13}'
    & = \left\{ \left[ X^\top (I_n - \hat S) X /n \right]^{-1} - \left[ X^\top (I_n - S) X /n \right]^{-1} \right\} X^\top (I_n - \hat S) \mathcal{E}(s) /n.  
\end{align}
Write 
\small\begin{align}
    & X^\top (S - \hat S) \mathcal{E}(s)/n \\
    & = (X^\top Z/n) (Z^\top Z/n)^{-} \bigl\{ \underbracket{(Z^\top \bar R /n)[\bar R^\top M_z \bar R/n]^{-1} (\bar R^\top Z/n) - (Z^\top \hat R/n) [\hat R^\top M_z \hat R/n]^{-1} (\hat R^\top Z/n)}_{\eqqcolon \: \bm{L}} \bigr\} (Z^\top Z/n)^{-} Z^\top \mathcal{E}(s)/n.
\end{align}\normalsize
By the same argument as above, we observe that
\begin{align}
    \left\| \bm{L} \right\| 
    & \lesssim \rho_{\max}\left( (\bar R^\top Z/n) (Z^\top \bar R/n) \right) \left\| [\bar R^\top M_z \bar R/n]^{-1} - [\hat R^\top M_z \hat R/n]^{-1} \right\| \\
    & \quad + \rho_{\max}\left(  \left[\hat R^\top M_z \hat R /n \right]^{-1} \right) \left\| \bar R^\top Z/n - \hat R^\top Z/n \right\| \lesssim_p \frac{\kappa^\xi \sqrt{KL}} {\nu_{KL}^2} + \frac{\kappa^\xi \sqrt{KL}}{\nu_{KL}}.
\end{align}
Thus,
\begin{align}
    ||A_{12}'||^2
    & = \mathcal{E}(s)^\top (S - \hat S) X \left[ X^\top (I_n - S) X /n \right]^{-2} X^\top (S - \hat S) \mathcal{E}(s) /n^2 \\
    & \lesssim_p (\mathcal{E}(s)^\top Z /n) (Z^\top Z/n)^{-} \bm{L} (Z^\top Z/n)^{-} (Z^\top X/n) (X^\top Z/n) (Z^\top Z/n)^{-} \bm{L} (Z^\top Z/n)^{-} (Z^\top \mathcal{E}(s) /n) \\
    & \lesssim_p ||\bm{L}||^2 \cdot ||Z^\top \mathcal{E}(s) /n||^2,
\end{align}
which yields $||A_{12}'|| \lesssim_p \kappa^\xi L \sqrt{K}/(\sqrt{n} \nu_{KL}^2)$.
Similarly,
\begin{align}
    ||A_{13}'||
    & \lesssim_p \left\|\left[ X^\top (I_n - \hat S) X /n \right]^{-1} - \left[ X^\top (I_n - S) X /n \right]^{-1} \right\| \\
    & \quad \times \left\{ ||X^\top \mathcal{E}(s) /n|| + || X^\top Z (Z^\top Z)^{-} Z \hat R [\hat R^\top M_z \hat R ]^{-1} \hat R^\top Z (Z^\top Z)^{-} Z^\top \mathcal{E}(s) /n||\right\} \\
    & \lesssim_p \frac{\kappa^\xi \sqrt{KL}}{\sqrt{n} \nu_{KL}^2} + \frac{\kappa^\xi L \sqrt{K}}{\sqrt{n} \nu_{KL}^2}.
\end{align}
Combining all these results, we have
\begin{align}
    \tilde \beta_n(s) - \beta_0(s) - A_1
    & \lesssim_p \frac{\kappa^\xi L \sqrt{K}}{\sqrt{n} \nu_{KL}^2} + \ell_K(s) + \kappa^\xi = o_P(n^{-1/2}),
\end{align}
which implies that $\tilde \beta_n(s)$ and $\hat \beta_n(s)$ have the same asymptotic distribution.

\bigskip

Next, similar to the above discussion, decompose
\begin{align}
    \tilde \theta_n(s) - \theta_0(s)
    & = \left[ \hat R_x^\top M_z \hat R_x + \lambda D n \right]^{-1} \hat R_x^\top M_z \hat Q(s) - \theta_0(s) \\
    & = \left[ \hat R_x^\top M_z \hat R_x + \lambda D n \right]^{-1} \hat R_x^\top M_z \hat R \theta_0(s) - \theta_0(s) \\
    & \quad + \left[ \hat R_x^\top M_z \hat R_x + \lambda D n \right]^{-1} \hat R_x^\top M_z [\mathcal{E}(s) + \hat U(s) + E(s) - V(s)] \\
    & \eqqcolon B_1' + B_2' + B_3' + B_4' + B_5', \;\; \text{say}.
\end{align}
By Lemma \ref{lem:matLLN2}(iii), we have $\rho_{\max}([\hat R_x^\top M_z \hat R_x/n + \lambda D]^{-1}) \lesssim_p (\nu_{KL} + \lambda \rho_D)^{-1}$.
Then, for $B_1'$, noting that $B_1' = - \lambda [\hat R_x^\top M_z \hat R_x /n + \lambda D]^{-1} D \theta_0(s)$, we have
\begin{align}
    ||B_1'|| \lesssim_p \frac{\lambda || \theta_0(s)||_D}{\nu_{KL} + \lambda \rho_D}.
\end{align}
Additionally, we can easily find that $||B_3'|| \lesssim_p \ell_K(s) / \sqrt{\nu_{KL} + \lambda \rho_D}$, $||B_4'|| \lesssim_p \kappa^\xi / \sqrt{\nu_{KL} + \lambda \rho_D}$, and $||B_5'|| \lesssim_p \kappa^\xi/\sqrt{\nu_{KL} + \lambda \rho_D}$ hold.

For $B_2'$, decompose it further as $B_2' = B_2 + B_{22}' + B_{23}'$, where 
\begin{align}
    B_2
    & = \left[ \bar R_x^\top M_z \bar R_x /n + \lambda D  \right]^{-1} \bar R_x^\top M_z \mathcal{E}(s) /n \\
    B_{22}'
    & = \left[ \bar R_x^\top M_z \bar R_x /n + \lambda D  \right]^{-1} (\hat R_x - \bar R_x)^\top M_z \mathcal{E}(s) /n \\
    B_{23}'
    & = \left\{ \left[ \hat R_x^\top M_z \hat R_x /n + \lambda D  \right]^{-1} - \left[ \bar R_x^\top M_z \bar R_x /n + \lambda D  \right]^{-1} \right\} \hat R_x^\top M_z \mathcal{E}(s) /n.  
\end{align}
By Lemma \ref{lem:matLLN1}(ii), we have
\begin{align}
    ||B_{22}'||
    & = \left\| \left[ \bar R_x^\top M_z \bar R_x /n + \lambda D  \right]^{-1} (\hat R Z /n - \bar R Z/n)^\top  (Z^\top Z /n)^{-} Z^\top \mathcal{E}(s) /n\right\|\\
    & \quad + \left\| \left[ \bar R_x^\top M_z \bar R_x /n + \lambda D  \right]^{-1} (\hat R X /n - \bar R X /n)^\top  (X^\top X /n)^{-1} X^\top \mathcal{E}(s) /n\right\| \\
    & \lesssim_p \frac{\kappa^\xi L \sqrt{K}}{\sqrt{n}(\nu_{KL} + \lambda \rho_D)} + \frac{\kappa^\xi \sqrt{K}}{\sqrt{n}(\nu_{KL} + \lambda \rho_D)}.
\end{align}
It also holds that
\begin{align}
    ||B_{23}'||
    & \le \left\| \left[ \hat R_x^\top M_z \hat R_x /n + \lambda D  \right]^{-1} - \left[ \bar R_x^\top M_z \bar R_x /n + \lambda D  \right]^{-1} \right\| \cdot \left\| \hat R_x^\top M_z \mathcal{E}(s) /n \right\| \\
    & \lesssim_p \frac{\kappa^\xi L \sqrt{K}}{\sqrt{n} (\nu_{KL} + \lambda \rho_D)^2}.
\end{align}
Hence, we have
\begin{align}
    & \frac{\sqrt{n}(\tilde \alpha_n(t,s) - \alpha_0(t,s) - \bm{\phi}_K(t)^\top B_2)}{\sigma_{n,\lambda}(t,s)} \\
    & = \frac{\sqrt{n} \bm{\phi}_K(t)^\top (B_1' + B_{22}' + B_{23}' + B_3' + B_4' + B_5')}{\sigma_{n,\lambda}(t,s)} + \frac{\sqrt{n} (\bm{\phi}_K(t)^\top \theta_0(s) - \alpha_0(t,s))}{\sigma_{n,\lambda}(t,s)} \\
    & \lesssim_p \frac{\sqrt{n} \lambda ||\theta_0(s)||_D}{\nu_{KL} + \lambda \rho_D} + \frac{\kappa^\xi L \sqrt{K}}{(\nu_{KL} + \lambda \rho_D)^2} + \frac{\sqrt{n}(\ell_K(s) + \kappa^\xi)}{\sqrt{\nu_{KL} + \lambda \rho_D}} + \frac{\sqrt{n}|\bm{\phi}_K(t)^\top \theta_0(s) - \alpha_0(t,s)|}{||\bm{\phi}_K(t)||} = o(1),
\end{align}
implying the desired result.

\section{Supplementary simulation results}\label{sec:app_MC}

This section provides the detailed simulation results under incompletely observed outcome functions.
As described in the main text, we suppose that for each unit $i$ we can observe $\{(s_{i,j}, q_i(y_{i,j}))\}_{j = 1}^m$, where $s_{i,j}$'s are uniformly randomly drawn from $[0,1]$, for $m \in \{15, 50\}$.
To recover the entire functional form of the $q_i$ function, we apply the linear interpolation method in \eqref{eq:interp}.
The simulation scenarios examined are all identical to those used in the main text.
The results for the 2SLS estimator and the Wald test are summarized in Tables \ref{tab:estimation_interp} and \ref{tab:test_interp}, respectively.

\begin{table}[ht]
    \caption{Estimation performance}
    \label{tab:estimation_interp}
    \centering\footnotesize
    \begin{tabular}{llll|cccccccccc}
        \hline \hline
        &   &   &   & \multicolumn{2}{c}{$\beta$} & \multicolumn{2}{c}{$\alpha$ ($\lambda_c = 0.5$)} & \multicolumn{2}{c}{$\alpha$ ($\lambda_c = 1$)} & \multicolumn{2}{c}{$\alpha$ ($\lambda_c = 2$)} & \multicolumn{2}{c}{$\alpha$ ($\lambda_c = 3$)} \\
    DGP & $m$ & $n$ & \# knots & BIAS   & RMSE  & BIAS               & RMSE  & BIAS             & RMSE  & BIAS             & RMSE  & BIAS             & RMSE  \\
    \hline
    1 & 15 & 400 & 2 & 0.0002  & 0.0376  & 0.0214  & 0.1071  & 0.0228  & 0.1036  & 0.0203  & 0.1022  & 0.0162  & 0.1010  \\
    &  &  & 3 & 0.0004  & 0.0385  & 0.0221  & 0.1199  & 0.0230  & 0.1157  & 0.0197  & 0.1136  & 0.0149  & 0.1119  \\
    &  & 1600 & 2 & 0.0008  & 0.0184  & 0.0161  & 0.0974  & 0.0201  & 0.0962  & 0.0223  & 0.0978  & 0.0220  & 0.0991  \\
    &  &  & 3 & 0.0007  & 0.0184  & 0.0170  & 0.1118  & 0.0208  & 0.1097  & 0.0226  & 0.1105  & 0.0220  & 0.1112  \\
    & 50 & 400 & 2 & -0.0010  & 0.0370  & 0.0202  & 0.1060  & 0.0214  & 0.1025  & 0.0189  & 0.1010  & 0.0148  & 0.0997  \\
    &  &  & 3 & -0.0008  & 0.0372  & 0.0208  & 0.1187  & 0.0217  & 0.1147  & 0.0184  & 0.1125  & 0.0136  & 0.1108  \\
    &  & 1600 & 2 & -0.0004  & 0.0176  & 0.0152  & 0.0959  & 0.0190  & 0.0951  & 0.0211  & 0.0968  & 0.0208  & 0.0980  \\
    &  &  & 3 & -0.0004  & 0.0177  & 0.0161  & 0.1102  & 0.0197  & 0.1086  & 0.0214  & 0.1094  & 0.0208  & 0.1102  \\
   2 & 15 & 400 & 2 & -0.0001  & 0.0377  & -0.0023  & 0.0917  & -0.0065  & 0.0829  & -0.0134  & 0.0769  & -0.0196  & 0.0742  \\
    &  &  & 3 & 0.0002  & 0.0388  & -0.0023  & 0.1074  & -0.0067  & 0.0990  & -0.0143  & 0.0931  & -0.0211  & 0.0903  \\
    &  & 1600 & 2 & 0.0007  & 0.0184  & -0.0001  & 0.0848  & -0.0024  & 0.0808  & -0.0060  & 0.0774  & -0.0091  & 0.0757  \\
    &  &  & 3 & 0.0007  & 0.0184  & 0.0001  & 0.1015  & -0.0023  & 0.0976  & -0.0061  & 0.0942  & -0.0094  & 0.0925  \\
    & 50 & 400 & 2 & -0.0012  & 0.0375  & -0.0011  & 0.0921  & -0.0055  & 0.0836  & -0.0125  & 0.0775  & -0.0188  & 0.0748  \\
    &  &  & 3 & -0.0012  & 0.0376  & -0.0011  & 0.1078  & -0.0058  & 0.0996  & -0.0134  & 0.0937  & -0.0203  & 0.0908  \\
    &  & 1600 & 2 & -0.0005  & 0.0179  & 0.0013  & 0.0851  & -0.0012  & 0.0813  & -0.0049  & 0.0780  & -0.0081  & 0.0763  \\
    &  &  & 3 & -0.0005  & 0.0181  & 0.0014  & 0.1018  & -0.0011  & 0.0981  & -0.0051  & 0.0948  & -0.0085  & 0.0930  \\
   3 & 15 & 400 & 2 & -0.0004  & 0.0401  & 0.0074  & 0.1841  & 0.0100  & 0.1964  & 0.0076  & 0.2076  & 0.0027  & 0.2125  \\
    &  &  & 3 & -0.0004  & 0.0410  & 0.0079  & 0.1870  & 0.0099  & 0.2008  & 0.0065  & 0.2125  & 0.0008  & 0.2172  \\
    &  & 1600 & 2 & 0.0007  & 0.0195  & 0.0006  & 0.1559  & 0.0061  & 0.1708  & 0.0096  & 0.1885  & 0.0098  & 0.1977  \\
    &  &  & 3 & 0.0007  & 0.0197  & 0.0016  & 0.1552  & 0.0067  & 0.1732  & 0.0098  & 0.1927  & 0.0094  & 0.2024  \\
    & 50 & 400 & 2 & -0.0014  & 0.0395  & 0.0142  & 0.1823  & 0.0171  & 0.1946  & 0.0148  & 0.2061  & 0.0100  & 0.2112  \\
    &  &  & 3 & -0.0014  & 0.0399  & 0.0148  & 0.1852  & 0.0170  & 0.1993  & 0.0138  & 0.2116  & 0.0081  & 0.2165  \\
    &  & 1600 & 2 & -0.0005  & 0.0192  & 0.0071  & 0.1545  & 0.0127  & 0.1682  & 0.0165  & 0.1862  & 0.0168  & 0.1957  \\
    &  &  & 3 & -0.0005  & 0.0192  & 0.0081  & 0.1533  & 0.0134  & 0.1707  & 0.0167  & 0.1907  & 0.0165  & 0.2009  \\ 
    \hline \hline
    \end{tabular}
\end{table}

\begin{table}[ht]
    \caption{Rejection frequency}
    \label{tab:test_interp}
    \centering\footnotesize
    \begin{tabular}{llll|ccccccccc}
        \hline \hline
      &      &    &    & \multicolumn{3}{c}{$\varrho = 0$} & \multicolumn{3}{c}{$\varrho = 0.1$} & \multicolumn{3}{c}{$\varrho = 0.2$}\\
    $n$ & \# knots & $m$ & $\lambda_c$ & 10\%  & 5\%   & 1\%   & 10\%  & 5\%     & 1\%   & 10\%  & 5\%  & 1\%   \\
    \hline
    400 & 2 & 15 & 0.5 & 0.071  & 0.035  & 0.016  & 0.261  & 0.194  & 0.105  & 0.967  & 0.939  & 0.829  \\
    &  &  & 1 & 0.063  & 0.035  & 0.020  & 0.710  & 0.608  & 0.440  & 1.000  & 0.998  & 0.992  \\
    &  &  & 2 & 0.056  & 0.032  & 0.015  & 0.946  & 0.921  & 0.861  & 1.000  & 1.000  & 1.000  \\
    &  &  & 3 & 0.051  & 0.035  & 0.019  & 0.974  & 0.959  & 0.921  & 1.000  & 1.000  & 1.000  \\
    &  & 50 & 0.5 & 0.079  & 0.049  & 0.018  & 0.298  & 0.210  & 0.110  & 0.983  & 0.967  & 0.880  \\
    &  &  & 1 & 0.077  & 0.048  & 0.019  & 0.737  & 0.641  & 0.469  & 1.000  & 0.998  & 0.995  \\
    &  &  & 2 & 0.069  & 0.040  & 0.019  & 0.969  & 0.945  & 0.903  & 1.000  & 1.000  & 1.000  \\
    &  &  & 3 & 0.062  & 0.044  & 0.023  & 0.981  & 0.975  & 0.955  & 1.000  & 1.000  & 1.000  \\
    & 3 & 15 & 0.5 & 0.068  & 0.034  & 0.016  & 0.242  & 0.178  & 0.096  & 0.955  & 0.919  & 0.797  \\
    &  &  & 1 & 0.060  & 0.034  & 0.016  & 0.690  & 0.590  & 0.430  & 1.000  & 0.998  & 0.991  \\
    &  &  & 2 & 0.052  & 0.031  & 0.013  & 0.941  & 0.912  & 0.844  & 1.000  & 1.000  & 1.000  \\
    &  &  & 3 & 0.045  & 0.033  & 0.016  & 0.964  & 0.950  & 0.917  & 1.000  & 1.000  & 1.000  \\
    &  & 50 & 0.5 & 0.073  & 0.046  & 0.018  & 0.292  & 0.205  & 0.105  & 0.980  & 0.959  & 0.872  \\
    &  &  & 1 & 0.072  & 0.044  & 0.018  & 0.741  & 0.650  & 0.479  & 1.000  & 0.999  & 0.995  \\
    &  &  & 2 & 0.061  & 0.036  & 0.017  & 0.965  & 0.941  & 0.901  & 1.000  & 1.000  & 1.000  \\
    &  &  & 3 & 0.061  & 0.041  & 0.023  & 0.979  & 0.970  & 0.955  & 1.000  & 1.000  & 1.000  \\
   1600 & 2 & 15 & 0.5 & 0.058  & 0.043  & 0.023  & 0.505  & 0.364  & 0.188  & 0.998  & 0.996  & 0.991  \\
    &  &  & 1 & 0.057  & 0.043  & 0.023  & 0.936  & 0.875  & 0.692  & 1.000  & 1.000  & 0.999  \\
    &  &  & 2 & 0.063  & 0.040  & 0.022  & 0.997  & 0.997  & 0.990  & 1.000  & 1.000  & 1.000  \\
    &  &  & 3 & 0.064  & 0.041  & 0.022  & 0.999  & 0.999  & 0.998  & 1.000  & 1.000  & 1.000  \\
    &  & 50 & 0.5 & 0.078  & 0.054  & 0.028  & 0.611  & 0.449  & 0.243  & 1.000  & 1.000  & 1.000  \\
    &  &  & 1 & 0.078  & 0.053  & 0.029  & 0.970  & 0.942  & 0.791  & 1.000  & 1.000  & 1.000  \\
    &  &  & 2 & 0.077  & 0.052  & 0.028  & 1.000  & 1.000  & 0.998  & 1.000  & 1.000  & 1.000  \\
    &  &  & 3 & 0.078  & 0.052  & 0.026  & 1.000  & 1.000  & 1.000  & 1.000  & 1.000  & 1.000  \\
    & 3 & 15 & 0.5 & 0.056  & 0.041  & 0.023  & 0.501  & 0.353  & 0.176  & 0.999  & 0.996  & 0.990  \\
    &  &  & 1 & 0.055  & 0.041  & 0.022  & 0.940  & 0.869  & 0.695  & 1.000  & 1.000  & 0.999  \\
    &  &  & 2 & 0.062  & 0.039  & 0.021  & 0.998  & 0.997  & 0.990  & 1.000  & 1.000  & 1.000  \\
    &  &  & 3 & 0.061  & 0.040  & 0.020  & 1.000  & 0.999  & 0.998  & 1.000  & 1.000  & 1.000  \\
    &  & 50 & 0.5 & 0.074  & 0.051  & 0.027  & 0.605  & 0.439  & 0.236  & 1.000  & 1.000  & 1.000  \\
    &  &  & 1 & 0.074  & 0.050  & 0.028  & 0.972  & 0.942  & 0.808  & 1.000  & 1.000  & 1.000  \\
    &  &  & 2 & 0.076  & 0.050  & 0.028  & 1.000  & 1.000  & 0.999  & 1.000  & 1.000  & 1.000  \\
    &  &  & 3 & 0.077  & 0.049  & 0.026  & 1.000  & 1.000  & 1.000  & 1.000  & 1.000  & 1.000  \\    
    \hline \hline
    \end{tabular}
\end{table}


\section{Supplementary material for the empirical analysis in Section \ref{sec:empir}}\label{sec:app_empir}

This section provides supplementary material for the empirical analysis.
Table \ref{tab:desctab} presents the definitions of the variables used and their summary statistics.
The estimated functional coefficients and the 95\% confidence intervals are shown in Figure \ref{fig:beta}. 

\begin{table}[ht]
    \caption{Descriptive statistics ($n = 1883$)}
    \label{tab:desctab}
    \centering
    \begin{tabular}{l|cccc}
    \hline\hline
    Variable & Mean	& Std. Dev. & Min. & Max. \\
    \hline
    Landprice & 10.083 & 1.191 & 7.313 & 14.883 \\ 
    Unemployment & 3.687 & 1.134 & 0 & 10.635 \\
    Agriculture & 0.078 & 0.081 & 0 & 0.467 \\   
    Sales & 10.500 & 2.196 & 0 & 17.666 \\
    Beds & 1.089 & 1.051 & 0 & 13.489 \\ 
    Childcare & 0.290 & 0.209 & 0.000 & 2.833 \\ 
    $q(0.25)$ & 31.801 & 5.638 & 17.904 & 58.162 \\ 
    $q(0.5)$  & 52.777 & 5.865 & 38.113 & 71.003 \\
    $q(0.75)$ & 69.895 & 3.877 & 54.193 & 99.028 \\
    \hline\hline
    \end{tabular}

    \bigskip

    \begin{minipage}{0.9\textwidth}
    Definitions: Landprice = log(average residential landprice (JPY/m2)); Unemployment = unemployment rate (\%); Agriculture = proportion of agriculture, forestry, and fishery workers; Sales = log(annual commercial sales (million JPY) + 1); Beds = 100 $\times$ \# of hospital beds/population; Childcare = 1000 $\times$ \# of childcare facilities/population.
    \end{minipage}  
\end{table}

\begin{figure}[ht]
    \begin{minipage}[b]{0.48\linewidth}
    \centering
    \includegraphics[width=\textwidth]{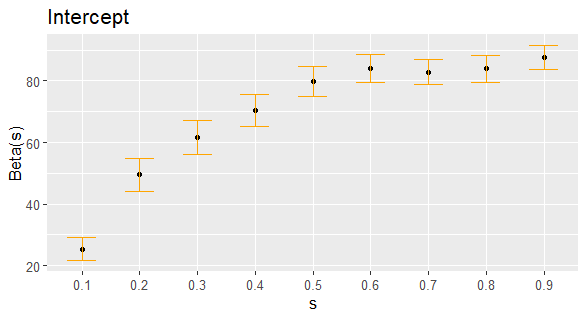}
    \end{minipage}
    \begin{minipage}[b]{0.48\linewidth}
    \centering
    \includegraphics[width=\textwidth]{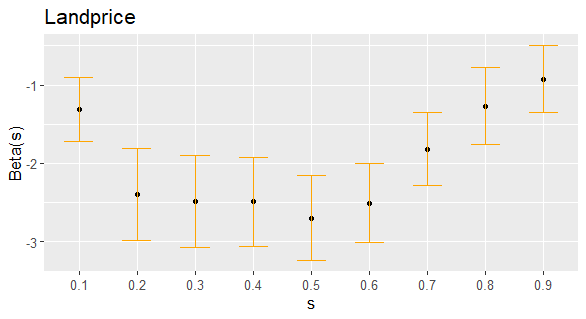}
    \end{minipage}

    \begin{minipage}[b]{0.48\linewidth}
    \centering
    \includegraphics[width=\textwidth]{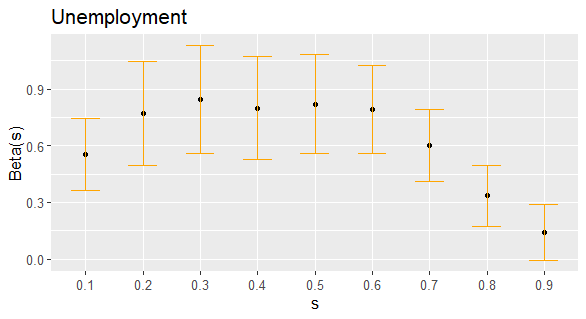}
    \end{minipage}
    \begin{minipage}[b]{0.48\linewidth}
    \centering
    \includegraphics[width=\textwidth]{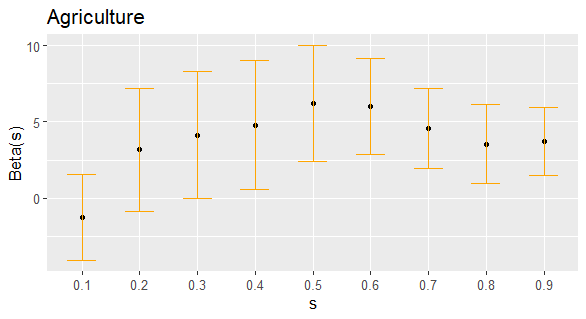}
    \end{minipage}

    \begin{minipage}[b]{0.48\linewidth}
    \centering
    \includegraphics[width=\textwidth]{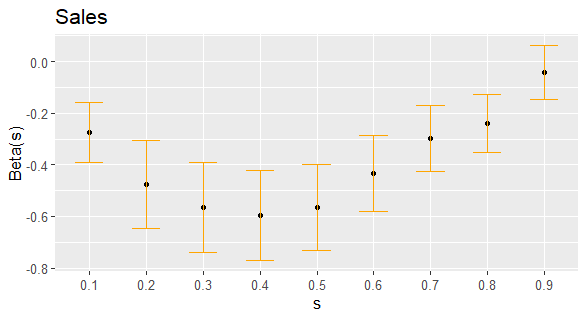}
    \end{minipage}
    \begin{minipage}[b]{0.48\linewidth}
    \centering
    \includegraphics[width=\textwidth]{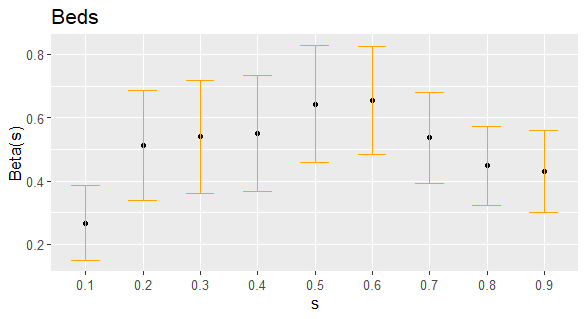}
    \end{minipage}

    \begin{minipage}[b]{0.48\linewidth}
    \centering
    \includegraphics[width=\textwidth]{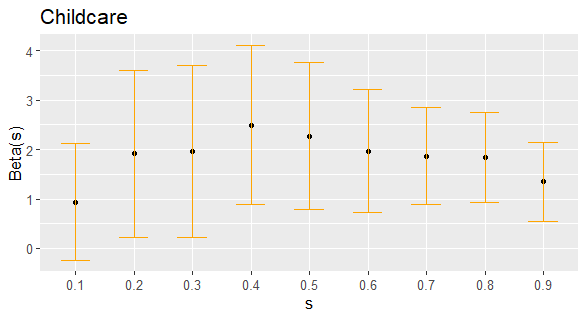}
    \end{minipage}

    \caption{Estimated coefficients}
    \label{fig:beta}
\end{figure}

\end{document}